\tikzset{
	-Latex,auto,node distance =1 cm and 1 cm,semithick,
	state/.style ={ellipse, draw, minimum width = 0.7 cm},
	point/.style = {circle, draw, inner sep=0.04cm,fill,node contents={}},
	bidirected/.style={Latex-Latex,dashed},
	el/.style = {inner sep=2pt, align=left, sloped}
}
\newcommand{\cI}{\mathcal{I}}
\newcommand{\bbeta}{\boldsymbol{\beta}}
\newcommand{\btheta}{\boldsymbol{\theta}}
\newcommand{\bx}{\mathbf{x}}
\newcommand{\bX}{\mathbf{X}}
\newcommand{\by}{\mathbf{y}}
\newcommand{\E}{\mathbb{E}}
\newtheorem{theorem}{Theorem}
\newtheorem{lemma}{Lemma}
\newtheorem{assumption}{A.}
\newtheorem{proposition}{Proposition}
\newtheorem{remark}{Remark}
\begin{document}
	
	\def\spacingset#1{\renewcommand{\baselinestretch}%
		{#1}\small\normalsize} \spacingset{1.5}

	\title{\bf GLM Inference with AI-Generated Synthetic Data Using Misspecified Linear Regression}
	
	\author[1]{Nir Keret}
	\author[1,2]{Ali Shojaie}
	
	\affil[1]{Department of Biostatistics, University of Washington}
	\affil[2]{Department of Statistics, University of Washington}
	
	\date{}
	\maketitle

	\begin{abstract}
		Data privacy concerns have led to the growing interest in synthetic data, which strives to preserve the statistical properties of the original dataset while ensuring privacy by excluding real records. 
		Recent advances in deep neural networks and generative artificial intelligence have facilitated the generation of synthetic data. 
		However, although prediction with synthetic data has been the focus of recent research, statistical inference with synthetic data remains underdeveloped. 
		In particular, in many settings, including generalized linear models (GLMs), the estimator obtained using synthetic data converges much more slowly than in standard settings. To address these limitations, we propose a method that leverages summary statistics from the original data. Using a misspecified linear regression estimator, we then develop inference that greatly improves the convergence rate and restores the standard root-$n$ behavior for GLMs.
	\end{abstract}
	
	{\bf keywords:} Data alignment; Data compression; Electronic Health Records (EHR); Generalized linear models (GLMs); Privacy; Summary statistics; Synthetic data; Transportability

	\section{Introduction}\label{sec:intro}

	Data privacy regulations, institutional barriers, and regulatory overhead hinder the sharing of patient-level healthcare data across institutions. This bottleneck continues to slow progress in statistical modeling critical for scientific discovery. As a result, many machine learning and artificial intelligence (ML/AI) applications and statistical analyses are conducted in isolation, making them difficult to reproduce, extend, or generalize across populations. Open-access de-identified datasets like MIMIC \citep{johnson2016mimic} and AmsterdamUMCdb \citep{thoral2021sharing} are limited in number and scope, highlighting the shortcomings of traditional de-identification methods \citep{rocher2019estimating}, which often fail to meet privacy requirements for broader dissemination.
	
	Privacy-preserving alternatives include federated learning \citep{zhang2021survey} and differential privacy (DP) \citep{dwork2006differential}. Federated learning enables institutions to jointly train models by sharing updates or summary statistics rather than raw data. However, it requires ongoing coordination, incurs significant overhead, and depends on trust among participants. Most importantly, it does not address the underlying problem of the absence of publicly accessible datasets.
	
	DP is a theoretical framework that ensures the inclusion or exclusion of any individual has minimal effect on the distribution of released outputs. It does so by adding controlled noise, obscuring whether a specific person was part of the dataset. While widely adopted in the computer science community, DP has major limitations, most notably degrading data utility and model accuracy \citep{bambauer2013fool,fredrikson2014privacy,bagdasaryan2019differential,domingo2021limits}. As \cite{yoon2020anonymization} note, DP provides probabilistic guarantees over algorithm outputs, not direct protection of specific datasets. This distinction is especially important in healthcare, where re-identification risk in the original data is the primary concern. Moreover, DP has yet to see adoption in healthcare industry standards or regulation.
	
	Synthetic data offer a promising solution by providing realistic yet non-identifiable records that enable secure data sharing and analysis without exposing sensitive information. The Clinical Practice Research Datalink \citep{herrett2015data} release of a synthetic cardiovascular dataset demonstrates this potential, showing that synthetic data can support research while preserving confidentiality. Traditional methods for generating synthetic data typically use parametric models, often within a multiple-imputation framework \citep{liew1985data,rubin1993statistical,raghunathan2003multiple,drechsler2011synthetic,awan2020one,jiang2021balancing,mathur2024fully,raisa2025consistent}. These methods depend on strong assumptions about the true data-generating process and may struggle to capture complex relationships due to their limited expressiveness.
	
	In the ML community, synthetic datasets first gained traction in computer vision, where artificially generated images were used, either alone or as augmentations, to improve training for tasks like classification and segmentation. Instead of depending entirely on large collections of real images, synthetic images rendered from 3D models can provide valuable supplements. See \cite{anderson2022synthetic} and references therein for further discussion.

	More recently, there has been growing interest in flexible models for generating synthetic tabular data. These include Bayesian networks \citep{zhang2017privbayes,Ankan2024}, tree-based methods \citep{reiter2005using,drechsler2010sampling,synthpop,watson2023adversarial}, generative adversarial networks (GANs) \citep{NEURIPS2019_254ed7d2,yoon2020anonymization,van2021decaf}, diffusion models \citep{kotelnikov2023tabddpm}, and large language models \citep{borisov2022language}. These approaches, along with other AI-driven methods, offer expressive frameworks for generating synthetic data that may lead to solid predictive performance while protecting privacy. However, they may not support valid statistical inference such as point estimation and confidence interval construction, because flexible generative models often converge to the true distribution at rates substantially slower than the optimal $\sqrt{n}$ \citep{decruyenaerereal}. Developing rigorous and efficient inferential methods for such synthetic data remains an open challenge. Indeed, the American Statistical Association’s Committee on Privacy and Confidentiality has recently highlighted critical gaps in statistical approaches to data privacy, urging the community to develop rigorous methodologies and tools, including those based on synthetic data, to meet these pressing challenges \citep{ASA2025privacy}.
	
	\cite{ghalebikesabi2022mitigating} proposed an importance-weighting method to correct biases in synthetic datasets, including those generated with DP, but did not address statistical inference. \cite{decruyenaere2024debiasing} introduced a debiasing approach for data generated by deep-learning models, which can substantially improve estimation efficiency. However, this method requires access to both the generative model and the original sample used to train it. As a result, the data center must anticipate researchers' inferential goals and collaborate with them directly.
	
	Our focus differs. We consider the setting where a data center holds the data and trains a general-purpose, one-size-fits-all model without tailoring it to any specific research interests. Given the broad range of potential research questions, employing flexible models to generate synthetic data is a natural choice. Researchers can request and receive a synthetic copy of the data, or even directly download it, but would have no direct collaboration with the data center.  
	
	To mitigate the limitations of AI-generated synthetic data for statistical inference, we assume the data center supplements the synthetic dataset with summary statistics from the original data. Specifically, we assume it provides the Gram matrix $\mathcal{X}^\top \mathcal{X}$, where $\mathcal{X}$ is the matrix of all variables of interest, including an intercept and the outcome. Although this matrix alone is insufficient to fit a generalized linear model (GLM), it fully determines the ordinary least squares (OLS) solution, a property we will exploit to enable efficient GLM inference.
	
	Sharing summary statistics imposes relatively low communication and privacy burdens. The Gram matrix $\mathcal{X}^\top \mathcal{X}$ is a coarse aggregate, and when $n \gg p$, it may offer sufficient privacy protection, see Section~\ref{sec:privacy} for further discussion. Sharing such summaries, including Hessians, is standard in meta-analysis and distributed learning \citep{lin2010relative,liu2015multivariate,lemyre2024distributed,jonker2025bayesian}. However, when $n$ is comparable to or smaller than $p$, the matrix may leak sensitive information. Privacy and communication constraints in such settings may preclude direct sharing of the Gram matrix. This is especially relevant in high-dimensional genomic applications, where the matrix is often too large to share and is instead approximated using public resources \citep{zhou2021fast,listatistical}.
	
	\cite{wilde2021foundations} propose a Bayesian framework for synthetic data, acknowledging that synthetic data alone may be insufficient. Their approach assumes the analyst has a small sample from the true distribution to inform the prior. They also recommend releasing results from a default model fitted to the real data, without tuning to any specific inferential task.
	
	After introducing our new methodology, in Section~\ref{sec:method} we show that, under a local ``transportability" assumption, leveraging the summary statistics can restore the convergence rate of synthetic-data-based GLM estimators to the usual $\sqrt{n}$-rate. This result is supported by theoretical analysis and validated through extensive simulations in Section~\ref{sec:sims} and real-data analysis in Section~\ref{sec:dataAnalysis}. We discuss limitations and potential directions for future research in Section~\ref{sec:discussion}.

	\section{Methodology}\label{sec:method}
	
	\subsection{Estimation of GLM Parameters from Synthetic Data}\label{sec:est}
	We begin by introducing notation and the problem setup. Matrices are denoted by bold uppercase letters, vectors by bold lowercase, and random variables by non-bold uppercase. Let the original sample consist of $n$ independent and identically distributed (i.i.d.) observations from a distribution $F$, with empirical distribution $F_n$. The data include a covariate matrix $\bX$ of size $n \times p$ and an outcome vector $\by$ of length $n$. Each row $\bx_i^\top$ represents the covariates for observation $i$, and $\bX$ includes a column of ones for the intercept.
	
	To generate synthetic data, a flexible model is trained on the observed data $(\bX, \by)$, yielding a fitted distribution $\hat{F}_n$. A synthetic sample of size $m$ is then drawn from this model, with empirical distribution $\tilde{F}_m$. The synthetic covariates and outcomes are denoted by $\tilde{\bX}_n$ and $\tilde{\by}_n$, where the subscript $n$ indicates that they are generated from a model trained on $n$ i.i.d. observations, thus forming a triangular array.
	
	Let $\mathcal{D}_n$ denote the original sample together with $\hat{F}_n$, which may involve randomness in the fitting procedure. Conditionally on $\mathcal{D}_n$, we assume the synthetic observations are i.i.d. We consider a coefficient vector $\bbeta$ and a function $\mu$, such that the conditional mean satisfies $\mathbb{E}(Y|\bx) = \mu(\bx^\top \bbeta)$. Our goal is to infer the true population parameter $\bbeta^*$, and we denote by $\hat{\bbeta}$ the solution to the score equation
	\begin{equation}
		\int\left\{y - \mu\left(\bx^\top\bbeta\right)\right\}\bx \, dF_n(\bx,y) = \frac{1}{n}\sum_{i=1}^n \left\{ Y_i - \mu\left(\bx_i^\top {\bbeta}\right) \right\} \bx_i \equiv \mathbf{0} \, . \label{eq:ScoreMLEOrig}
	\end{equation}
	For simplicity, we assume hereafter that the model is a correctly-specified GLM with canonical link, so that $\hat{\bbeta}$ is the maximum likelihood estimator (MLE). Extensions to non-canonical links and to models specified only through the conditional mean are discussed in Appendix~\ref{appen:extensions}, along with a corresponding simulation study.
	
	Let $\btheta$ be the coefficient vector in a (misspecified) linear model $Y = \bx^\top \btheta + \varepsilon$, where $\E(\varepsilon | \bx) = 0$, and let $\hat{\btheta}$ denote the OLS estimator from the original sample,
	\begin{equation} \label{eq:thetaOLS}
		\hat{\btheta} = \left(\bX^\top \bX\right)^{-1} \bX^\top \by \, ,
	\end{equation}
	which solves the score equation
	\begin{equation}
		\int \left( y - \bx^\top \btheta \right) \bx \, dF_n(\bx, y) = \frac{1}{n} \sum_{i=1}^n \left( Y_i - \bx_i^\top \btheta \right) \bx_i \equiv \mathbf{0} \, . \label{eq:ScoreOLSOrig}
	\end{equation}
	Define $\btheta^*$ as the population limit of $\hat{\btheta}$, that is, the coefficient vector corresponding to the best linear approximation of $Y$ given $\bx$ in the population.
	Together with the synthetic data, we assume access to $\mathcal{X}^T\mathcal{X}$ from the original sample, which includes $\bX^\top \bX$ and $\bX^\top \by$. These statistics are sufficient to compute $\hat{\btheta}$ exactly and will play a pivotal role in our estimation procedure.  
	
	Since the original sample is unavailable, a natural approach is to replace $F_n$ with $\tilde{F}_m$ in score equation \eqref{eq:ScoreMLEOrig} and solve for $\bbeta$. However, although $\hat{F}_n$ (and thus $\tilde{F}_m$) is assumed to converge to $F_n$ under suitable metrics, the convergence may be slow \citep{decruyenaerereal}, leading to substantial bias. We therefore consider the following bias-corrected estimating equation:
	\begin{equation*}
		\int \left\{ y - \mu\big(\bx^\top \bbeta\big) \right\} \bx \, d\tilde{F}_m(\bx, y) - \int \left\{ y - \mu\big(\bx^\top \hat{\bbeta}\big) \right\} \bx \, d\tilde{F}_m(\bx, y) = \int \left\{ \mu\big(\bx^\top \hat{\bbeta}\big) - \mu\big(\bx^\top \bbeta\big) \right\} \bx \, d\tilde{F}_m(\bx),
	\end{equation*}  
	where it is evident that $\hat{\bbeta}$ is the root of this equation. However, since this equation depends on $\hat{\bbeta}$ itself, it is not directly useful for estimation. Nonetheless, from Eq.'s (\ref{eq:ScoreMLEOrig})--(\ref{eq:ScoreOLSOrig}) we have the relation  
	$$
	\int \bx^\top \hat{\btheta} \bx \, dF_n(\bx) = \int \mu\big(\bx^\top \hat{\bbeta}\big) \bx \, dF_n(\bx) \, ,
	$$  
	which leads to the following estimating equation for $\bbeta$:
	\begin{equation}\label{eq:ScoreCorrected}
		\boldsymbol{\psi}_m(\bbeta) = \int \left\{ \bx^\top \hat{\btheta} - \mu\big(\bx^\top \bbeta\big) \right\} \bx \, d\tilde{F}_m(\bx) = \frac{1}{m} \sum_{i=1}^m \left\{ \tilde{\bx}_{n,i}^\top \hat{\btheta} - \mu\big(\tilde{\bx}_{n,i}^\top \bbeta\big) \right\}\tilde{\bx}_{n,i} \equiv \mathbf{0} \, . 
	\end{equation} 
	
	Alternatively, Eq.~(\ref{eq:ScoreCorrected}) can be derived directly from
	\begin{equation} \label{eq:dataFusion}
		\int\!\!\int  \left( y - \bx^\top \hat{\btheta} \right)\bx f(\by \mid \bx; \bbeta)\, d\by\, d\tilde{F}_m(\bx),
	\end{equation}
	where $f$ is the conditional density. This expression matches the constraint used in data-fusion methods combining individual-level data from an ``internal'' source with summary information from an ``external'' source, in this case the synthetic and original datasets, respectively \citep{chatterjee2016constrained,han2019empirical,zhai2022data,fang2025integrated}.

	Eq.~\eqref{eq:ScoreCorrected} combines the synthetic covariates with $\hat{\btheta}$ from the original data (Eq.~\eqref{eq:thetaOLS}), and its root is denoted by $\tilde{\bbeta}$. The key difficulty for the generative model is to recover the conditional distribution of $Y$ given $\bx$. Unlike parametric bootstrap, which assumes a fixed model and estimates only $\bbeta$, a flexible generator must infer both $\bbeta$ and the form of $\E(Y|\bx)$, a much harder task. Eq.~\eqref{eq:ScoreCorrected} sidesteps this difficulty by excluding synthetic outcomes, easing the burden on the generator.

	To leverage information from the original data and enhance the alignment between the synthetic and original datasets, we also apply a ``whiten-recolor" transformation \citep{chiu2019understanding}, assuming $\bX$ and $\tilde{\bX}$ are full rank. Let $\mathbf{R}_{\bx}$ be the Cholesky decomposition upper-triangular matrix satisfying $\mathbf{R}_{\bx}^\top \mathbf{R}_{\bx} = mn^{-1}\bX^\top \bX$. Similarly, let $\mathbf{R}_{\tilde{\bx}}$ be the upper-triangular matrix satisfying $\mathbf{R}_{\tilde{\bx}}^\top \mathbf{R}_{\tilde{\bx}} = \tilde{\bX}_n^\top \tilde{\bX}_n$. We apply the transformation 
	$$
	\tilde{\bX}_{n} \mapsto \tilde{\bX}_{n} \mathbf{R}_{\tilde{\bx}}^{-1} \mathbf{R}_{\bx} \, ,
	$$
	which ensures that $m^{-1}\tilde{\bX}_n^\top \tilde{\bX}_n = n^{-1}\bX^\top \bX$, effectively aligning the second moments of the synthetic dataset with those of the original data. We treat this post-processing step as an inherent part of the data generation process and assume that $\tilde{\bX}_n$ has already undergone this transformation, with the distributions $\hat{F}_n$ and $\tilde{F}_m$ reflecting this adjustment. Appendix~\ref{appen:extensions} presents simulations illustrating the benefits of the whitening-recoloring transformation, showing that it improves estimation efficiency and yields more accurate variance estimates.
	
	Note that while $\mathbf{R}_{\mathbf{x}}$ is deterministic given $\mathcal{D}_n$, the inverse $\mathbf{R}_{\tilde{\mathbf{x}}}^{-1}$ remains random since it is computed from the synthetic data. As a result, the transformed synthetic observations are no longer independent conditional on $\mathcal{D}_n$. However, as $m$ grows, the estimation error in the Cholesky factor, and thus in its inverse, shrinks at rate $O(1/m)$, making the induced dependence asymptotically negligible. A formal proof is given in Proposition~\ref{prop:choleskyindep} in Appendix~\ref{appen:proofs}. We therefore treat the $m$ synthetic observations as conditionally i.i.d.\ in what follows.
	
	\begin{remark}
		A simple alternative to Eq.~\eqref{eq:ScoreCorrected} that also avoids using $\tilde{\by}_n$ is to write
		$$
		\int \mu(\bx^\top \bbeta)\bx\, d\tilde{F}_m(\bx) = \int y\bx\, d\tilde{F}_m(\bx,y),
		$$
		and replace $\tilde{F}_m$ with $F_n$ on the right-hand side, yielding
		$$
		\frac{1}{m}\sum_{i=1}^m \mu(\tilde{\bx}_{n,i}^\top \bbeta)\tilde{\bx}_{n,i} = \frac{1}{n}\bX^\top \by.
		$$
		This approach breaks the correlation between $Y$ and $\mu(\tilde{\bx}_n^\top \bbeta)$ that exists in the original score, where it helps reduce $\text{Var}\left\{Y - \mu(\bx^\top \bbeta)\right\}$. Removing this correlation increases the variance of the estimator, making it less efficient than Eq.~\eqref{eq:ScoreCorrected}. Its advantage is that it requires only $\bX^\top \by$, improving communication efficiency and privacy.
	\end{remark}

	\subsection{Asymptotic Theory}\label{sec:inf}
	
	We first introduce some additional notation. Let $\mathbf{W}(\bbeta)$ and $\tilde{\mathbf{W}}_n(\bbeta)$ be diagonal matrices of dimensions $n \times n$ and $m \times m$, respectively, with diagonal entries $\left\{ \mu'\big(\bx_i^\top\bbeta\big) \right\}_{i=1}^n$ and $\{\mu'\big(\tilde{\bx}_{n,i}^\top \bbeta\big) \}_{i=1}^m$, where $\mu'$ is the derivative of $\mu$. The observed information matrix evaluated at $\bbeta$, for the original sample is $\boldsymbol{\cI}(\bbeta) = n^{-1} \bX^\top \mathbf{W}(\bbeta) \bX$, and its synthetic counterpart is $\tilde{\boldsymbol{\cI}}_n(\bbeta) = m^{-1} \tilde{\bX}_n^\top \tilde{\mathbf{W}}_n(\bbeta) \tilde{\bX}_n$. Denote $\boldsymbol{\phi}(\bx,\bbeta)=\left\{\bx^\top \hat{\btheta} - \mu\left(\bx^\top \bbeta\right) \right\}\bx$, and  let $\boldsymbol{\mathcal{A}}^{(n)}_1 = \text{Var} (\boldsymbol{\phi}(\tilde{\bx}_n,\hat{\bbeta}) | \mathcal{D}_n)$ and \\ $\boldsymbol{\mathcal{A}}^{(n)}_2 = n\text{Var}\left[\E_{\hat{F}_n}\left\{\boldsymbol{\phi} (\bx,\hat{\bbeta})\right\} \,\Big|\, \hat{\bbeta}, \hat{\btheta}\right]$, where $\E_F\left\{g(\bx)\right\} = \int g(\bx)dF(\bx)$. Below, we list the assumptions required for establishing the asymptotic properties of our estimator, followed by a short discussion on these assumptions.
	\begin{assumption} \label{as:reularityParam}
		\textbf{Regularity conditions on the parameter space:} \\
		Assume $\bbeta\in\mathcal{B}$ and $\btheta\in\Theta$, where $\mathcal{B}\subset\mathbb{R}^p$ and $\Theta\subset\mathbb{R}^p$ are compact sets. Consequently, there exists a finite constant $M<\infty$ such that $\|\bbeta\|\le M$ and $\|\btheta\|\le M$ for all $\bbeta\in\mathcal{B}$ and $\btheta\in\Theta$. Moreover, the parameters $\bbeta^*$ and $\btheta^*$ are interior points of $\mathcal{B}$ and $\Theta$, respectively.
	\end{assumption}
	\begin{assumption} \label{as:regularityx}
		\textbf{Regularity conditions on $\mu$, $\bx,  \tilde{\bx}_n$:}
		
		\begin{enumerate}
			\item[\textnormal{(a)}]  
			There exist constants \(C_1, \delta > 0\) and an index \(n^*\) such that for all \(n > n^*\) and all \(\bbeta \in \mathcal{B}\),
			$$
			\E_{\hat{F}_n} \left\{ \|\boldsymbol{\phi}(\bx, \bbeta)\|^{2+\delta} \right\} < C_1 \quad \text{almost surely}.
			$$ 
			
			\item[\textnormal{(b)}]  
			There exists an index $n^*$ such that for all $n>n^*$, the function $\mu$ is Lipschitz, satisfying
			$$
			|\mu(\tilde{\bx}_n^\top \bbeta_1) - \mu(\tilde{\bx}_n^\top \bbeta_2)| \leq L(\tilde{\bx}_n) \|\bbeta_1 - \bbeta_2\|,
			$$
			for all $\bbeta_1, \bbeta_2 \in \mathcal{B}$, where $L$ is a nonnegative function. Moreover, there exists a constant $C_2<\infty$ such that for all \(n > n^*\),
			$$
			\E_{\hat{F}_n} \left\{ L^2(\bx)\|\bx\|^2 \right\} < C_2 \quad \text{almost surely}.
			$$
			
			\item[\textnormal{(c)}]  
			For any $\bbeta_1,\ldots,\bbeta_m \in \mathcal{B}$, we have
			$$ 
			\frac{1}{m}\sum_{i=1}^m\left| \mu''(\tilde{\bx}_{n,i}^\top \bbeta_i) \right| \, \|\tilde{\bx}_{n,i}\|^3  = O_{P|\mathcal{D}_n}(1) \, ,
			$$
			where $\mu''$ is the second derivative of $\mu$.
			
			\item[\textnormal{(d)}]  
			There exists an index \(n^*\) such that for all \(n > n^*\), and all \(\bbeta \in \mathcal{B}\), the matrix
			$$
			\E_{\hat{F}_n} \left\{\mu'(\bx^\top \bbeta)\, \bx \bx^\top \right\}
			$$
			is definite with probability one; that is, it is either positive definite or negative definite.
			
			\item[\textnormal{(e)}] The matrices $\bX$, $\tilde{\bX}_n$ are full rank.
		\end{enumerate}
	\end{assumption}
	
	\begin{assumption} \label{as:varMat}
		\textbf{Convergence of variance matrices:} \\
		There exist positive definite matrices $\boldsymbol{\mathcal{A}}_1$ and $\boldsymbol{\mathcal{A}}_2$ such that as $n \to \infty$,
		$$\boldsymbol{\mathcal{A}}^{(n)}_1 \xrightarrow{P} \boldsymbol{\mathcal{A}}_1  \, , \quad \boldsymbol{\mathcal{A}}^{(n)}_2 \xrightarrow{P} \boldsymbol{\mathcal{A}}_2.$$
	\end{assumption}
	\begin{assumption} \label{as:biasNorm}
		\textbf{Asymptotic normality of the estimating equation's bias (transportability):}
		$$
		\sqrt{n}\E_{\hat{F}_n}\left\{\boldsymbol{\phi}(\bx,\hat{\bbeta})\right\} \,\Big|\, \hat{\bbeta}, \hat{\btheta} \xrightarrow{D} \mathcal{N}(\mathbf{0},\boldsymbol{\mathcal{A}}_2) \, .
		$$
	\end{assumption}
	\begin{assumption} \label{as:infoMat}
		\textbf{Convergence of the observed information matrices:}
		
		There exists a positive definite matrix $\mathbf{\Sigma}$ such that as $(m,n) \to \infty$,
		$$
		\tilde{\boldsymbol{\cI}}_n\big(\hat{\bbeta}\big) \xrightarrow{P|\hat{\bbeta},\hat{\btheta}} \mathbf{\Sigma} \, , \quad \boldsymbol{\cI}\big(\hat{\bbeta}\big) \xrightarrow{P} \mathbf{\Sigma}\,. 
		$$
	\end{assumption}
	
	Assumption~\ref{as:reularityParam} imposes standard regularity conditions on the parameter spaces, ensuring uniform convergence and the existence of derivatives.
	Assumption~\ref{as:regularityx} specifies conditions on $\mu$ and the distributions of $\tilde{\bx}_n$ and $\bx$ required for consistency and asymptotic normality. Although stated for $\tilde{\bx}_n$, these conditions implicitly apply to $\bx$, since $\tilde{\bx}_n$ inherits its distribution from $\bx$. 
	For common choices of $\mu$, including Poisson and binary regression, Appendix~\ref{appen:mu} provides sufficient conditions on $\tilde{\bx}_n$ and $\bx$ such that Assumption~\ref{as:regularityx} is satisfied. Assumption~\ref{as:varMat} requires convergence of certain variance matrix estimators to constant, positive definite limits.

	Assumption~\ref{as:biasNorm} provides the key `bridge' between the empirical distribution of the original data, $F_n$, and the synthetic distribution, $\hat{F}_n$. While $\hat{F}_n$ may converge slowly to $F_n$, often not at the $\sqrt{n}$ rate, our requirement is weaker and focuses on a local property: we require that a specific substructure of the synthetic distribution converges to its counterpart in $F_n$ at the $\sqrt{n}$ rate, where the target is zero. Specifically, we define the estimating equation bias as
	$$
	\E_{\hat{F}_n}\left\{\boldsymbol{\phi}(\bx,\hat{\bbeta})\right\} - \E_{{F}_n}\left\{\boldsymbol{\phi}(\bx,\hat{\bbeta})\right\},
	$$
	where the second term is identically zero. Geometrically, in the original data, the column space of $\bX$ is orthogonal to the difference between the GLM-based conditional mean and its best linear approximation. We require that this orthogonality is increasingly approximated in the synthetic estimating equation as $n$ grows, when evaluated at $\hat{\bbeta}$. This condition is related to the `transportability', `exchangeability' or `homogeneity' assumptions in the data fusion literature \citep{dahabreh2019generalizing,hu2022semiparametric,li2023improving,fang2025integrated}, which typically posit equality of certain functionals across distributions. In contrast, we require only asymptotic agreement at an appropriate rate. Section~\ref{sec:sims} provides empirical support for its plausibility.
	
	Assumption~\ref{as:infoMat} provides another `bridge', ensuring that the observed information matrices of the synthetic and original data, both evaluated at $\hat{\bbeta}$, converge to the same fixed positive definite matrix. This assumption is primarily needed for the sake of variance estimation.
	
	Our main theorem, stated below, establishes the asymptotic normality of $\tilde{\bbeta}$ from Eq.~\eqref{eq:ScoreCorrected} and confirms that it achieves the $\sqrt{n}$ convergence rate. We assume that as $n$ increases, $m$ grows proportionally, such that in the limit, $m/n \to \alpha$ for some positive constant $\alpha$.

	\begin{theorem}\label{thm:normality}
		Under Assumptions~\ref{as:reularityParam}--\ref{as:infoMat}, as $(m,n) \rightarrow \infty$, it holds that
		$$\sqrt{n}(\tilde{\bbeta} - \bbeta^*)\xrightarrow{D}\mathcal{N}(\mathbf{0},\mathbf{\mathcal{V}}) \, , $$
		where $\mathbf{\mathcal{V}} = \mathbf{\Sigma}^{-1}\left(\alpha^{-1}\boldsymbol{\mathcal{A}}_1 + \boldsymbol{\mathcal{A}}_2\right)\mathbf{\Sigma}^{-1} + \mathbf{\Sigma}^{-1}$. 	
	\end{theorem}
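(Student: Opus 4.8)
\noindent\emph{Proof strategy.} The plan is to treat $\tilde{\bbeta}$ as a $Z$-estimator defined by $\boldsymbol{\psi}_m(\cdot)=\mathbf{0}$ and to linearize $\boldsymbol{\psi}_m$ about $\hat{\bbeta}$, which is the natural anchor since Assumption~\ref{as:biasNorm} is stated at that point. First I would establish consistency. For a canonical-link GLM $\mu'>0$, so $-\boldsymbol{\psi}_m$ is the gradient of the strictly convex map $\bbeta\mapsto m^{-1}\sum_{i=1}^m\{b(\tilde{\bx}_{n,i}^\top\bbeta)-(\tilde{\bx}_{n,i}^\top\hat{\btheta})(\tilde{\bx}_{n,i}^\top\bbeta)\}$ with $b'=\mu$, whose unique minimizer is $\tilde{\bbeta}$. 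Using Assumption~\ref{as:regularityx}(c)--(e), the definiteness in Assumption~\ref{as:regularityx}(d) together with continuity of $\mu'$ on the compact $\mathcal{B}$, and Assumption~\ref{as:infoMat}, a conditional law of large numbers gives a uniform lower bound $\tilde{\boldsymbol{\cI}}_n(\bbeta)\succeq\lambda\mathbf{I}$ with $\lambda>0$ over $\bbeta\in\mathcal{B}$ with probability tending to one; strong convexity then yields $\|\tilde{\bbeta}-\hat{\bbeta}\|\le\lambda^{-1}\|\boldsymbol{\psi}_m(\hat{\bbeta})\|$. Conditionally on $\mathcal{D}_n$, $\boldsymbol{\psi}_m(\hat{\bbeta})=m^{-1}\sum_i\boldsymbol{\phi}(\tilde{\bx}_{n,i},\hat{\bbeta})$ has mean $\E_{\hat{F}_n}\{\boldsymbol{\phi}(\bx,\hat{\bbeta})\}=o_P(n^{-1/2})$ by Assumption~\ref{as:biasNorm} and variance $\boldsymbol{\mathcal{A}}^{(n)}_1/m\to\mathbf{0}$, so $\boldsymbol{\psi}_m(\hat{\bbeta})\xrightarrow{P}\mathbf{0}$, whence $\tilde{\bbeta}-\hat{\bbeta}\xrightarrow{P}\mathbf{0}$ and, combined with $\hat{\bbeta}\xrightarrow{P}\bbeta^*$, $\tilde{\bbeta}\xrightarrow{P}\bbeta^*$. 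Throughout, the dependence among the $\tilde{\bx}_{n,i}$ induced by the whiten--recolor step is treated as asymptotically negligible via Proposition~\ref{prop:choleskyindep}, so the synthetic observations are handled as conditionally i.i.d.

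\noindent\emph{Linearization.} Applying the integral form of the mean value theorem to the $C^2$ map $\boldsymbol{\psi}_m$ and using $\boldsymbol{\psi}_m(\tilde{\bbeta})=\mathbf{0}$, one gets $\mathbf{0}=\boldsymbol{\psi}_m(\hat{\bbeta})-\bar{\boldsymbol{\cI}}_m(\tilde{\bbeta}-\hat{\bbeta})$, where $\bar{\boldsymbol{\cI}}_m:=\int_0^1\tilde{\boldsymbol{\cI}}_n(\hat{\bbeta}+t(\tilde{\bbeta}-\hat{\bbeta}))\,dt$. Assumption~\ref{as:regularityx}(c) bounds $\|\tilde{\boldsymbol{\cI}}_n(\bbeta_1)-\tilde{\boldsymbol{\cI}}_n(\bbeta_2)\|$ by an $O_{P|\mathcal{D}_n}(1)$ multiple of $\|\bbeta_1-\bbeta_2\|$, so, since $\|\tilde{\bbeta}-\hat{\bbeta}\|\xrightarrow{P}0$, $\bar{\boldsymbol{\cI}}_m=\tilde{\boldsymbol{\cI}}_n(\hat{\bbeta})+o_P(1)$, and with Assumption~\ref{as:infoMat}, $\bar{\boldsymbol{\cI}}_m\xrightarrow{P}\mathbf{\Sigma}$ and $\bar{\boldsymbol{\cI}}_m^{-1}\xrightarrow{P}\mathbf{\Sigma}^{-1}$. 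Hence $\sqrt{n}(\tilde{\bbeta}-\bbeta^*)=\sqrt{n}(\hat{\bbeta}-\bbeta^*)+\bar{\boldsymbol{\cI}}_m^{-1}\sqrt{n}\,\boldsymbol{\psi}_m(\hat{\bbeta})$, and it remains to find the joint limit of $\sqrt{n}(\hat{\bbeta}-\bbeta^*)$ and $\sqrt{n}\,\boldsymbol{\psi}_m(\hat{\bbeta})$.

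\noindent\emph{The two components and assembly.} Decompose $\sqrt{n}\,\boldsymbol{\psi}_m(\hat{\bbeta})=\sqrt{n}\,T_{1m}+\sqrt{n}\,T_{2n}$, with $T_{1m}:=m^{-1}\sum_i\{\boldsymbol{\phi}(\tilde{\bx}_{n,i},\hat{\bbeta})-\E_{\hat{F}_n}[\boldsymbol{\phi}(\bx,\hat{\bbeta})]\}$ and $T_{2n}:=\E_{\hat{F}_n}\{\boldsymbol{\phi}(\bx,\hat{\bbeta})\}$. Given $\mathcal{D}_n$, $\sqrt{m}\,T_{1m}$ is a normalized sum of i.i.d.\ mean-zero vectors with covariance $\boldsymbol{\mathcal{A}}^{(n)}_1$; Assumption~\ref{as:regularityx}(a) supplies the Lyapunov condition and Assumption~\ref{as:varMat} the limit, so $\sqrt{m}\,T_{1m}\xrightarrow{D|\mathcal{D}_n}\mathcal{N}(\mathbf{0},\boldsymbol{\mathcal{A}}_1)$. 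Because this conditional limit is nonrandom, $\sqrt{m}\,T_{1m}$ converges unconditionally to the same law and is asymptotically independent of $\mathcal{D}_n$, and $m/n\to\alpha$ yields $\sqrt{n}\,T_{1m}\xrightarrow{D}\mathcal{N}(\mathbf{0},\alpha^{-1}\boldsymbol{\mathcal{A}}_1)$. For the second piece, Assumption~\ref{as:biasNorm} states $\sqrt{n}\,T_{2n}\xrightarrow{D|\hat{\bbeta},\hat{\btheta}}\mathcal{N}(\mathbf{0},\boldsymbol{\mathcal{A}}_2)$ with a limit not depending on the conditioning variables, so a characteristic-function argument gives unconditional convergence to $\mathcal{N}(\mathbf{0},\boldsymbol{\mathcal{A}}_2)$ and asymptotic independence from $(\hat{\bbeta},\hat{\btheta})$, hence from $\sqrt{n}(\hat{\bbeta}-\bbeta^*)$; conditioning on all of $\mathcal{D}_n$ makes $\sqrt{n}\,T_{1m}$ asymptotically independent of both. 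Standard MLE theory for a correctly specified canonical-link GLM, with Fisher information $\lim\boldsymbol{\cI}(\hat{\bbeta})=\mathbf{\Sigma}$, gives $\sqrt{n}(\hat{\bbeta}-\bbeta^*)\xrightarrow{D}\mathcal{N}(\mathbf{0},\mathbf{\Sigma}^{-1})$. Combining the three asymptotically independent Gaussian limits and applying Slutsky's theorem with $\bar{\boldsymbol{\cI}}_m^{-1}\xrightarrow{P}\mathbf{\Sigma}^{-1}$ gives $\sqrt{n}(\tilde{\bbeta}-\bbeta^*)\xrightarrow{D}\mathcal{N}(\mathbf{0},\mathbf{\mathcal{V}})$ with $\mathbf{\mathcal{V}}=\mathbf{\Sigma}^{-1}(\alpha^{-1}\boldsymbol{\mathcal{A}}_1+\boldsymbol{\mathcal{A}}_2)\mathbf{\Sigma}^{-1}+\mathbf{\Sigma}^{-1}$.

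\noindent\emph{Where the difficulty lies.} The convexity-based consistency and the second-order remainder bound should be routine. The delicate part is the interplay between conditional and unconditional weak convergence for a triangular array whose generating law $\hat{F}_n$ itself drifts with $n$: one must make Assumption~\ref{as:biasNorm}'s conditional statement mesh with the unconditional MLE limit so that no cross-covariance between $\hat{\bbeta}-\bbeta^*$ and the bias term $T_{2n}$ survives in $\mathbf{\mathcal{V}}$, and one must check that the dependence induced by the whiten--recolor transformation remains negligible at the $\sqrt{n}$ scale (Proposition~\ref{prop:choleskyindep}).
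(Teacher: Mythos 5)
Your proof is correct and its core architecture—anchoring the expansion at $\hat{\bbeta}$, splitting $\sqrt{n}\,\boldsymbol{\psi}_m(\hat{\bbeta})$ into a conditional-CLT fluctuation term plus the Assumption~\ref{as:biasNorm} bias term, arguing asymptotic independence from the nonrandomness of the conditional limits, and adding the independent $\sqrt{n}(\hat{\bbeta}-\bbeta^*)$ contribution—is exactly the paper's (the paper formalizes the independence step via an explicit computation of the joint conditional distribution functions, which is the rigorous version of your characteristic-function remark). The one place you genuinely diverge is consistency: the paper proves $\tilde{\bbeta}\to\hat{\bbeta}$ by establishing uniform convergence of $\boldsymbol{\psi}_m$ over the compact $\mathcal{B}$ via an $\epsilon$-net and the Lipschitz condition in Assumption~\ref{as:regularityx}(b), then invokes identifiability (monotonicity from Assumption~\ref{as:regularityx}(d)) and Theorem~5.9 of van der Vaart; you instead exploit the canonical-link structure to view $\tilde{\bbeta}$ as the minimizer of a strictly convex objective and get the quantitative bound $\|\tilde{\bbeta}-\hat{\bbeta}\|\le\lambda^{-1}\|\boldsymbol{\psi}_m(\hat{\bbeta})\|$. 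Your route is shorter and buys an explicit rate, but it leans on a uniform eigenvalue lower bound $\tilde{\boldsymbol{\cI}}_n(\bbeta)\succeq\lambda\mathbf{I}$ over all of $\mathcal{B}$ \emph{and} uniformly in $n$, which the stated assumptions do not directly supply (Assumption~\ref{as:infoMat} controls $\tilde{\boldsymbol{\cI}}_n$ only at $\hat{\bbeta}$, and Assumption~\ref{as:regularityx}(d) gives definiteness pointwise in $\bbeta$ without a quantitative, $n$-uniform bound); you would need to add a short uniform-convergence-plus-compactness argument to justify it, at which point the two proofs are of comparable length. Two trivial slips worth fixing: Assumption~\ref{as:biasNorm} gives $\E_{\hat{F}_n}\{\boldsymbol{\phi}(\bx,\hat{\bbeta})\}=O_P(n^{-1/2})$, not $o_P(n^{-1/2})$ (either suffices for your consistency step), and your integral-mean-value linearization with the Lipschitz control from Assumption~\ref{as:regularityx}(c) is an acceptable substitute for the paper's explicit second-order Taylor remainder.
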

	
	The proof of Theorem~\ref{thm:normality} is given in Appendix~\ref{appen:proofs}. The second term in $\mathcal{V}$ represents the variance of $\hat{\bbeta}$ under a correctly-specified canonical GLM. 
	The first term accounts for the additional variability introduced by synthetic data generation and estimation. 
	The matrix $\boldsymbol{\mathcal{A}}_1$ captures variance due to the finite synthetic sample size $m$, and $\boldsymbol{\mathcal{A}}_2$ reflects the variance of the bias term arising from using $\hat{F}_n$ in place of $F_n$, as described in Assumption~\ref{as:biasNorm}.
	In Appendix~\ref{appen:extensions}, we investigate deviations from correctly-specified canonical GLMs and show that the proposed inference procedure remains valid, with the variance taking a `sandwich' form. 
	
	\subsection{Variance Estimation}\label{sec:variance}
	
	Examining the variance expression in Theorem~\ref{thm:normality} and based on Assumption~\ref{as:infoMat}, we use $\tilde{\boldsymbol{\cI}}_n^{-1}$($\tilde{\bbeta}$) as an estimator for $\mathbf{\Sigma}^{-1}$. Regarding $\boldsymbol{\mathcal{A}}_1$ and $\boldsymbol{\mathcal{A}}_2$, the former can be reliably estimated from the synthetic data, whereas the latter cannot, as it heavily depends on the variability across different realizations of the original, unobserved samples.  
	
	Given the challenges of estimating $\boldsymbol{\mathcal{A}}_2$, we focus on $\alpha^{-1} \boldsymbol{\mathcal{A}}_1 + \boldsymbol{\mathcal{A}}_2$, which corresponds, asymptotically, to  $\text{Var}\left\{\sqrt{n} \boldsymbol{\psi}_m(\hat{\bbeta})\right\}$. We will estimate this quantity directly via bootstrapping, generating approximate realizations of $\boldsymbol{\psi}_m(\hat{\bbeta})$. The variability within the synthetic data arises from the fact that we have only $m$ samples from $\hat{F}_n$, while the variability coming from the original sample stems from three sources: (i) the estimator $\hat{\bbeta}$, (ii) the estimator $\hat{\btheta}$, and (iii) the randomness in $\hat{F}_n$, which would change with a different sample. 
	To account for the uncertainty in $\hat{\bbeta}$ and $\hat{\btheta}$, we generate approximate realizations from their asymptotic joint distribution, given by  
\begin{equation}  \label{CovBetaTheta}
	\sqrt{n}  
	\begin{pmatrix}  
		\hat{\bbeta} - \bbeta^* \\  
		\hat{\btheta} - \btheta^*  
	\end{pmatrix}  
	\sim \mathcal{N} \left\{ 
	\mathbf{0},  
	\begin{pmatrix}  
		\boldsymbol{\Sigma}^{-1} & \E(\bx\bx^\top)^{-1} \\  
		\E(\bx\bx^\top)^{-1}  & \E(\bx\bx^\top)^{-1}  \boldsymbol{\Sigma} \E(\bx\bx^\top)^{-1}  
	\end{pmatrix}  
	\right\}.
\end{equation}
	Since $\boldsymbol{\Sigma}$ is unknown in practice, we will sample realizations from
	\begin{equation} \label{SamplingBetaTheta}
		\mathcal{N} \left\{  
		\begin{pmatrix}  
			\tilde{\bbeta} \\  
			\hat{\btheta}  
		\end{pmatrix},  
		\begin{pmatrix}  
			n^{-1}\tilde{\boldsymbol{\cI}}_n(\tilde{\bbeta})^{-1} & (\bX^\top \bX)^{-1} \\  
			(\bX^\top \bX)^{-1} & n(\bX^\top \bX)^{-1} \tilde{\boldsymbol{\cI}}_n(\tilde{\bbeta}) (\bX^\top \bX)^{-1}  
		\end{pmatrix}  
		\right\}.
	\end{equation}

	Next, we aim to capture the variability in $\tilde{\bX}_n$ that originates from the original sample. While we do not have access to the original sample or the generative model, we do have access to $\bX^\top\bX$. Since $\bX$ includes a column of ones, we can retrieve the empirical covariance matrix as
	$$\widehat{\text{Cov}}(\bx)= 
	\left( \frac{1}{n} \bX^\top \bX \right)_{2:p,2:p} - \bar{\bx}\bar{\bx}^\top  
	\quad , \quad \bar{\bx} =  \left( \frac{1}{n} \bX^\top \bX \right)_{2:p,1} \, .$$  
	When $\bx$ follows a normal distribution, the empirical covariance matrix follows a Wishart distribution. However, even if $\bx$ is not exactly Gaussian, in the low-dimensional setting ($n \gg p$), its covariance matrix can often be reasonably approximated by a Wishart distribution \citep{kollo1995approximating}. Building on this, we sample from a Wishart distribution and realign the synthetic data according to that matrix, using the ``whiten-recolor" procedure, as described above. This process injects the sampled variability in the original sample into a new, second-generation synthetic dataset. The full procedure is outlined in Algorithm~\ref{alg:boot}. As demonstrated in our simulation study, this variance estimation method performs satisfactorily. Appendix~\ref{appen:extensions} describes modifications to the variance-estimation procedure when the model is misspecified, uses a non-canonical link, or defined just through the conditional expectation.
	
	When the model of interest is linear regression, then not only can we obtain $\hat{\btheta}$ exactly based on the summary statistics, but the variance estimation, required for statistical inference, is also derived from those summary statistics, as
	$$\widehat{\text{Var}}(\hat{\btheta})=\hat{\sigma}^2(\bX^\top\bX)^{-1} \, $$
	where 
	$$\hat{\sigma}^2 = \frac{1}{(n-p)}\left(\by - \bX\hat{\btheta}\right)^\top\left(\by - \bX\hat{\btheta}\right) = \frac{1}{(n-p)}\left(\by^\top\by -2\hat{\btheta}^\top\bX^\top\by + \hat{\btheta}^\top\bX^\top\bX\hat{\btheta} \right) \, . $$

	\begin{algorithm}[t] 
		\spacingset{1}
		\caption{Bootstrap for Estimating  $\text{Var}\left\{ \boldsymbol{\psi}_m(\hat{\bbeta})\right\}$ from Synthetic Data \label{alg:boot}}
		\begin{algorithmic}
			\For{$b = 1,\ldots,B$}
			\begin{enumerate}[i.]	
				\item Sample $\hat{\bbeta}^{(b)}$ and $\hat{\btheta}^{(b)}$ according to (\ref{SamplingBetaTheta}).  
							
				\item Sample $\bar{\bx}^{(b)} \sim \mathcal{N}\left(\bar{\bx}, \widehat{\text{Cov}}(\bx)\right)$.
				
				\item Draw $\mathbf{\mathcal{C}}^{(b)}_{2:p,2:p}$ from a Wishart distribution with parameters $\left\{n, \widehat{\text{Cov}}(\bx)\right\}$
				and define  
				$$
				\mathbf{\mathcal{C}}^{(b)} = 
				\begin{pmatrix} 
					n & n\bar{\bx}^{(b)T} \\ 
					n\bar{\bx}^{(b)} & \mathbf{\mathcal{C}}^{(b)}_{2:p,2:p} + n\bar{\bx}^{(b)}\bar{\bx}^{(b)T} 
				\end{pmatrix}.
				$$

				\item Sample with replacement $m$ observations from $\tilde{\bX}_n$.
				\item Compute the Cholesky decomposition of $\mathbf{\mathcal{C}}^{(b)}$ and apply a ``whiten-recolor" transformation to the resampled observations, obtaining $\tilde{\bX}_n^{(b)}$. 
				\item Compute  
				$$
				\boldsymbol{\psi}^{(b)}_m = \frac{1}{m} \sum_{i=1}^m 
				\left\{ \tilde{\bx}_{n,i}^{(b)T} \hat{\btheta}^{(b)} - \mu\left(\tilde{\bx}_{n,i}^{(b)T} \hat{\bbeta}^{(b)}\right) \right\} \tilde{\bx}^{(b)}_{n,i}.
				$$  
			\end{enumerate}
			\EndFor 
			
			\Return the empirical variance matrix of $\boldsymbol{\psi}^{(b)}_m$ for $b=1,\ldots,B$.
		\end{algorithmic}
	\end{algorithm}
	
	
	\section{Privacy Concerns}\label{sec:privacy}
	
	Releasing both the Gram matrix $\bX^\top \bX$ and the synthetic data $\tilde{\bX}_n$ may raise privacy concerns. While synthetic data reduces re-identification risk, the additional disclosure of $\bX^\top \bX$ invites the question of whether an adversary could exploit both to reconstruct parts of the original data $\bX$. The feasibility of such an attack depends on the degrees of freedom in reconstructing $\bX$, the additional structural information revealed by $\tilde{\bX}_n$, and whether limiting the size of the synthetic sample mitigates this risk.
	
	Given only $\bX^\top \bX$, the possible solutions for $\bX$ form a large equivalence class under left-orthogonal transformations. If $\bX_0$ satisfies $\bX_0^\top \bX_0 = \bX^\top \bX$, then so does $\mathbf{Q}\bX_0$ for any $n \times n$ orthogonal matrix $\mathbf{Q}$. The solution space has dimension $np - p(p+1)/2$, since $\bX$ has $np$ free parameters and $\bX^\top \bX$ imposes $p(p+1)/2$ independent symmetric constraints. Each additional observation adds $p$ degrees of freedom, so when $n \gg p$, the solution space becomes large, making exact recovery of $\bX$ from $\bX^\top \bX$ alone impractical.

	The synthetic data $\tilde{\bX}_n$ may reveal structural information that constrains the solution space for $\bX$. For example, if a covariate is known to lie in $[a, b]$, the degrees of freedom remain unchanged, but the feasible set is restricted to a subset of the continuous manifold. Similarly, if a column takes only discrete values, each entry is confined to a finite set, further limiting possible reconstructions. Still, even for binary variables, exact recovery of individual observations from marginal and pairwise co-occurrence counts is generally infeasible when $n \gg p$.
	
	One way to reduce privacy risk is to limit the number of synthetic observations. In the simulation section, we show that for large $n$, generating $m < n$ synthetic samples still provides reasonable statistical efficiency. A smaller synthetic dataset lowers the chance that any synthetic observation closely matches an original one. While one might imagine selecting a reconstruction solution $\bX_0$ that is close to $\tilde{\bX}_n$, an attacker could embed all $m$ synthetic rows into $\bX_0$ while retaining $(n - m)p - p(p+1)/2$ remaining degrees of freedom. Thus, incorporating $\tilde{\bX}_n$ into a reconstruction does not necessarily yield a solution closer to the true $\bX$. Whether $\tilde{\bX}_n$ provides meaningful information depends on how it was generated.
	
	Nevertheless, increasing privacy by adding noise to the Gram matrix, whether or not formally satisfying DP, may still be desirable to guard against broader classes of privacy attacks. Methods from private linear regression, such as sufficient statistics perturbation \citep{vu2009differential,wang2018revisiting,ferrando2024private}, could be imported to our setting. However, these perturbations substantially complicate inference, and we defer investigation to future work.

	\FloatBarrier
	\section{Simulation Study} \label{sec:sims}
	
	We conducted a simulation study to evaluate the performance of Poisson and logistic regression models under two covariate structures, labeled Setting A and Setting B. For each setting, datasets of size $n = 200$ and $n = 500$ were generated. From each dataset, $m = n$ synthetic observations were produced using two methods. The first method uses the {\tt R synthpop} package \citep{synthpop}, which employs classification and regression trees with spline-based smoothing to ensure synthetic samples differ from the originals. The second method uses ADS-GAN \citep{yoon2020anonymization}, implemented via the Python {\tt synthcity} package, which generates synthetic data using a generative adversarial network.

	In Setting A, covariates were generated using a Gaussian copula: a 19-dimensional vector was sampled from a multivariate normal distribution with mean zero and pairwise correlation 0.5, and then transformed using the normal CDF. The true coefficient vector is $\bbeta^* = (0.5, 0.9, -0.7, -0.5, 1, -0.7, -0.2, 0, \dots, 0)$, where only the intercept and the first six covariates were assigned nonzero coefficients. The response was simulated under either a Poisson model, by exponentiating the linear predictor to obtain the intensity, or a logistic model, by mapping the linear predictor to a probability via the logistic function.
	
	In Setting B, covariates were generated from exponential and normal distributions, with additional dependencies introduced through linear combinations. Specifically, $X_{1:3}$ and $X_{16:19}$ were drawn independently from an exponential distribution with mean one, while $X_{4:6}$ and $X_{11:15}$ were drawn independently from a standard normal distribution. The remaining covariates were defined as $X_7 = X_1 / 3 + 2 X_2 / 3 + \varepsilon_7$, $X_8 = X_1 / 2 + X_6 / 2 + \varepsilon_8$, $X_9 = X_2 + \varepsilon_9$, and $X_{10} = X_5 / 5 + 3 X_6 / 4 + \varepsilon_{10}$, with each $\varepsilon_j$ independently sampled from a standard normal distribution. The coefficient vector was set to $\bbeta^* = (0.15, 0.1, 0.15, 0.2, -0.1, 0.1, 0.2, -0.1, -0.1, 0.1, -0.1, 0, \dots, 0)$, where only the first eleven coefficients were nonzero, including the intercept. In each scenario, defined by setting, sample size, regression type, and synthetic model, 500 repetitions were conducted.

	We report here only the results for the first ten coefficients and only for the ADS-GAN model. Full results, including coefficients 11–20 for ADS-GAN and all results from \texttt{synthpop}, are provided in Appendix~\ref{appen:sim}. Table~\ref{tab:PoisCity1:10} presents the Poisson regression results, and Table~\ref{tab:LogisCity1:10} presents the logistic regression results. In both tables, MLE refers to the mean of $\hat{\bbeta}$ across repetitions based on the original sample. `Syn-novel' denotes the mean of our proposed estimator $\tilde{\bbeta}$ and `Syn-naive' refers to the coefficients obtained by fitting Poisson or logistic regression directly to the synthetic data $(\tilde{\by}_n, \tilde{\bX}_n)$. The tables also report the empirical standard error (SE) of each estimator and the coverage of 95\% Wald-type confidence intervals for $\bbeta^*$, based on our proposed estimator with bootstrap inference using $B = 1,000$ repetitions, as described in Algorithm~\ref{alg:boot}.
	
	Across both regression types and data generation settings, our estimator closely approximates the original-data-based MLE and the true parameter values, while Syn-naive exhibits substantial bias. The SEs of our estimator are only slightly higher than those of the MLE when $n = 200$ and nearly identical when $n = 500$, whereas those of Syn-naive are considerably larger. The confidence intervals achieve nominal coverage, demonstrating the effectiveness of the bootstrap procedure. These findings indicate that even with $n = 200$, our improved synthetic-data-based estimator is already highly efficient. The results for coefficients 11–20 and for \texttt{synthpop}, presented in Tables~\ref{tab:PoisCity11:20}--\ref{tab:LogisBPop} in Appendix~\ref{appen:sim}, support the same conclusion.
	
	To assess how performance varies with $m$, we generated a dataset of size $n = 20{,}000$ and trained the synthetic data generator. Synthetic datasets of sizes $m = 200, 500, 1000, 5000,$ and $20000$ were then created, applying both Syn-novel and Syn-naive. These simulations, conducted for Poisson regression under Setting~A, involved 500 repetitions, and due to the computational cost of ADS-GAN, only \texttt{synthpop} was used. Table~\ref{tab:res20k1:4} reports results for coefficients $\bbeta^*_{1:4}$. Since the MLE does not depend on $m$, both the MLE and its standard error (“MLE-SE”) appear only once. The table also presents empirical standard errors (“Syn-novel-emp-SE”), computed across repetitions, and estimated standard errors (“Syn-novel-est-SE”), representing the mean estimated SE from our variance estimator. The close agreement between empirical and estimated SEs, along with accurate coverage, demonstrates the reliability of our variance estimator. Although Syn-naive improves relative to Tables~\ref{tab:PoisCity1:10}--\ref{tab:LogisCity1:10}, substantial bias remains even at $n = m = 20{,}000$, whereas our method achieves lower SEs with $m = 500$ than Syn-naive does with $m = 20{,}000$. Additionally, Figure~\ref{fig:Histplots} presents overlaid density histograms comparing the MLE and Syn-novel distributions for varying $m$, showing increasing alignment as $m$ grows.
	
	A key takeaway from these results is that when $n$ is large, a moderately sized synthetic dataset often suffices. While $m = 200$ inflates the SE relative to the MLE-SE, choosing $m = 500$ or $m = 1000$ provides a good balance between synthetic data size and efficiency, and $m = 5000$ achieves nearly the same efficiency as the original MLE. Reducing $m$ is beneficial not only for computational and memory efficiency but also for privacy, as it lowers the chance that synthetic observations closely resemble real ones (see Section~\ref{sec:privacy}). Complete results for the remaining 16 coefficients are reported in Tables~\ref{tab:res20k5:8}--\ref{tab:res20k17:20} in Appendix~\ref{appen:sim}.
	
	To assess whether Assumption~\ref{as:biasNorm} holds empirically, we examined Poisson regression under Settings~A and B and logistic regression under Setting~B, using $n$ values ranging from 200 to 20,000 (logistic regression under Setting~A was not tested, as it is expected to behave well). In each case, we generated $m = 500{,}000$ synthetic observations, and approximated the expectation under the synthetic model with $\hat{\bbeta}$ plugged in. This was repeated 500 times to obtain realizations of the bias term $\E_{\hat{F}_n} \{\boldsymbol{\phi}_m (\hat{\bbeta})\}$. For computational reasons, \texttt{synthpop} was used in these experiments. We evaluated two aspects: root-$n$ scaling and normality. Figure~\ref{fig:MSEplots} shows the mean squared error (MSE) of the bias terms, separately for each covariate (grey lines) and averaged (black line), multiplied by $n$. Ideally, the scaled MSE should stabilize as $n$ increases. For Poisson regression in Setting~A and logistic regression in Setting~B, the MSE stabilizes and remains extremely small even after scaling. In Poisson regression under Setting~B, some covariates (those exponentially distributed) show an initial decrease followed by an increase, suggesting a possible violation of Assumption~\ref{as:biasNorm}. Nevertheless, the scaled MSE remains below 1 at $n = 20{,}000$, and estimation remains accurate, demonstrating robustness.

	For normality, we used Chi-squared Q-Q plots from the {\tt heplots} package in {\tt R} \citep{heplots}, and the results are shown in Figure~\ref{fig:CQplots} in Appendix~\ref{appen:sim}. In Poisson regression under Setting~A, normality appears reasonable at $n = 5{,}000$ and holds at $n = 20{,}000$. In logistic regression under Setting~B, normality is not satisfied at $n = 5{,}000$ but holds at $n = 20{,}000$. In Poisson regression under Setting~B, normality is not satisfied even at $n = 20{,}000$, indicating a violation of Assumption~\ref{as:biasNorm}. Overall, Assumption~\ref{as:biasNorm} appears reasonable for logistic regression and for Poisson regression in Setting~A but may not hold in Poisson regression under Setting~B.
	
	In summary, the simulations highlight the robustness and effectiveness of our method, even when certain assumptions are violated, as demonstrated by its performance in Poisson regression under Setting~B. They also underscore the substantial potential of incorporating summary statistics into inference based on synthetic data when such statistics can be shared.
	
	\begin{table}[ht]
		\spacingset{1}
		\centering 
		\begin{tabular}{lcccccccccc} 
			\hline
			$\bbeta^*$ (Setting A)  & 0.500 & 0.900 & -0.700 & -0.500 & 1.000 & -0.700 & -0.200 & 0.000 & 0.000 & 0.000 \\  
			\hline
			$n = m = 200$ &&&&&&&&&&\\
			MLE & 0.469 & 0.881 & -0.693 & -0.511 & 1.012 & -0.687 & -0.181 & 0.017 & -0.001 & -0.001 \\ 
			Syn-novel & 0.469 & 0.899 & -0.705 & -0.517 & 1.037 & -0.698 & -0.185 & 0.020 & -0.001 & -0.003 \\ 
			Syn-naive & 0.458 & 0.228 & -0.182 & -0.146 & 0.277 & -0.204 & -0.058 & -0.003 & 0.004 & 0.010 \\ 
			MLE-SE & 0.148 & 0.279 & 0.275 & 0.282 & 0.286 & 0.274 & 0.262 & 0.280 & 0.280 & 0.283 \\ 
			Syn-novel-SE & 0.151 & 0.293 & 0.288 & 0.290 & 0.308 & 0.293 & 0.270 & 0.292 & 0.287 & 0.297 \\ 
			Syn-naive-SE & 0.344 & 0.380 & 0.399 & 0.390 & 0.403 & 0.368 & 0.408 & 0.403 & 0.395 & 0.409 \\ 
			Coverage-MLE & 0.968 & 0.964 & 0.954 & 0.952 & 0.948 & 0.956 & 0.972 & 0.952 & 0.948 & 0.948 \\ 
			Coverage-novel & 0.974 & 0.970 & 0.962 & 0.952 & 0.952 & 0.958 & 0.972 & 0.958 & 0.958 & 0.946 \\ 
			\hline
			$n = m = 500$ &&&&&&&&&&\\
			MLE & 0.489 & 0.902 & -0.701 & -0.496 & 1.010 & -0.711 & -0.198 & -0.009 & -0.003 & 0.004 \\ 
			Syn-novel & 0.489 & 0.906 & -0.706 & -0.496 & 1.015 & -0.714 & -0.198 & -0.008 & -0.003 & 0.005 \\ 
			Syn-naive & 0.508 & 0.227 & -0.205 & -0.155 & 0.256 & -0.189 & -0.064 & 0.001 & -0.004 & -0.012 \\ 
			MLE-SE & 0.096 & 0.174 & 0.166 & 0.164 & 0.166 & 0.177 & 0.169 & 0.162 & 0.167 & 0.173 \\ 
			Syn-novel-SE & 0.097 & 0.184 & 0.171 & 0.170 & 0.171 & 0.181 & 0.170 & 0.161 & 0.172 & 0.177 \\ 
			Syn-naive-SE & 0.209 & 0.235 & 0.246 & 0.244 & 0.244 & 0.238 & 0.256 & 0.237 & 0.244 & 0.252 \\ 
			Coverage-MLE & 0.950 & 0.954 & 0.944 & 0.956 & 0.964 & 0.936 & 0.960 & 0.966 & 0.948 & 0.948 \\ 
			Coverage-novel & 0.944 & 0.946 & 0.956 & 0.960 & 0.966 & 0.942 & 0.958 & 0.966 & 0.952 & 0.956 \\ 
			\hline
			$\bbeta^*$ (Setting B) & 0.150 & 0.100 & 0.150 & 0.200 & -0.100 & 0.100 & 0.200 & -0.100 & -0.100 & 0.100 \\  
			\hline
			$n = m = 200$ &&&&&&&&&&\\
			MLE & 0.137 & 0.095 & 0.144 & 0.204 & -0.102 & 0.101 & 0.199 & -0.098 & -0.099 & 0.101 \\ 
			Syn-novel & 0.109 & 0.094 & 0.155 & 0.218 & -0.105 & 0.104 & 0.203 & -0.098 & -0.101 & 0.099 \\ 
			Syn-naive & 0.374 & 0.017 & 0.046 & 0.072 & -0.050 & 0.047 & 0.046 & -0.006 & -0.025 & 0.060 \\ 
			MLE-SE & 0.168 & 0.065 & 0.087 & 0.051 & 0.055 & 0.057 & 0.070 & 0.057 & 0.056 & 0.056 \\ 
			Syn-novel-SE & 0.179 & 0.068 & 0.090 & 0.060 & 0.060 & 0.060 & 0.075 & 0.060 & 0.061 & 0.055 \\ 
			Syn-naive-SE & 0.293 & 0.092 & 0.100 & 0.091 & 0.104 & 0.105 & 0.125 & 0.090 & 0.090 & 0.081 \\ 
			Coverage-MLE & 0.948 & 0.948 & 0.954 & 0.932 & 0.950 & 0.942 & 0.962 & 0.944 & 0.950 & 0.956 \\ 
			Coverage-novel & 0.940 & 0.960 & 0.956 & 0.920 & 0.940 & 0.950 & 0.968 & 0.944 & 0.946 & 0.966 \\ 
			\hline
			$n = m = 500$ &&&&&&&&&&\\
			MLE & 0.144 & 0.102 & 0.147 & 0.199 & -0.100 & 0.100 & 0.201 & -0.101 & -0.102 & 0.102 \\ 
			Syn-novel & 0.130 & 0.101 & 0.151 & 0.207 & -0.101 & 0.100 & 0.202 & -0.099 & -0.104 & 0.101 \\ 
			Syn-naive & 0.234 & 0.013 & 0.063 & 0.096 & -0.065 & 0.055 & 0.063 & -0.019 & -0.030 & 0.086 \\ 
			MLE-SE & 0.094 & 0.043 & 0.050 & 0.029 & 0.035 & 0.035 & 0.044 & 0.033 & 0.034 & 0.034 \\ 
			Syn-novel-SE & 0.100 & 0.043 & 0.048 & 0.035 & 0.037 & 0.036 & 0.047 & 0.034 & 0.036 & 0.031 \\ 
			Syn-naive-SE & 0.253 & 0.063 & 0.065 & 0.065 & 0.067 & 0.068 & 0.083 & 0.060 & 0.069 & 0.052 \\ 
			Coverage-MLE & 0.958 & 0.930 & 0.954 & 0.942 & 0.940 & 0.952 & 0.962 & 0.954 & 0.952 & 0.944 \\ 
			Coverage-novel & 0.952 & 0.934 & 0.958 & 0.908 & 0.930 & 0.952 & 0.964 & 0.952 & 0.942 & 0.964 \\ 
			\hline
		\end{tabular}
		\caption{Simulation results for coefficients $\bbeta^*_{1:10}$ in a Poisson regression under Settings A and B,   based on sample sizes of $n = 200$ and $n = 500$, using the ADS-GAN synthetic model and $m=n$. Results are based on 500 repetitions.} \label{tab:PoisCity1:10}
	\end{table}

	\begin{table}[ht]
		\spacingset{1}
		\centering
		\begin{tabular}{lcccccccccc}
			\hline
			$\bbeta^*$ (Setting A) & 0.500 & 0.900 & -0.700 & -0.500 & 1.000 & -0.700 & -0.200 & 0.000 & 0.000 & 0.000 \\ 
			\hline
			$n = m = 200$ &&&&&&&&&&\\
			MLE & 0.563 & 1.018 & -0.777 & -0.598 & 1.093 & -0.804 & -0.262 & -0.012 & -0.014 & -0.022 \\ 
			Syn-novel & 0.566 & 1.028 & -0.782 & -0.606 & 1.107 & -0.815 & -0.264 & -0.019 & -0.019 & -0.025 \\ 
			Syn-naive & 0.765 & 0.662 & -0.588 & -0.300 & 0.693 & -0.383 & -0.218 & -0.005 & 0.016 & -0.100 \\ 
			MLE-SE & 0.426 & 0.783 & 0.784 & 0.836 & 0.889 & 0.793 & 0.787 & 0.819 & 0.837 & 0.805 \\ 
			Syn-novel-SE & 0.424 & 0.794 & 0.799 & 0.844 & 0.907 & 0.809 & 0.799 & 0.834 & 0.852 & 0.815 \\ 
			Syn-naive-SE & 1.065 & 1.419 & 1.368 & 1.437 & 1.407 & 1.397 & 1.419 & 1.408 & 1.405 & 1.389 \\ 
			Coverage-MLE & 0.956 & 0.956 & 0.948 & 0.944 & 0.916 & 0.946 & 0.960 & 0.930 & 0.944 & 0.948 \\ 
			Coverage-novel & 0.958 & 0.956 & 0.952 & 0.946 & 0.918 & 0.948 & 0.964 & 0.942 & 0.942 & 0.950 \\ 
			\hline
			$n = m = 500$ &&&&&&&&&&\\
			MLE & 0.519 & 0.939 & -0.753 & -0.513 & 1.058 & -0.719 & -0.233 & 0.021 & 0.010 & 0.005 \\ 
			Syn-novel & 0.519 & 0.943 & -0.756 & -0.517 & 1.062 & -0.723 & -0.234 & 0.022 & 0.012 & 0.003 \\ 
			Syn-naive & 0.421 & 0.556 & -0.387 & -0.294 & 0.599 & -0.438 & -0.129 & -0.008 & 0.025 & -0.024 \\ 
			MLE-SE & 0.263 & 0.485 & 0.462 & 0.487 & 0.439 & 0.493 & 0.461 & 0.455 & 0.459 & 0.464 \\ 
			Syn-novel-SE & 0.262 & 0.488 & 0.465 & 0.491 & 0.442 & 0.495 & 0.463 & 0.456 & 0.462 & 0.466 \\ 
			Syn-naive-SE & 0.554 & 0.638 & 0.689 & 0.666 & 0.679 & 0.706 & 0.628 & 0.676 & 0.672 & 0.663 \\ 
			Coverage-MLE & 0.940 & 0.934 & 0.930 & 0.936 & 0.952 & 0.926 & 0.952 & 0.948 & 0.946 & 0.934 \\ 
			Coverage-novel & 0.942 & 0.936 & 0.928 & 0.936 & 0.950 & 0.926 & 0.950 & 0.950 & 0.948 & 0.934 \\ 
			\hline
			$\bbeta^*$ (Setting B) & 0.150 & 0.100 & 0.150 & 0.200 & -0.100 & 0.100 & 0.200 & -0.100 & -0.100 & 0.100 \\ 
			\hline
			$n = m = 200$ &&&&&&&&&&\\
			MLE & 0.127 & 0.122 & 0.192 & 0.230 & -0.114 & 0.112 & 0.235 & -0.126 & -0.113 & 0.111 \\ 
			Syn-novel & 0.136 & 0.122 & 0.191 & 0.226 & -0.114 & 0.113 & 0.236 & -0.126 & -0.114 & 0.113 \\ 
			Syn-naive & 0.362 & 0.095 & 0.153 & 0.176 & -0.115 & 0.121 & 0.154 & -0.087 & -0.108 & 0.173 \\ 
			MLE-SE & 0.495 & 0.218 & 0.295 & 0.202 & 0.175 & 0.185 & 0.218 & 0.186 & 0.175 & 0.176 \\ 
			Syn-novel-SE & 0.493 & 0.219 & 0.305 & 0.194 & 0.175 & 0.187 & 0.220 & 0.189 & 0.176 & 0.184 \\ 
			Syn-naive-SE & 1.236 & 0.452 & 0.455 & 0.440 & 0.525 & 0.503 & 0.530 & 0.444 & 0.436 & 0.373 \\ 
			Coverage-MLE & 0.940 & 0.934 & 0.926 & 0.932 & 0.940 & 0.934 & 0.956 & 0.936 & 0.934 & 0.940 \\ 
			Coverage-novel & 0.948 & 0.946 & 0.938 & 0.956 & 0.946 & 0.942 & 0.962 & 0.946 & 0.944 & 0.948 \\ 
			\hline
			$n = m = 500$ &&&&&&&&&&\\
			MLE & 0.158 & 0.101 & 0.141 & 0.209 & -0.103 & 0.098 & 0.207 & -0.096 & -0.103 & 0.110 \\ 
			Syn-novel & 0.161 & 0.103 & 0.140 & 0.207 & -0.103 & 0.097 & 0.207 & -0.097 & -0.103 & 0.111 \\ 
			Syn-naive & 0.256 & 0.026 & 0.085 & 0.121 & -0.109 & 0.072 & 0.137 & -0.048 & -0.059 & 0.116 \\ 
			MLE-SE & 0.277 & 0.123 & 0.167 & 0.110 & 0.105 & 0.101 & 0.138 & 0.097 & 0.099 & 0.105 \\ 
			Syn-novel-SE & 0.275 & 0.124 & 0.172 & 0.108 & 0.105 & 0.101 & 0.137 & 0.098 & 0.098 & 0.109 \\ 
			Syn-naive-SE & 0.564 & 0.187 & 0.195 & 0.191 & 0.222 & 0.215 & 0.239 & 0.177 & 0.174 & 0.166 \\ 
			Coverage-MLE & 0.966 & 0.942 & 0.928 & 0.956 & 0.932 & 0.956 & 0.952 & 0.960 & 0.954 & 0.932 \\ 
			Coverage-novel & 0.964 & 0.936 & 0.930 & 0.960 & 0.934 & 0.958 & 0.948 & 0.958 & 0.962 & 0.934 \\ 
			\hline
		\end{tabular}
		\caption{Simulation results for coefficients $\bbeta^*_{1:10}$ in a Logistic regression under Settings A and B, based on sample sizes of $n = 200$ and $n = 500$, using the ADS-GAN synthetic model and $m=n$. Results are based on 500 repetitions.} \label{tab:LogisCity1:10}
	\end{table}

	\begin{table}[ht]
		\spacingset{1}
		\centering
		\resizebox{\textwidth}{!}{%
			\begin{tabular}{l|ccccc|ccccc}
				\hline
				$n = 20{,}000$	& \multicolumn{5}{c|}{$\bbeta^*_1=0.5$} & \multicolumn{5}{c}{$\bbeta^*_2=0.9$} \\
				m & 200 & 500 & 1000 & 5000 & 20000 & 200 & 500 & 1000 & 5000 & 20000 \\
				\hline
				MLE 
				&  &  & 0.499 &  &  
				&  &  & 0.899 &  &  \\
				Syn-novel 
				& 0.499 & 0.499 & 0.499 & 0.499 & 0.499 
				& 0.915 & 0.905 & 0.903 & 0.900 & 0.900 \\
				Syn-naive 
				& 0.437 & 0.480 & 0.493 & 0.507 & 0.512 
				& 0.808 & 0.809 & 0.808 & 0.819 & 0.822 \\
				MLE-SE 
				&  &  & 0.015 &  &  
				&  &  & 0.026 &  &  \\
				Syn-novel-emp-SE 
				& 0.020 & 0.017 & 0.016 & 0.015 & 0.015 
				& 0.047 & 0.036 & 0.032 & 0.027 & 0.026 \\
				Syn-novel-est-SE 
				& 0.020 & 0.017 & 0.016 & 0.015 & 0.015 
				& 0.046 & 0.036 & 0.032 & 0.028 & 0.027 \\
				Syn-naive-SE 
				& 0.173 & 0.110 & 0.074 & 0.034 & 0.024 
				& 0.287 & 0.171 & 0.125 & 0.061 & 0.042 \\
				Coverage-MLE & & &0.954 & & & & &0.960  & & \\
				Coverage 
				& 0.952 & 0.946 & 0.944 & 0.942 & 0.956 
				& 0.950 & 0.962 & 0.954 & 0.962 & 0.960 \\
				\hline
				\multicolumn{11}{c}{} \\ 
				\hline
				$n = 20{,}000$	& \multicolumn{5}{c|}{$\bbeta^*_3=-0.7$} & \multicolumn{5}{c}{$\bbeta^*_4=-0.5$} \\
				m & 200 & 500 & 1000 & 5000 & 20000 & 200 & 500 & 1000 & 5000 & 20000 \\
				\hline
				MLE 
				&  &  & -0.701 &  &  
				&  &  & -0.501 &  &  \\
				Syn-novel 
				& -0.711 & -0.705 & -0.703 & -0.701 & -0.701 
				& -0.507 & -0.503 & -0.502 & -0.501 & -0.501 \\
				Syn-naive 
				& -0.654 & -0.654 & -0.650 & -0.653 & -0.653 
				& -0.428 & -0.432 & -0.439 & -0.430 & -0.432 \\
				MLE-SE 
				&  &  & 0.025 &  &  
				&  &  & 0.026 &  &  \\
				Syn-novel-emp-SE 
				& 0.043 & 0.035 & 0.031 & 0.026 & 0.026 
				& 0.045 & 0.035 & 0.031 & 0.026 & 0.026 \\
				Syn-novel-est-SE 
				& 0.043 & 0.034 & 0.031 & 0.027 & 0.027 
				& 0.043 & 0.034 & 0.031 & 0.027 & 0.027 \\
				Syn-naive-SE 
				& 0.300 & 0.177 & 0.131 & 0.063 & 0.043 
				& 0.295 & 0.187 & 0.130 & 0.064 & 0.044 \\
				Coverage-MLE & & &0.960  & & & & &0.954  & & \\
				Coverage 
				& 0.940 & 0.934 & 0.952 & 0.954 & 0.956 
				& 0.914 & 0.954 & 0.950 & 0.964 & 0.960 \\
				\hline
			\end{tabular}%
		}
		\caption{Simulation results for the coefficients $\bbeta^*_{1:4}$ in a Poisson regression under Setting A, with a fixed sample size of $n = 20{,}000$ and varying $m$ values, using the \texttt{synthpop} synthetic model. Results are based on 500 repetitions. Since the MLE is invariant to $m$, only a single value is shown.}
		\label{tab:res20k1:4}
	\end{table}

	\begin{landscape}  
		\begin{figure}[htbp]
			\centering
			\resizebox{1.25\textwidth}{!}{  
				\includegraphics{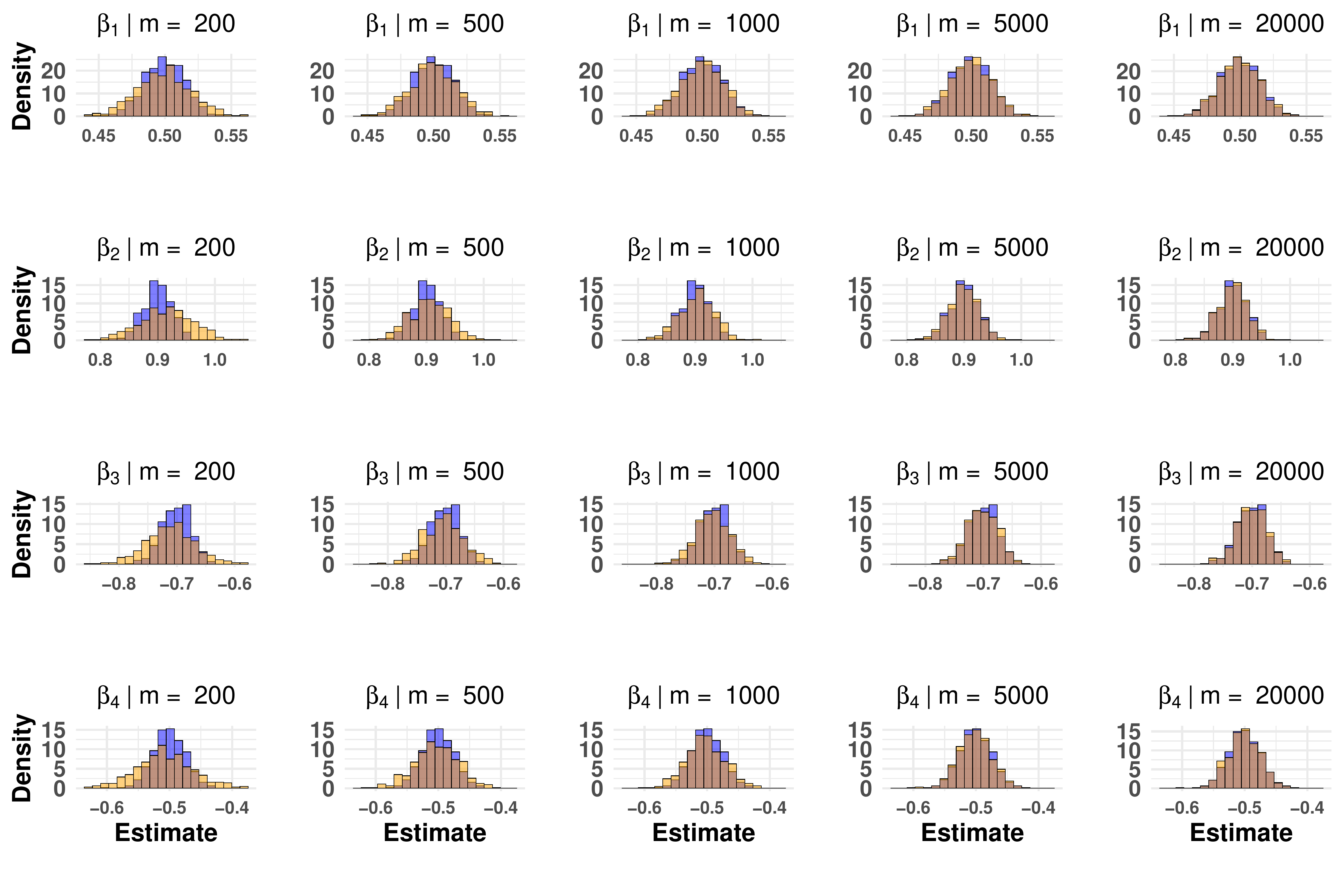}
			}
			\caption{Overlaid density histograms comparing the MLE (\textcolor{blue}{blue}) and Syn-novel (\textcolor{orange}{orange}) methods for estimating the coefficients $\bbeta^*_{1:4}$ in a Poisson regression under Setting A. The results are based on 500 repetitions with $n = 20{,}000$ and varying $m$ values. 	\label{fig:Histplots}}
			
		\end{figure}
	\end{landscape}

	\begin{figure}[h]
		\centering
		\begin{subfigure}[b]{0.35\textwidth}  
			\includegraphics[width=\textwidth]{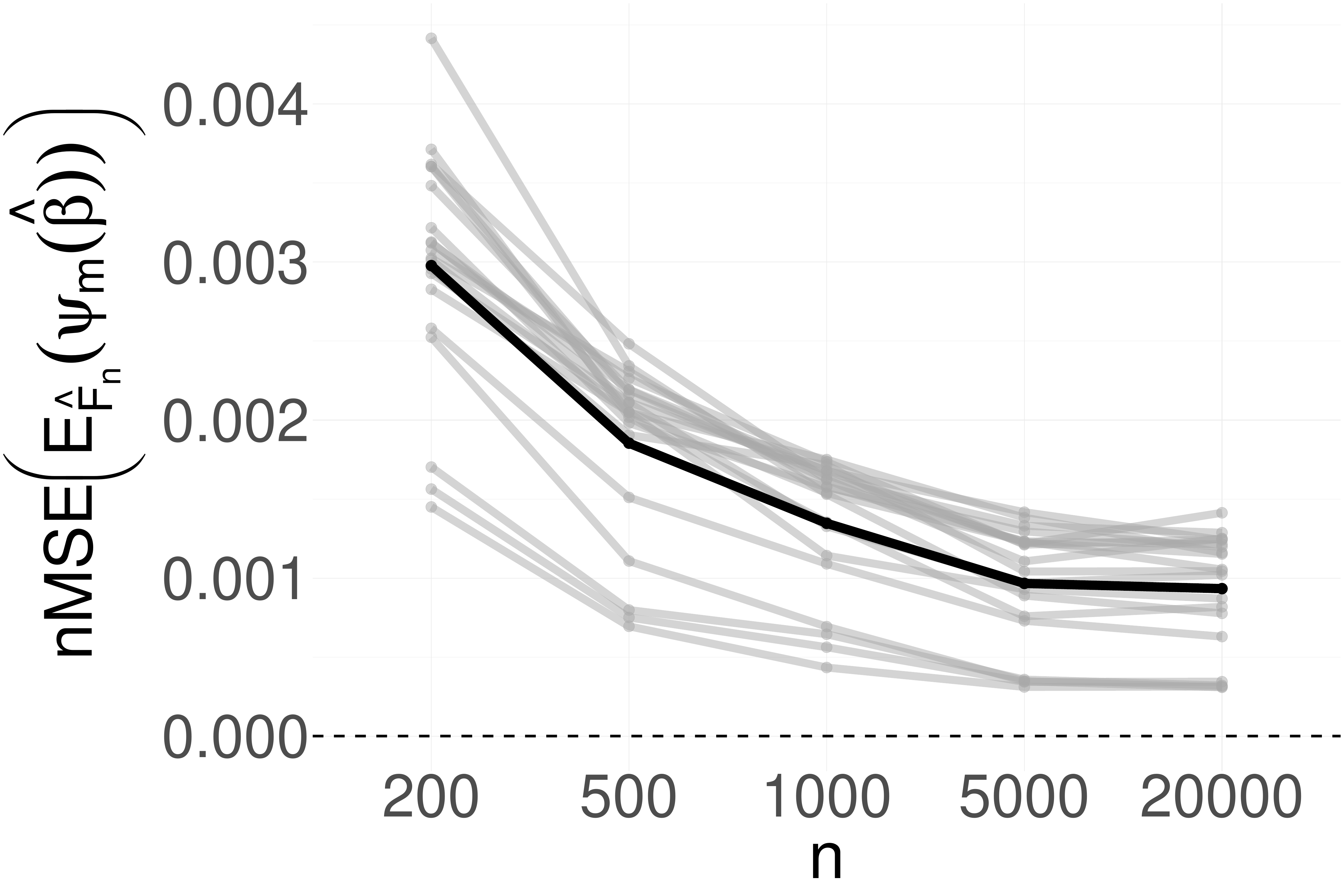}
			\caption{Poisson Regression--A}
		\end{subfigure}
		\begin{subfigure}[b]{0.31\textwidth}  
			\includegraphics[width=\textwidth, clip=TRUE, trim=8cm 0cm 0cm 0cm]{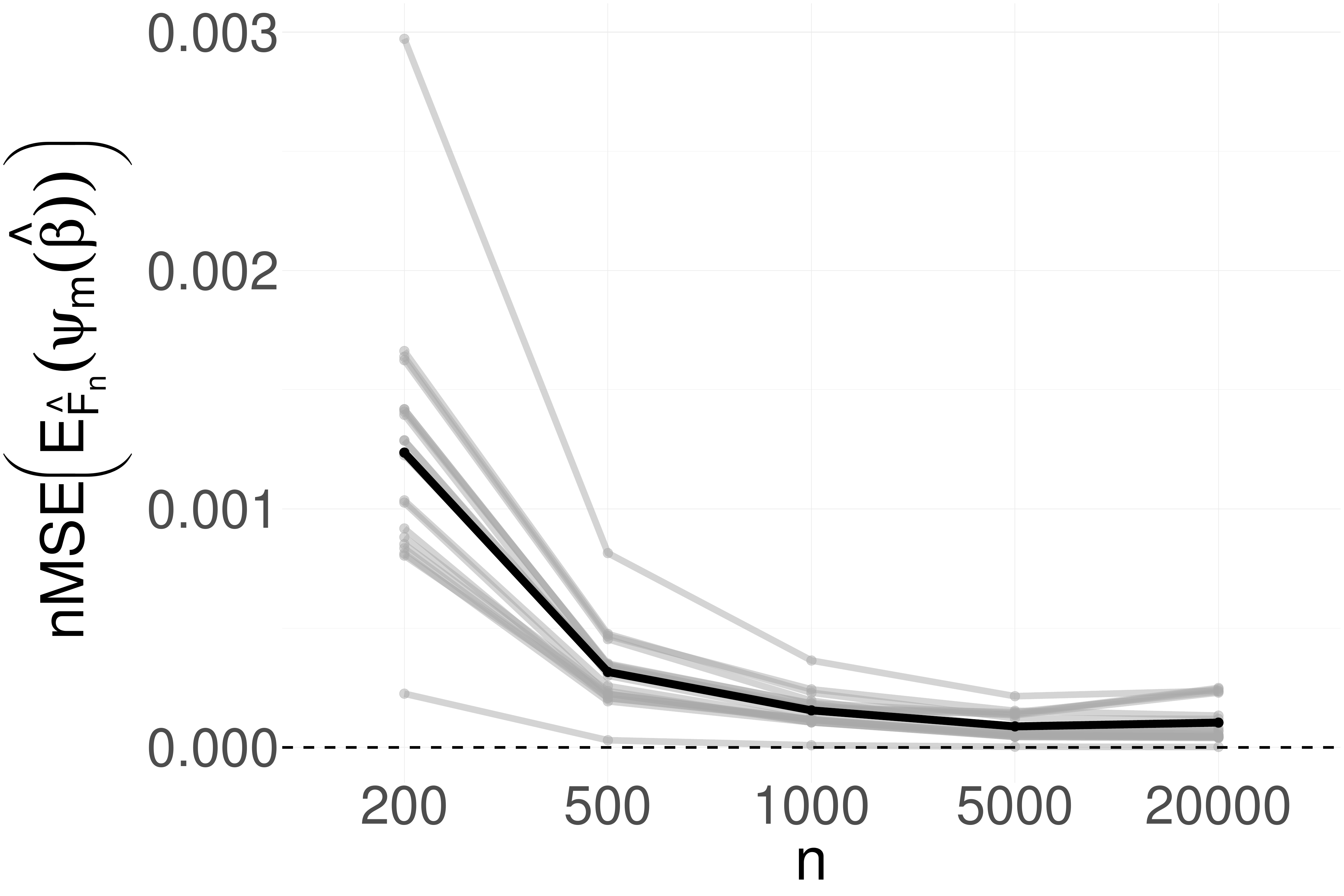}
			\caption{Logistic Regression--B}
		\end{subfigure}
		\begin{subfigure}[b]{0.31\textwidth}
			\includegraphics[width=\textwidth, clip=TRUE, trim=8cm 0cm 0cm 0cm]{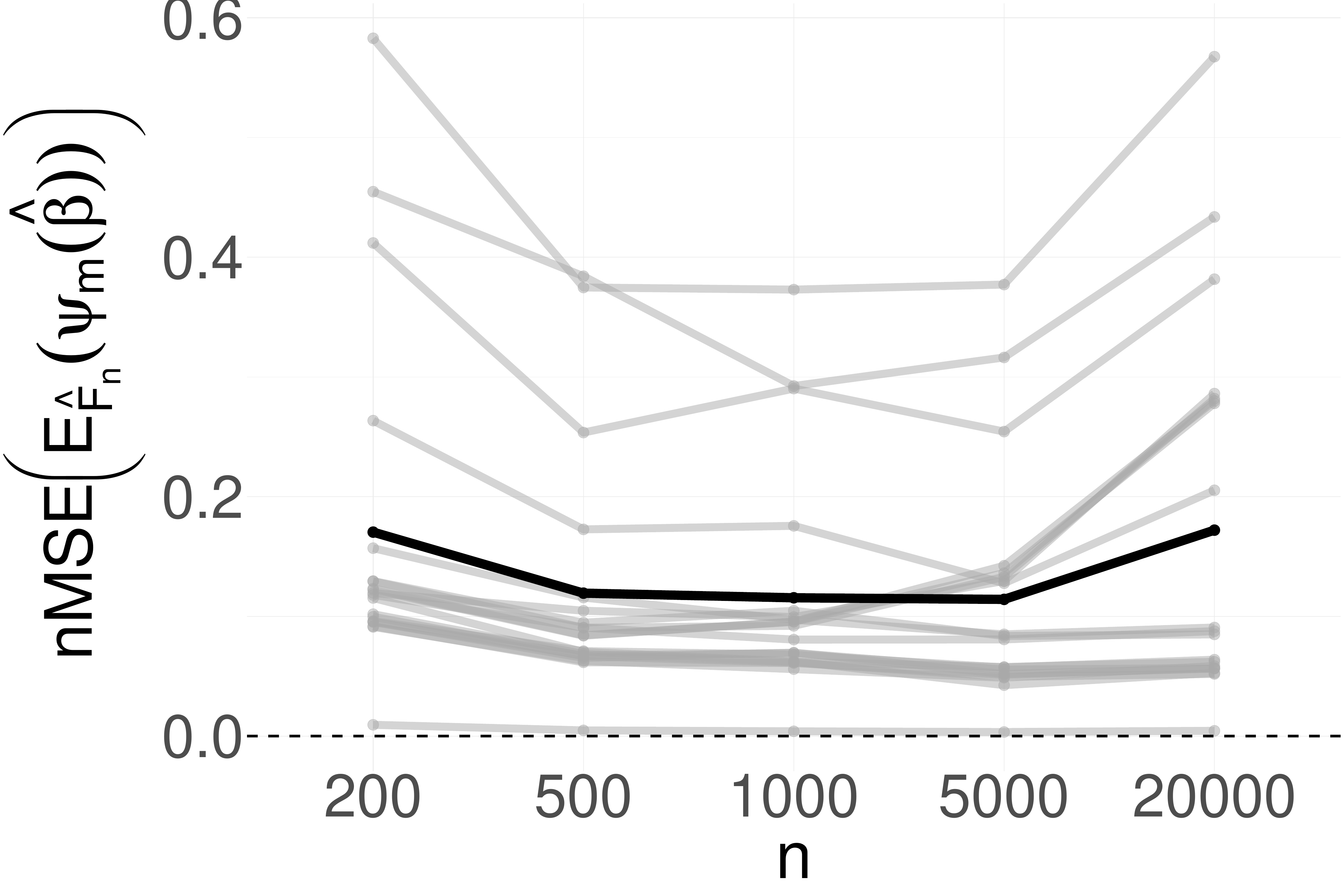}
			\caption{Poisson Regression--B}
		\end{subfigure}
		
		\caption{Empirical scaled MSE plots for the ``estimating equation's bias," as defined in Assumption~\ref{as:biasNorm}, across different values of $n$, based on 500 repetitions. Grey lines represent the MSE of the 20 covariates, while the black line denotes the average MSE across all covariates.		\label{fig:MSEplots}}
	\end{figure}

	\FloatBarrier
	
	\section{Real Data Analysis} \label{sec:dataAnalysis}
	
	To further illustrate our method, we use the 2015 U.S.\ Linked Birth/Infant Death Cohort Data \citep{nchs2015linked}, which includes all live births in the U.S.\ that year, linked to death records indicating one-year infant mortality. The dataset contains de-identified demographic, socioeconomic, and medical information on infants and their parents. We analyze the association between socio-demographic factors and one-year infant mortality under a hypothetical synthetic data release, evaluating whether inference aligns with that based on the original public-use file.

	We consider the following covariates in the analysis. Maternal health variables include maternal age, body mass index (BMI), diabetes, chronic hypertension, pregnancy-associated hypertension, eclampsia, and smoking before and during pregnancy. Delivery-related covariates include facility type (hospital vs. other), birth attendant (doctor, midwife, other), labor induction, mode of delivery (forceps or vacuum), breech presentation, birth order, and plurality (singleton, twins, three or more). Socio-demographic variables include maternal and paternal race (white, black, other), education (less than high school, high school, more than high school), marital status, and participation in the Women, Infants, and Children (WIC) nutrition program. Infant sex is also included. Postnatal characteristics such as birth weight, Apgar score, and congenital anomalies such as Down syndrome, spina bifida, and anencephaly are excluded, as they may lie on the causal pathway and explain away part of the socio-demographic effects of interest.
	
	Tables~\ref{tab:realTable1Part1}--\ref{tab:realTable1Part2} in Appendix~\ref{appen:real} report descriptive statistics of the original and synthetic datasets, stratified by one-year mortality, where ADS-GAN was used for synthetic data generation. For some variables, the original and synthetic data align in sample proportions (for categorical variables) or means and standard deviations (for continuous variables), while for others, larger discrepancies are observed. Notably, death was not treated differently during synthetic data generation and was modeled like any other variable.
	
	We note several limitations in the synthetic data generation process. First, the original dataset contained about 4 million records, too large for ADS-GAN to process, so we subsampled to $n = 500{,}000$, keeping all deaths and sampling uniformly among survivors. This distorts the marginal prevalence of death but affects only the intercept, which we did not adjust for in synthetic data generation. We treat this subsample as the effective original data. Second, missing values were imputed using the \texttt{mice} package in \texttt{R} \citep{van2011mice}, treating the completed data as ground truth. Incorporating uncertainty via multiple imputation is beyond the scope of this paper. Third, the dataset was inherently messy. Due to staggered adoption of the 2003 revision of the U.S. Standard Certificate of Live Birth, some variables appeared only in revised states, while others had multiple forms, such as continuous versus grouped maternal age or detailed versus broad race categories. We consulted the public-use data guide and retained a harmonized subset of broadly available variables, excluding revision-specific and redundant encodings. In practice, such preprocessing (imputation, deduplication, harmonization) should precede synthetic data generation, as using raw or inconsistently coded sources risks embedding structural artifacts.
	
	Figure~\ref{fig:forestplot} presents logistic regression results for one-year infant mortality using the 2015 U.S.\ cohort. Estimates are shown for the original data, Syn-novel with $m = n = 500{,}000$ and $m = 50{,}000$, and Syn-naive. Unadjusted confidence intervals are based on estimated SEs. For Syn-naive, these SEs naively treat the synthetic data as real and typically underestimate true variability. Point estimates and SEs for all coefficients are provided in Table~\ref{tab:realAnalysis} in Appendix~\ref{appen:real}. Syn-novel closely matches the original MLE at both sample sizes, whereas Syn-naive exhibits clear bias, with some coefficients heavily attenuated or even reversed in sign.
	
	Substantively, several social and demographic factors show strong associations with one-year infant mortality. Lower parental education, unmarried status, and maternal Black race are each linked to elevated risk, even after adjusting for a broad set of clinical covariates. Paternal Black race is also associated with increased risk, though to a lesser extent, likely reflecting shared structural and environmental disadvantages. Participation in WIC and hospital birth appear protective. Delivery by a midwife is associated with lower mortality compared to doctor-attended births, likely reflecting case selection since midwives generally attend lower-risk deliveries. Clinical factors such as breech presentation, multiple births, chronic hypertension, diabetes, and eclampsia also show strong associations with mortality. These findings align with known disparities in maternal and infant health and underscore the importance of considering both social and clinical context in analyzing infant mortality.
	
	\begin{figure}[ht]
		\centering
		\includegraphics[width=0.8\textwidth]{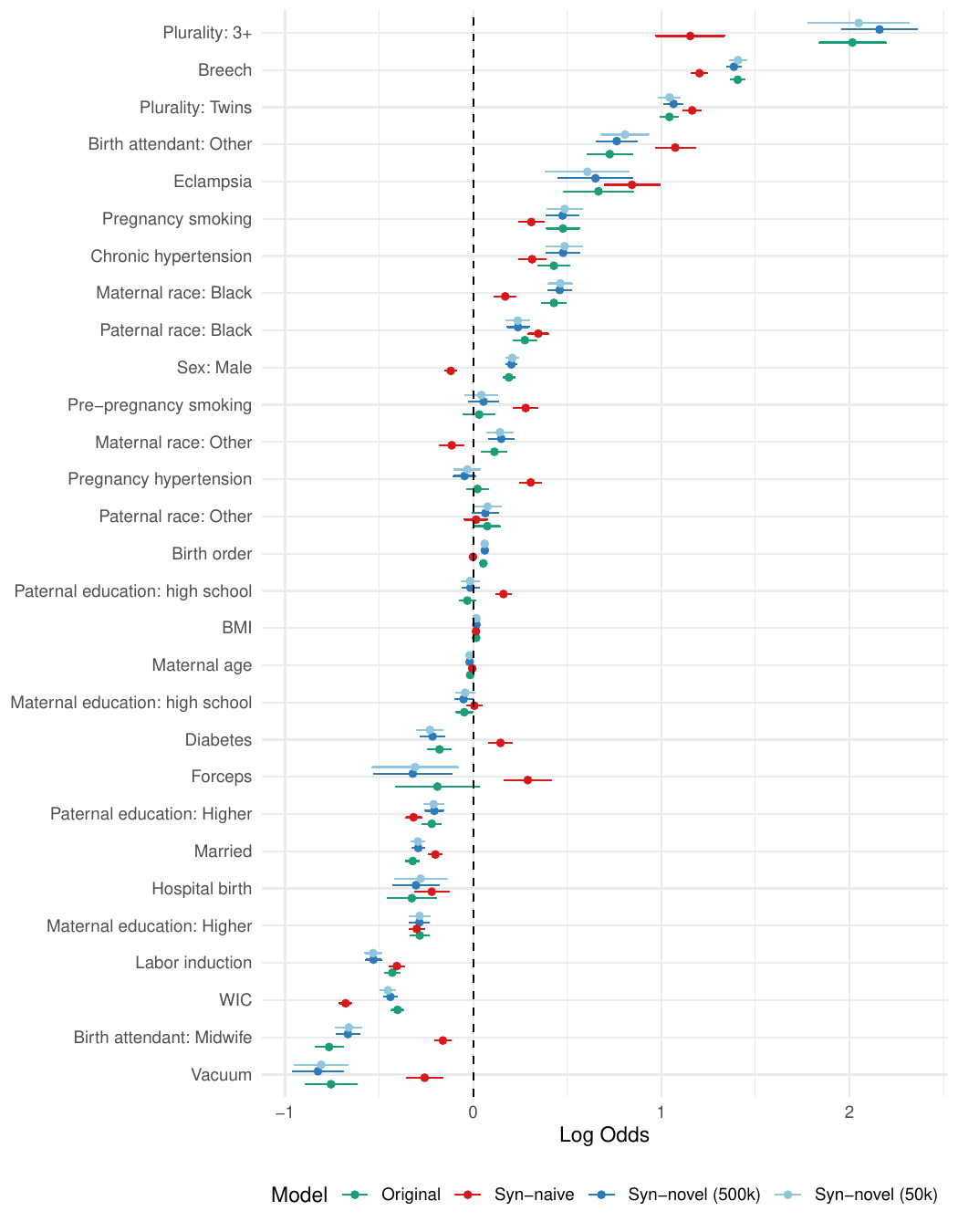}
		\caption{Forest plot comparing infant mortality logistic regression estimates across the different models. }
		\label{fig:forestplot}
	\end{figure}

	\section{Discussion} \label{sec:discussion}
	
	We introduced a novel framework for efficient inference using data generated from flexible generative models, leveraging additional summary statistics from the original dataset. When the number of observations $n$ greatly exceeds the number of covariates $p$, these statistics reveal limited information, keeping re-identification generally impossible. Augmenting the synthetic dataset with these summaries improves efficiency, allowing a smaller synthetic sample size, which reduces memory demands and lowers privacy risk.
	
	The proposed approach can also be applied to data compression beyond synthetic data generation. Instead of transmitting a full dataset of size $n$, one can transmit a subsample of size $m \ll n$ together with summary statistics from the full sample, enabling efficient inference for GLMs on the subsample. Since the subsample preserves the original data distribution, our theoretical results extend naturally to this setting.
	
	Our method currently assumes that analysis is conducted on the original variable scale defined by the data center and encoded in the Gram matrix. Future work should explore covariate transformations, interactions, extensions to additional models, or alternative summary statistics. For these directions, the data-fusion formulation in Eq.~(\ref{eq:dataFusion}) may provide a useful foundation. Additional directions include handling missing data in the original dataset during synthetic data generation and developing methods to account for the resulting variability in inference. Another avenue is to enhance privacy guarantees by adding noise to the released summary statistics to satisfy formal criteria such as differential privacy. Another limitation is that if the data center trains a single model for multiple users, it must cover a broad set of variables, which can reduce accuracy for researchers interested in only a subset.
	
	Finally, adapting our methodology to high-dimensional settings remains a significant challenge. The current framework assumes $n \gg p$, whereas many modern applications, such as genome-wide association studies, involve $p$ comparable to or exceeding $n$. In such settings, new theoretical developments would be required to establish valid inference, and practical concerns become more severe. The Gram matrix becomes large, making it computationally burdensome to share and potentially more revealing of sensitive information.

	\FloatBarrier
	\bibliographystyle{chicago}
	\bibliography{library}

\begin{thebibliography}{}

\bibitem[\protect\citeauthoryear{{AmstatNews}}{{AmstatNews}}{2025}]{ASA2025privacy}
{AmstatNews} (2025).
\newblock Protecting privacy while preserving the utility of federal
  statistics.
\newblock
  \url{https://magazine.amstat.org/blog/2025/07/01/protecting-privacy-preserving-federal-statistics/}.

\bibitem[\protect\citeauthoryear{Anderson, Ziolkowski, Kennedy, and
  Apon}{Anderson et~al.}{2022}]{anderson2022synthetic}
Anderson, J.~W., M.~Ziolkowski, K.~Kennedy, and A.~W. Apon (2022).
\newblock Synthetic image data for deep learning.
\newblock {\em arXiv preprint arXiv:2212.06232\/}.

\bibitem[\protect\citeauthoryear{Ankan and Textor}{Ankan and
  Textor}{2024}]{Ankan2024}
Ankan, A. and J.~Textor (2024).
\newblock pgmpy: A python toolkit for bayesian networks.
\newblock {\em Journal of Machine Learning Research\/}~{\em 25\/}(265), 1--8.

\bibitem[\protect\citeauthoryear{Awan and Cai}{Awan and
  Cai}{2020}]{awan2020one}
Awan, J. and Z.~Cai (2020).
\newblock One step to efficient synthetic data.
\newblock {\em arXiv preprint arXiv:2006.02397\/}.

\bibitem[\protect\citeauthoryear{Bagdasaryan, Poursaeed, and
  Shmatikov}{Bagdasaryan et~al.}{2019}]{bagdasaryan2019differential}
Bagdasaryan, E., O.~Poursaeed, and V.~Shmatikov (2019).
\newblock Differential privacy has disparate impact on model accuracy.
\newblock {\em Advances in neural information processing systems\/}~{\em 32}.

\bibitem[\protect\citeauthoryear{Bambauer, Muralidhar, and Sarathy}{Bambauer
  et~al.}{2013}]{bambauer2013fool}
Bambauer, J., K.~Muralidhar, and R.~Sarathy (2013).
\newblock Fool's gold: an illustrated critique of differential privacy.
\newblock {\em Vand. J. Ent. \& Tech. L.\/}~{\em 16}, 701.

\bibitem[\protect\citeauthoryear{Borisov, Se{\ss}ler, Leemann, Pawelczyk, and
  Kasneci}{Borisov et~al.}{2022}]{borisov2022language}
Borisov, V., K.~Se{\ss}ler, T.~Leemann, M.~Pawelczyk, and G.~Kasneci (2022).
\newblock Language models are realistic tabular data generators.
\newblock {\em arXiv preprint arXiv:2210.06280\/}.

\bibitem[\protect\citeauthoryear{Chatterjee, Chen, Maas, and
  Carroll}{Chatterjee et~al.}{2016}]{chatterjee2016constrained}
Chatterjee, N., Y.-H. Chen, P.~Maas, and R.~J. Carroll (2016).
\newblock Constrained maximum likelihood estimation for model calibration using
  summary-level information from external big data sources.
\newblock {\em Journal of the American Statistical Association\/}~{\em
  111\/}(513), 107--117.

\bibitem[\protect\citeauthoryear{Chiu}{Chiu}{2019}]{chiu2019understanding}
Chiu, T.-Y. (2019).
\newblock Understanding generalized whitening and coloring transform for
  universal style transfer.
\newblock In {\em Proceedings of the IEEE/CVF International Conference on
  Computer Vision}, pp.\  4452--4460.

\bibitem[\protect\citeauthoryear{Dahabreh, Robertson, Tchetgen, Stuart, and
  Hern{\'a}n}{Dahabreh et~al.}{2019}]{dahabreh2019generalizing}
Dahabreh, I.~J., S.~E. Robertson, E.~J. Tchetgen, E.~A. Stuart, and M.~A.
  Hern{\'a}n (2019).
\newblock Generalizing causal inferences from individuals in randomized trials
  to all trial-eligible individuals.
\newblock {\em Biometrics\/}~{\em 75\/}(2), 685--694.

\bibitem[\protect\citeauthoryear{Decruyenaere, Dehaene, Rabaey, Polet,
  Decruyenaere, Demeester, and Vansteelandt}{Decruyenaere
  et~al.}{2024}]{decruyenaere2024debiasing}
Decruyenaere, A., H.~Dehaene, P.~Rabaey, C.~Polet, J.~Decruyenaere,
  T.~Demeester, and S.~Vansteelandt (2024).
\newblock Debiasing synthetic data generated by deep generative models.
\newblock {\em arXiv preprint arXiv:2411.04216\/}.

\bibitem[\protect\citeauthoryear{Decruyenaere, Dehaene, Rabaey, Polet,
  Decruyenaere, Vansteelandt, and Demeester}{Decruyenaere
  et~al.}{2024}]{decruyenaerereal}
Decruyenaere, A., H.~Dehaene, P.~Rabaey, C.~Polet, J.~Decruyenaere,
  S.~Vansteelandt, and T.~Demeester (2024).
\newblock The real deal behind the artificial appeal: Inferential utility of
  tabular synthetic data.
\newblock In {\em Proceedings of the 40th Conference on Uncertainty in
  Artificial Intelligence}.

\bibitem[\protect\citeauthoryear{Domingo-Ferrer, S{\'a}nchez, and
  Blanco-Justicia}{Domingo-Ferrer et~al.}{2021}]{domingo2021limits}
Domingo-Ferrer, J., D.~S{\'a}nchez, and A.~Blanco-Justicia (2021).
\newblock The limits of differential privacy (and its misuse in data release
  and machine learning).
\newblock {\em Communications of the ACM\/}~{\em 64\/}(7), 33--35.

\bibitem[\protect\citeauthoryear{Drechsler}{Drechsler}{2011}]{drechsler2011synthetic}
Drechsler, J. (2011).
\newblock {\em Synthetic datasets for statistical disclosure control: theory
  and implementation}, Volume 201.
\newblock Springer Science \& Business Media.

\bibitem[\protect\citeauthoryear{Drechsler and Reiter}{Drechsler and
  Reiter}{2010}]{drechsler2010sampling}
Drechsler, J. and J.~P. Reiter (2010).
\newblock Sampling with synthesis: A new approach for releasing public use
  census microdata.
\newblock {\em Journal of the American Statistical Association\/}~{\em
  105\/}(492), 1347--1357.

\bibitem[\protect\citeauthoryear{Dwork}{Dwork}{2006}]{dwork2006differential}
Dwork, C. (2006).
\newblock Differential privacy.
\newblock In {\em International colloquium on automata, languages, and
  programming}, pp.\  1--12. Springer.

\bibitem[\protect\citeauthoryear{Fang, Long, Shao, and Wang}{Fang
  et~al.}{2025}]{fang2025integrated}
Fang, F., T.~Long, J.~Shao, and L.~Wang (2025).
\newblock An integrated gmm shrinkage approach with consistent moment selection
  from multiple external sources.
\newblock {\em Journal of Computational and Graphical Statistics\/}, 1--10.

\bibitem[\protect\citeauthoryear{Ferrando and Sheldon}{Ferrando and
  Sheldon}{2024}]{ferrando2024private}
Ferrando, C. and D.~Sheldon (2024).
\newblock Private regression via data-dependent sufficient statistic
  perturbation.
\newblock {\em arXiv preprint arXiv:2405.15002\/}.

\bibitem[\protect\citeauthoryear{Fredrikson, Lantz, Jha, Lin, Page, and
  Ristenpart}{Fredrikson et~al.}{2014}]{fredrikson2014privacy}
Fredrikson, M., E.~Lantz, S.~Jha, S.~Lin, D.~Page, and T.~Ristenpart (2014).
\newblock Privacy in pharmacogenetics: An $\{$End-to-End$\}$ case study of
  personalized warfarin dosing.
\newblock In {\em 23rd USENIX security symposium (USENIX Security 14)}, pp.\
  17--32.

\bibitem[\protect\citeauthoryear{Friendly, Fox, and Monette}{Friendly
  et~al.}{2024}]{heplots}
Friendly, M., J.~Fox, and G.~Monette (2024).
\newblock {\em {heplots}: Visualizing Tests in Multivariate Linear Models}.
\newblock R package version 1.7.3.

\bibitem[\protect\citeauthoryear{Ghalebikesabi, Wilde, Jewson, Doucet, Vollmer,
  and Holmes}{Ghalebikesabi et~al.}{2022}]{ghalebikesabi2022mitigating}
Ghalebikesabi, S., H.~Wilde, J.~Jewson, A.~Doucet, S.~Vollmer, and C.~Holmes
  (2022).
\newblock Mitigating statistical bias within differentially private synthetic
  data.
\newblock In {\em Uncertainty in Artificial Intelligence}, pp.\  696--705.
  PMLR.

\bibitem[\protect\citeauthoryear{Han and Lawless}{Han and
  Lawless}{2019}]{han2019empirical}
Han, P. and J.~F. Lawless (2019).
\newblock Empirical likelihood estimation using auxiliary summary information
  with different covariate distributions.
\newblock {\em Statistica Sinica\/}~{\em 29\/}(3), 1321--1342.

\bibitem[\protect\citeauthoryear{Herrett, Gallagher, Bhaskaran, Forbes, Mathur,
  Van~Staa, and Smeeth}{Herrett et~al.}{2015}]{herrett2015data}
Herrett, E., A.~M. Gallagher, K.~Bhaskaran, H.~Forbes, R.~Mathur, T.~Van~Staa,
  and L.~Smeeth (2015).
\newblock Data resource profile: clinical practice research datalink (cprd).
\newblock {\em International journal of epidemiology\/}~{\em 44\/}(3),
  827--836.

\bibitem[\protect\citeauthoryear{Hu, Wang, Li, and Miao}{Hu
  et~al.}{2022}]{hu2022semiparametric}
Hu, W., R.~Wang, W.~Li, and W.~Miao (2022).
\newblock Semiparametric efficient fusion of individual data and summary
  statistics.
\newblock {\em arXiv preprint arXiv:2210.00200\/}.

\bibitem[\protect\citeauthoryear{Jiang, Raftery, Steele, and Wang}{Jiang
  et~al.}{2021}]{jiang2021balancing}
Jiang, B., A.~E. Raftery, R.~J. Steele, and N.~Wang (2021).
\newblock Balancing inferential integrity and disclosure risk via model
  targeted masking and multiple imputation.
\newblock {\em Journal of the American Statistical Association\/}~{\em
  117\/}(537), 52--66.

\bibitem[\protect\citeauthoryear{Johnson, Pollard, Shen, Lehman, Feng,
  Ghassemi, Moody, Szolovits, Celi, and Mark}{Johnson
  et~al.}{2016}]{johnson2016mimic}
Johnson, A., T.~J. Pollard, L.~Shen, L.-w.~H. Lehman, M.~Feng, M.~Ghassemi,
  B.~Moody, P.~Szolovits, L.~A. Celi, and R.~G. Mark (2016).
\newblock Mimic-iii, a freely accessible critical care database, sci.
\newblock {\em Data\/}~{\em 3\/}(1), 1--9.

\bibitem[\protect\citeauthoryear{Jonker, Pazira, and Coolen}{Jonker
  et~al.}{2025}]{jonker2025bayesian}
Jonker, M.~A., H.~Pazira, and A.~C. Coolen (2025).
\newblock Bayesian federated inference for regression models based on
  non-shared medical center data.
\newblock {\em Research Synthesis Methods\/}~{\em 16\/}(2), 383--423.

\bibitem[\protect\citeauthoryear{Kollo and von Rosen}{Kollo and von
  Rosen}{1995}]{kollo1995approximating}
Kollo, T. and D.~von Rosen (1995).
\newblock Approximating by the wishart distribution.
\newblock {\em Annals of the Institute of Statistical Mathematics\/}~{\em 47},
  767--783.

\bibitem[\protect\citeauthoryear{Kotelnikov, Baranchuk, Rubachev, and
  Babenko}{Kotelnikov et~al.}{2023}]{kotelnikov2023tabddpm}
Kotelnikov, A., D.~Baranchuk, I.~Rubachev, and A.~Babenko (2023).
\newblock Tabddpm: Modelling tabular data with diffusion models.
\newblock In {\em International Conference on Machine Learning}, pp.\
  17564--17579. PMLR.

\bibitem[\protect\citeauthoryear{Lemyre, L{\'e}vesque, Domingue, Herrmann,
  Ethier, et~al.}{Lemyre et~al.}{2024}]{lemyre2024distributed}
Lemyre, F.~C., S.~L{\'e}vesque, M.-P. Domingue, K.~Herrmann, J.-F. Ethier,
  et~al. (2024).
\newblock Distributed statistical analyses: A scoping review and examples of
  operational frameworks adapted to health analytics.
\newblock {\em JMIR medical informatics\/}~{\em 12\/}(1), e53622.

\bibitem[\protect\citeauthoryear{Li, Cai, and Li}{Li
  et~al.}{2023}]{listatistical}
Li, S., T.~T. Cai, and H.~Li (2023).
\newblock Statistical inference for high dimensional regression with proxy
  data.

\bibitem[\protect\citeauthoryear{Li, Miao, Lu, and Zhou}{Li
  et~al.}{2023}]{li2023improving}
Li, X., W.~Miao, F.~Lu, and X.-H. Zhou (2023).
\newblock Improving efficiency of inference in clinical trials with external
  control data.
\newblock {\em Biometrics\/}~{\em 79\/}(1), 394--403.

\bibitem[\protect\citeauthoryear{Liew, Choi, and Liew}{Liew
  et~al.}{1985}]{liew1985data}
Liew, C.~K., U.~J. Choi, and C.~J. Liew (1985).
\newblock A data distortion by probability distribution.
\newblock {\em ACM Transactions on Database Systems (TODS)\/}~{\em 10\/}(3),
  395--411.

\bibitem[\protect\citeauthoryear{Lin and Zeng}{Lin and
  Zeng}{2010}]{lin2010relative}
Lin, D.-Y. and D.~Zeng (2010).
\newblock On the relative efficiency of using summary statistics versus
  individual-level data in meta-analysis.
\newblock {\em Biometrika\/}~{\em 97\/}(2), 321--332.

\bibitem[\protect\citeauthoryear{Liu, Liu, and Xie}{Liu
  et~al.}{2015}]{liu2015multivariate}
Liu, D., R.~Y. Liu, and M.~Xie (2015).
\newblock Multivariate meta-analysis of heterogeneous studies using only
  summary statistics: efficiency and robustness.
\newblock {\em Journal of the American Statistical Association\/}~{\em
  110\/}(509), 326--340.

\bibitem[\protect\citeauthoryear{L{\"u}tkepohl}{L{\"u}tkepohl}{1989}]{lutkepohl1989note}
L{\"u}tkepohl, H. (1989).
\newblock A note on the asymptotic distribution of impulse response functions
  of estimated var models with orthogonal residuals.
\newblock {\em Journal of Econometrics\/}~{\em 42\/}(3), 371--376.

\bibitem[\protect\citeauthoryear{Mathur, Si, and Reiter}{Mathur
  et~al.}{2024}]{mathur2024fully}
Mathur, S., Y.~Si, and J.~P. Reiter (2024).
\newblock Fully synthetic data for complex surveys.
\newblock {\em Survey methodology\/}~{\em 50\/}(2), 347.

\bibitem[\protect\citeauthoryear{NCHS}{NCHS}{2017}]{nchs2015linked}
NCHS (2017).
\newblock Linked birth/infant death cohort data set for year 2015.
\newblock
  \url{https://www.nber.org/research/data/linked-birthinfant-death-cohort-data}.
\newblock U.S. Department of Health and Human Services, Centers for Disease
  Control and Prevention.

\bibitem[\protect\citeauthoryear{Nowok, Raab, and Dibben}{Nowok
  et~al.}{2016}]{synthpop}
Nowok, B., G.~M. Raab, and C.~Dibben (2016).
\newblock {synthpop}: Bespoke creation of synthetic data in {R}.
\newblock {\em Journal of Statistical Software\/}~{\em 74\/}(11), 1--26.

\bibitem[\protect\citeauthoryear{Raghunathan, Reiter, and Rubin}{Raghunathan
  et~al.}{2003}]{raghunathan2003multiple}
Raghunathan, T.~E., J.~P. Reiter, and D.~B. Rubin (2003).
\newblock Multiple imputation for statistical disclosure limitation.
\newblock {\em Journal of official statistics\/}~{\em 19\/}(1), 1.

\bibitem[\protect\citeauthoryear{R{\"a}is{\"a}, J{\"a}lk{\"o}, and
  Honkela}{R{\"a}is{\"a} et~al.}{2025}]{raisa2025consistent}
R{\"a}is{\"a}, O., J.~J{\"a}lk{\"o}, and A.~Honkela (2025).
\newblock On consistent bayesian inference from synthetic data.
\newblock {\em Journal of Machine Learning Research\/}~{\em 26\/}(74), 1--65.

\bibitem[\protect\citeauthoryear{Reiter}{Reiter}{2005}]{reiter2005using}
Reiter, J.~P. (2005).
\newblock Using cart to generate partially synthetic public use microdata.
\newblock {\em Journal of Official Statistics\/}~{\em 21\/}(3), 441.

\bibitem[\protect\citeauthoryear{Rocher, Hendrickx, and De~Montjoye}{Rocher
  et~al.}{2019}]{rocher2019estimating}
Rocher, L., J.~M. Hendrickx, and Y.-A. De~Montjoye (2019).
\newblock Estimating the success of re-identifications in incomplete datasets
  using generative models.
\newblock {\em Nature communications\/}~{\em 10\/}(1), 3069.

\bibitem[\protect\citeauthoryear{Rubin}{Rubin}{1993}]{rubin1993statistical}
Rubin, D.~B. (1993).
\newblock Statistical disclosure limitation.
\newblock {\em Journal of official Statistics\/}~{\em 9\/}(2), 461--468.

\bibitem[\protect\citeauthoryear{Thoral, Peppink, Driessen, Sijbrands,
  Kompanje, Kaplan, Bailey, Kesecioglu, Cecconi, Churpek, et~al.}{Thoral
  et~al.}{2021}]{thoral2021sharing}
Thoral, P.~J., J.~M. Peppink, R.~H. Driessen, E.~J. Sijbrands, E.~J. Kompanje,
  L.~Kaplan, H.~Bailey, J.~Kesecioglu, M.~Cecconi, M.~Churpek, et~al. (2021).
\newblock Sharing icu patient data responsibly under the society of critical
  care medicine/european society of intensive care medicine joint data science
  collaboration: the amsterdam university medical centers database
  (amsterdamumcdb) example.
\newblock {\em Critical care medicine\/}~{\em 49\/}(6), e563--e577.

\bibitem[\protect\citeauthoryear{Van~Breugel, Kyono, Berrevoets, and Van~der
  Schaar}{Van~Breugel et~al.}{2021}]{van2021decaf}
Van~Breugel, B., T.~Kyono, J.~Berrevoets, and M.~Van~der Schaar (2021).
\newblock Decaf: Generating fair synthetic data using causally-aware generative
  networks.
\newblock {\em Advances in Neural Information Processing Systems\/}~{\em 34},
  22221--22233.

\bibitem[\protect\citeauthoryear{Van~Buuren and Groothuis-Oudshoorn}{Van~Buuren
  and Groothuis-Oudshoorn}{2011}]{van2011mice}
Van~Buuren, S. and K.~Groothuis-Oudshoorn (2011).
\newblock mice: Multivariate imputation by chained equations in r.
\newblock {\em Journal of statistical software\/}~{\em 45}, 1--67.

\bibitem[\protect\citeauthoryear{Van~der Vaart}{Van~der
  Vaart}{2000}]{van2000asymptotic}
Van~der Vaart, A.~W. (2000).
\newblock {\em Asymptotic statistics}, Volume~3.
\newblock Cambridge university press.

\bibitem[\protect\citeauthoryear{Vu and Slavkovic}{Vu and
  Slavkovic}{2009}]{vu2009differential}
Vu, D. and A.~Slavkovic (2009).
\newblock Differential privacy for clinical trial data: Preliminary
  evaluations.
\newblock In {\em 2009 IEEE International Conference on Data Mining Workshops},
  pp.\  138--143. IEEE.

\bibitem[\protect\citeauthoryear{Wang}{Wang}{2018}]{wang2018revisiting}
Wang, Y.-X. (2018).
\newblock Revisiting differentially private linear regression: optimal and
  adaptive prediction \& estimation in unbounded domain.
\newblock {\em arXiv preprint arXiv:1803.02596\/}.

\bibitem[\protect\citeauthoryear{Watson, Blesch, Kapar, and Wright}{Watson
  et~al.}{2023}]{watson2023adversarial}
Watson, D.~S., K.~Blesch, J.~Kapar, and M.~N. Wright (2023).
\newblock Adversarial random forests for density estimation and generative
  modeling.
\newblock In {\em International Conference on Artificial Intelligence and
  Statistics}, pp.\  5357--5375. PMLR.

\bibitem[\protect\citeauthoryear{Wilde, Jewson, Vollmer, and Holmes}{Wilde
  et~al.}{2021}]{wilde2021foundations}
Wilde, H., J.~Jewson, S.~Vollmer, and C.~Holmes (2021).
\newblock Foundations of bayesian learning from synthetic data.
\newblock In {\em International conference on artificial intelligence and
  statistics}, pp.\  541--549. PMLR.

\bibitem[\protect\citeauthoryear{Xu, Skoularidou, Cuesta-Infante, and
  Veeramachaneni}{Xu et~al.}{2019}]{NEURIPS2019_254ed7d2}
Xu, L., M.~Skoularidou, A.~Cuesta-Infante, and K.~Veeramachaneni (2019).
\newblock Modeling tabular data using conditional gan.
\newblock In H.~Wallach, H.~Larochelle, A.~Beygelzimer, F.~d\textquotesingle
  Alch\'{e}-Buc, E.~Fox, and R.~Garnett (Eds.), {\em Advances in Neural
  Information Processing Systems}, Volume~32. Curran Associates, Inc.

\bibitem[\protect\citeauthoryear{Yoon, Drumright, and Van Der~Schaar}{Yoon
  et~al.}{2020}]{yoon2020anonymization}
Yoon, J., L.~N. Drumright, and M.~Van Der~Schaar (2020).
\newblock Anonymization through data synthesis using generative adversarial
  networks (ads-gan).
\newblock {\em IEEE journal of biomedical and health informatics\/}~{\em
  24\/}(8), 2378--2388.

\bibitem[\protect\citeauthoryear{Zhai and Han}{Zhai and
  Han}{2022}]{zhai2022data}
Zhai, Y. and P.~Han (2022).
\newblock Data integration with oracle use of external information from
  heterogeneous populations.
\newblock {\em Journal of Computational and Graphical Statistics\/}~{\em
  31\/}(4), 1001--1012.

\bibitem[\protect\citeauthoryear{Zhang, Xie, Bai, Yu, Li, and Gao}{Zhang
  et~al.}{2021}]{zhang2021survey}
Zhang, C., Y.~Xie, H.~Bai, B.~Yu, W.~Li, and Y.~Gao (2021).
\newblock A survey on federated learning.
\newblock {\em Knowledge-Based Systems\/}~{\em 216}, 106775.

\bibitem[\protect\citeauthoryear{Zhang, Cormode, Procopiuc, Srivastava, and
  Xiao}{Zhang et~al.}{2017}]{zhang2017privbayes}
Zhang, J., G.~Cormode, C.~M. Procopiuc, D.~Srivastava, and X.~Xiao (2017).
\newblock Privbayes: Private data release via bayesian networks.
\newblock {\em ACM Transactions on Database Systems (TODS)\/}~{\em 42\/}(4),
  1--41.

\bibitem[\protect\citeauthoryear{Zhou and Zhao}{Zhou and
  Zhao}{2021}]{zhou2021fast}
Zhou, G. and H.~Zhao (2021).
\newblock A fast and robust bayesian nonparametric method for prediction of
  complex traits using summary statistics.
\newblock {\em PLoS genetics\/}~{\em 17\/}(7), e1009697.

\end{thebibliography}

	\clearpage
	\appendix
	\renewcommand{\thetable}{A\arabic{table}}
	\setcounter{table}{0}
	\renewcommand{\thefigure}{A\arabic{figure}}
	\setcounter{figure}{0}
	\renewcommand{\thealgorithm}{A\arabic{algorithm}}
	\setcounter{algorithm}{0}
	
	\section{Proofs} \label{appen:proofs}
	
	Proposition 1 concerns the asymptotic independence of the synthetic data, conditionally on $\mathcal{D}_n$ when applying the ``whiten-recolor" transformation. The proposition is derived for a general matrix $\bX$. 
	
	\begin{proposition} \label{prop:choleskyindep}
		Let $\bX\in\mathbb{R}^{n\times p}$ be a full-rank data matrix with i.i.d.\ rows, and let $\mathbf{R}_X$ be the upper-triangular Cholesky factor of $\bX^\top\bX$, so that $\bX^\top\bX = \mathbf{R}_X^\top \mathbf{R}_X$. Assume that $n^{-1}\bX^\top\bX$ converges in probability to $\boldsymbol{\Sigma}_X$. Then, as $n\to\infty$, the transformed observations $\mathbf{A} = \bX\,\mathbf{R}_X^{-1}$ are asymptotically independent, with the dependence vanishing at $O(1/n)$ rate.
	\end{proposition}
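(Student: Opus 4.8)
The plan is to peel off an i.i.d.\ leading term from each row of $\mathbf{A}$ and show that everything left over is a \emph{shared} perturbation that (i) vanishes in probability, giving asymptotic independence of any finite collection of rows, and (ii) reacts to each individual observation only at the $O(1/n)$ level, which is the sense in which the dependence vanishes at rate $O(1/n)$. Write $\mathbf{S}_n = n^{-1}\bX^\top\bX = n^{-1}\sum_{i=1}^n \bx_i\bx_i^\top$, so that $\mathbf{S}_n\xrightarrow{P}\boldsymbol{\Sigma}_X\succ0$, and let $\mathbf{C}(\cdot)$ denote the map sending a positive definite matrix to its upper-triangular Cholesky factor, so that $\mathbf{R}_X=\sqrt{n}\,\mathbf{C}(\mathbf{S}_n)$ and the $i$-th row of $\mathbf{A}$, viewed as a column, is $\mathbf{a}_i=\mathbf{R}_X^{-\top}\bx_i=n^{-1/2}\mathbf{C}(\mathbf{S}_n)^{-\top}\bx_i$. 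The map $\mathbf{M}\mapsto\mathbf{C}(\mathbf{M})^{-\top}$ is continuously differentiable (indeed real-analytic) on the open cone of positive definite matrices; fixing a compact neighbourhood $K$ of $\boldsymbol{\Sigma}_X$ inside that cone, $\Pr(\mathbf{S}_n\in K)\to1$ and the first two derivatives of $\mathbf{C}(\cdot)^{-\top}$ are bounded on $K$. A first-order Taylor expansion on $\{\mathbf{S}_n\in K\}$ then gives
\begin{equation*}
	\sqrt{n}\,\mathbf{a}_i \;=\; \mathbf{b}_i+\boldsymbol{\eta}_i, \qquad
	\mathbf{b}_i := \mathbf{R}_\Sigma^{-\top}\bx_i,\quad
	\boldsymbol{\eta}_i := \big[\,D_{\boldsymbol{\Sigma}_X}(\mathbf{S}_n-\boldsymbol{\Sigma}_X)+\mathbf{E}_n\,\big]\bx_i,
\end{equation*}
where $\mathbf{R}_\Sigma:=\mathbf{C}(\boldsymbol{\Sigma}_X)$, $D_{\boldsymbol{\Sigma}_X}$ is the bounded linear derivative of $\mathbf{C}(\cdot)^{-\top}$ at $\boldsymbol{\Sigma}_X$, and the remainder satisfies $\|\mathbf{E}_n\|=O_P(\|\mathbf{S}_n-\boldsymbol{\Sigma}_X\|^2)$.

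Since $\mathbf{R}_\Sigma$ is deterministic, $\mathbf{b}_1,\mathbf{b}_2,\dots$ are i.i.d., with common law the image of that of $\bx_1$. By the continuous mapping theorem $\mathbf{C}(\mathbf{S}_n)^{-\top}-\mathbf{R}_\Sigma^{-\top}\xrightarrow{P}\mathbf{0}$, and $\bx_i=O_P(1)$, so $\boldsymbol{\eta}_i=o_P(1)$; hence for every fixed $k$ we have $(\sqrt{n}\,\mathbf{a}_1,\dots,\sqrt{n}\,\mathbf{a}_k)=(\mathbf{b}_1,\dots,\mathbf{b}_k)+o_P(1)$, which by Slutsky's theorem converges in distribution to $(\mathbf{b}_1,\dots,\mathbf{b}_k)$, a vector with independent blocks. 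This is the claimed asymptotic independence; note the $\mathbf{a}_i$ cannot be exactly independent since $\mathbf{A}^\top\mathbf{A}=\mathbf{I}_p$, so the statement is read in the sense of finite-dimensional marginals. For the quantitative rate I would use a bounded-differences argument: replacing a single row $\bx_j$ by an independent copy changes $\mathbf{S}_n$ by $O_P(1/n)$, hence (by the bounded derivative on $K$) changes $\mathbf{C}(\mathbf{S}_n)^{-\top}$ by $O_P(1/n)$, and therefore moves every other rescaled row $\sqrt{n}\,\mathbf{a}_i$ by $O_P(1/n)$ --- equivalently, the conditional law of $\sqrt{n}\,\mathbf{a}_i$ given $\{\bx_l:l\neq j\}$ depends on $\bx_j$ only to $O_P(1/n)$. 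Feeding this through a Hoeffding/Hájek-projection decomposition of $\mathbf{S}_n-\boldsymbol{\Sigma}_X$ (or an Efron--Stein-type bound), together with $\E\|\boldsymbol{\eta}_i\|^2=O(1/n)$ and Cauchy--Schwarz, yields $\text{Cov}\{g(\sqrt{n}\mathbf{a}_i),h(\sqrt{n}\mathbf{a}_j)\}=O(1/n)$ for bounded Lipschitz $g,h$ and $i\neq j$, quantifying the vanishing dependence. This last step uses $\mathbf{S}_n-\boldsymbol{\Sigma}_X=O_P(n^{-1/2})$, i.e.\ a finite fourth moment of $\bx$; the qualitative asymptotic independence needs only the stated hypothesis.

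The main obstacle is the bookkeeping of the shared randomness. The fluctuation $\mathbf{C}(\mathbf{S}_n)^{-\top}-\mathbf{R}_\Sigma^{-\top}$ is of order $n^{-1/2}$, not $n^{-1}$, and it is common to all rows; one must argue carefully that, despite this, its contribution to the dependence between any two rows is only $O(1/n)$. This is precisely what the bounded-differences viewpoint delivers, since $\mathbf{S}_n$ --- and hence $\mathbf{C}(\mathbf{S}_n)^{-\top}$ --- reacts to any single observation only at the $O(1/n)$ level, so that the cross-covariance of two scaled rows is a product of two $O_P(n^{-1/2})$ quantities whose overlap is itself $O(1/n)$. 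Two further technicalities require care: controlling the Taylor remainder $\mathbf{E}_n$ uniformly on the high-probability event $\{\mathbf{S}_n\in K\}$ (and checking that the excluded event contributes negligibly), and accounting for the fact that $\bx_i$ itself enters $\mathbf{S}_n$, so $\mathbf{b}_i$ and $\boldsymbol{\eta}_i$ are not independent --- though this self-term contributes only $O_P(1/n)$ to $\boldsymbol{\eta}_i$ and affects neither conclusion.
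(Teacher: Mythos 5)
Your proposal is correct and its core is the same as the paper's: both expand the Cholesky factor of $n^{-1}\bX^\top\bX$ around $\boldsymbol{\Sigma}_X$ via Fr\'echet differentiability, peel off the i.i.d.\ leading term $\bx_i^\top\mathbf{R}_0^{-1}$ (your $\mathbf{b}_i=\mathbf{R}_\Sigma^{-\top}\bx_i$), and treat what remains as a shared, vanishing perturbation. The difference is in how the $O(1/n)$ rate is cashed out. The paper inverts the expanded factor to get $\mathbf{R}_X^{-1}=n^{-1/2}\mathbf{R}_0^{-1}+\Delta_n$ with $\|\Delta_n\|=O_P(1/n)$ and simply observes that the per-row perturbation $\|\bx_i^\top\Delta_n\|\le\|\bx_i\|\,\|\Delta_n\|=O_P(1/n)$ on the \emph{unrescaled} rows; it does not translate this into a statement about dependence between rows. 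You instead work with the $\sqrt{n}$-rescaled rows, where the shared fluctuation is only $O_P(n^{-1/2})$, and add the bounded-differences/H\'ajek-projection step showing that each individual observation moves the common factor $\mathbf{C}(\mathbf{S}_n)^{-\top}$ by only $O_P(1/n)$, so that cross-covariances of bounded Lipschitz functionals of distinct rows are $O(1/n)$. That extra step is the genuinely new content: it gives a precise meaning to ``dependence vanishing at rate $O(1/n)$'' that the paper leaves informal, and your observation that exact independence is impossible (since $\mathbf{A}^\top\mathbf{A}=\mathbf{I}_p$) correctly pins down the finite-dimensional-marginal sense of the claim. The price is a slightly stronger moment requirement ($\mathbf{S}_n-\boldsymbol{\Sigma}_X=O_P(n^{-1/2})$, i.e.\ fourth moments), but the paper's own step $\Delta_X=O_P(\sqrt{n})$ tacitly assumes the same, so you are not losing generality relative to the published argument.
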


	\begin{proof}
		Since $n^{-1}\bX^\top\bX$ converges in probability to $\boldsymbol{\Sigma}_X$, we write
		\begin{equation} \label{eq:XtXSigmax}
			\bX^\top\bX=n\boldsymbol{\Sigma}_X+\Delta_X,\quad \Delta_X=O_P(\sqrt{n}).
		\end{equation}
		Recall that the Cholesky decomposition is a mapping $f\colon \mathcal{S}_{++}^{p}\to \mathcal{T}_p$ defined by $f(\mathbf{M})=\operatorname{chol}(\mathbf{M})$ where $\mathcal{S}_{++}^{p}$ is the set of $p\times p$ symmetric positive‐definite matrices and $\mathcal{T}_p$ the set of upper–triangular matrices with positive diagonal entries. By properties of Cholesky decomposition, we have
		
		$$
		\operatorname{chol}(n\boldsymbol{\Sigma}_X+\Delta_X)=\operatorname{chol}\left\{n\left(\boldsymbol{\Sigma}_X+\frac{1}{n}\Delta_X\right)\right\} = \sqrt{n}\operatorname{chol}\left(\boldsymbol{\Sigma}_X+\frac{1}{n}\Delta_X\right) \, ,
		$$
		and by Fréchet-differentiablity of the Cholesky mapping \cite[Lemma 1]{lutkepohl1989note} we obtain
		\begin{equation} \label{eq:cholTaylor}
			\operatorname{chol}\Bigl(\boldsymbol{\Sigma}_X+\frac{1}{n}\Delta_X\Bigr)
			=\operatorname{chol}(\boldsymbol{\Sigma}_X)+D\,\operatorname{chol}(\boldsymbol{\Sigma}_X)\Bigl[\frac{1}{n}\Delta_X\Bigr]+o_P\Bigl(\frac{1}{n}\|\Delta_X\|\Bigr)
			\, ,
		\end{equation}
		where $D$ is the derivative operator.
		Multiplying Eq.~\eqref{eq:cholTaylor} by $\sqrt{n}$ and plugging in Eq.~\eqref{eq:XtXSigmax} yields
		$$
		\operatorname{chol}(\bX^\top\bX)=\sqrt{n}\,\operatorname{chol}(\boldsymbol{\Sigma}_X)
		+\frac{1}{\sqrt{n}}\,D\,\operatorname{chol}(\boldsymbol{\Sigma}_X)[\Delta_X]+o_P\Bigl(\frac{1}{\sqrt{n}}\|\Delta_X\|\Bigr).
		$$
		
		Thus, denoting $\mathbf{R}_0=\operatorname{chol}(\boldsymbol{\Sigma}_X)$, the linear approximation becomes
		$$
		\mathbf{R}_X=\sqrt{n}\,\mathbf{R}_0+\mathbf{E}_n = \sqrt{n}\left(\mathbf{R}_0+\frac{1}{\sqrt{n}}\mathbf{E}_n\right),\quad\text{with }\mathbf{E}_n=\frac{1}{\sqrt{n}}\,D\,\operatorname{chol}(\boldsymbol{\Sigma}_X)[\Delta_X],
		$$
		
		Since $E_n=O_P(1)$, the perturbation $\frac{1}{\sqrt{n}}\mathbf{E}_n$ is small, and a first–order Taylor expansion for matrix inversion gives
		$$
		\mathbf{R}_X^{-1}=\frac{1}{\sqrt{n}}\mathbf{R}_0^{-1}-\frac{1}{n}\,\mathbf{R}_0^{-1}\mathbf{E}_n\,\mathbf{R}_0^{-1}+O_P\Bigl(\frac{1}{n^{3/2}}\Bigr),
		$$
		so that if we denote
		$$
		\Delta_n=-\frac{1}{n}\,\mathbf{R}_0^{-1}\mathbf{E}_n\,\mathbf{R}_0^{-1}+O_P\Bigl(\frac{1}{n^{3/2}}\Bigr),
		$$
		then
		$$
		\mathbf{R}_X^{-1}=\frac{1}{\sqrt{n}}\mathbf{R}_0^{-1}+\Delta_n,\quad \|\Delta_n\|=O_P\Bigl(\frac{1}{n}\Bigr).
		$$
		Defining the whitened data by
		$$
		\mathbf{A}=\bX\,\mathbf{R}_X^{-1}=\frac{1}{\sqrt{n}}\bX\,\mathbf{R}_0^{-1}+\bX\,\Delta_n,
		$$
		we note that the term $\frac{1}{\sqrt{n}}\bX\,\mathbf{R}_0^{-1}$ has i.i.d. rows (since $\bX$ has i.i.d. rows and $\mathbf{R}_0^{-1}$ is deterministic), whereas for a typical row the term $\Delta_n\bx$ is controlled by $\|\Delta_n\bx\|\le \|\bx\|\|\Delta_n\|$. Under standard moment conditions, $\|\bx\|$ grows at most like $O_P(1)$, so that overall the dependence of the transformed observations in $\mathbf{A}$ vanishes at $O_P(1/n)$ rate. 
	\end{proof}
	
	Proposition~\ref{prop:choleskyindep} holds in our case as a direct consequence of Assumption~\ref{as:regularityx}. Since the dependence diminishes at a rate of $O(1/n)$, asymptotic independence remains valid even after applying the $\sqrt{n}$ scaling. Therefore, in the following analysis, we will treat the synthetic observations as i.i.d., conditional on $\mathcal{D}_n$, without compromising the integrity of the theoretical results.
	
	Before proving Theorem~\ref{thm:normality}, let us present Lemma \ref{lemma:condConsis}, which establishes the consistency of the synthetic-data-based estimator towards the original data MLE $\hat{\bbeta}$, in the probability space conditional on $\mathcal{D}_n$.
	
	\begin{lemma} \label{lemma:condConsis}
		Conditionally on $\mathcal{D}_n$, and under Assumptions \ref{as:reularityParam},\ref{as:regularityx},\ref{as:biasNorm}, as $(m,n) \to \infty$, 	
		
		$$\|\tilde{\bbeta} - \hat{\bbeta}\| = o_{P|\mathcal{D}_n}(1) \, ,$$
		where $P|\mathcal{D}_n$ stands for the conditional probability measure given $\mathcal{D}_n$.
	\end{lemma}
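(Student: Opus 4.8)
\emph{Setup.} The plan is to treat $\tilde{\bbeta}$ as a Z-estimator and run the standard consistency template, but in the probability space conditional on $\mathcal{D}_n$, with the original-data MLE $\hat{\bbeta}$ playing the role of the ``true'' parameter. Conditionally on $\mathcal{D}_n$ the OLS vector $\hat{\btheta}$ is deterministic, so $\boldsymbol{\psi}_m(\bbeta)=m^{-1}\sum_{i=1}^m\boldsymbol{\phi}(\tilde{\bx}_{n,i},\bbeta)$ is an average of summands that, by Proposition~\ref{prop:choleskyindep}, are i.i.d.\ from $\hat{F}_n$ up to an $O_P(1/n)$ perturbation of the rows; write $\boldsymbol{\Psi}_n(\bbeta):=\E_{\hat{F}_n}\{\boldsymbol{\phi}(\bx,\bbeta)\}$ for their conditional mean. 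The crucial point is that $\hat{\bbeta}$ is an \emph{approximate} zero of $\boldsymbol{\Psi}_n$: the quantity $\boldsymbol{\Psi}_n(\hat{\bbeta})$ is precisely the estimating-equation bias appearing in Assumption~\ref{as:biasNorm}, so $\|\boldsymbol{\Psi}_n(\hat{\bbeta})\|=O_P(n^{-1/2})=o_P(1)$.

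\emph{Two ingredients.} First I would establish a conditional uniform law of large numbers, $\sup_{\bbeta\in\mathcal{B}}\|\boldsymbol{\psi}_m(\bbeta)-\boldsymbol{\Psi}_n(\bbeta)\|=o_{P|\mathcal{D}_n}(1)$. Pointwise, the conditional weak law applies because $\E_{\hat{F}_n}\|\boldsymbol{\phi}(\bx,\bbeta)\|^{2+\delta}<C_1$ almost surely for $n>n^*$ by Assumption~\ref{as:regularityx}(a), a bound uniform in $n$ that controls the triangular array. To upgrade pointwise to uniform convergence over the compact set $\mathcal{B}$ (Assumption~\ref{as:reularityParam}), I would use stochastic equicontinuity: by Assumption~\ref{as:regularityx}(b), $\|\boldsymbol{\phi}(\bx,\bbeta_1)-\boldsymbol{\phi}(\bx,\bbeta_2)\|\le L(\bx)\|\bx\|^2\|\bbeta_1-\bbeta_2\|$, and $m^{-1}\sum_{i=1}^m L(\tilde{\bx}_{n,i})\|\tilde{\bx}_{n,i}\|^2=O_{P|\mathcal{D}_n}(1)$ because $\E_{\hat{F}_n}\{L^2(\bx)\|\bx\|^2\}<C_2$ (combined with the moment bound of part~(a), via Cauchy--Schwarz, to control the mean $\E_{\hat{F}_n}\{L(\bx)\|\bx\|^2\}$); a finite $\epsilon$-cover of $\mathcal{B}$ then converts the pointwise law into the uniform one. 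Second, I would use Assumption~\ref{as:regularityx}(d): the Jacobian of $\boldsymbol{\Psi}_n$ is $-\E_{\hat{F}_n}\{\mu'(\bx^\top\bbeta)\,\bx\bx^\top\}$, which is definite with a sign constant on $\mathcal{B}$ and with smallest absolute eigenvalue bounded below by some $\lambda>0$ uniformly in $n$, so a mean-value argument gives $\|\boldsymbol{\Psi}_n(\bbeta)-\boldsymbol{\Psi}_n(\hat{\bbeta})\|\ge\lambda\|\bbeta-\hat{\bbeta}\|$ for all $\bbeta\in\mathcal{B}$; this is the uniform-in-$n$ well-separation of the zero.

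\emph{Combining.} Since $\boldsymbol{\psi}_m(\tilde{\bbeta})=\mathbf{0}$, the uniform law gives $\|\boldsymbol{\Psi}_n(\tilde{\bbeta})\|=\|\boldsymbol{\Psi}_n(\tilde{\bbeta})-\boldsymbol{\psi}_m(\tilde{\bbeta})\|\le\sup_{\bbeta\in\mathcal{B}}\|\boldsymbol{\psi}_m(\bbeta)-\boldsymbol{\Psi}_n(\bbeta)\|=o_{P|\mathcal{D}_n}(1)$. Together with the separation bound and the approximate-zero property of $\hat{\bbeta}$,
\[
\lambda\,\|\tilde{\bbeta}-\hat{\bbeta}\|\;\le\;\|\boldsymbol{\Psi}_n(\tilde{\bbeta})-\boldsymbol{\Psi}_n(\hat{\bbeta})\|\;\le\;\|\boldsymbol{\Psi}_n(\tilde{\bbeta})\|+\|\boldsymbol{\Psi}_n(\hat{\bbeta})\|\;=\;o_{P|\mathcal{D}_n}(1)+o_P(1),
\]
which yields $\|\tilde{\bbeta}-\hat{\bbeta}\|=o_{P|\mathcal{D}_n}(1)$, as claimed.

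\emph{Main obstacle.} The Z-estimator skeleton is routine; the delicate part is making the conditional uniform law of large numbers rigorous for a genuine triangular array, where the law $\hat{F}_n$ of the summands and their number $m\asymp n$ change simultaneously, and where the ``conditional'' statements must be interpreted with care (the regularity bounds hold ``almost surely for $n>n^*$'', i.e., along almost every sequence $\mathcal{D}_n$, and one then argues inside the conditional measure $P\mid\mathcal{D}_n$). The real content of Assumptions~\ref{as:reularityParam} and~\ref{as:regularityx} here is precisely that the equicontinuity modulus and the lower eigenvalue bound $\lambda$ can be taken \emph{uniform in $n$}; the $O_P(1/n)$ dependence among the whitened rows from Proposition~\ref{prop:choleskyindep} is asymptotically negligible and does not affect any of these steps.
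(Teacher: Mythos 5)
Your proposal is correct and follows essentially the same route as the paper: a conditional uniform law of large numbers over the compact set $\mathcal{B}$ via a finite $\epsilon$-net (using Assumptions~\ref{as:regularityx}(a)--(b)), the vanishing of the bias term $\E_{\hat{F}_n}\{\boldsymbol{\phi}(\bx,\hat{\bbeta})\}$ via Assumption~\ref{as:biasNorm}, and identification of the root via the definiteness of $\E_{\hat{F}_n}\{\mu'(\bx^\top\bbeta)\bx\bx^\top\}$ from Assumption~\ref{as:regularityx}(d). The only packaging difference is that the paper recenters the limit criterion so that $\hat{\bbeta}$ is an exact root and cites Theorem~5.9 of \cite{van2000asymptotic}, whereas you keep $\hat{\bbeta}$ as an approximate root and write out the well-separation inequality explicitly (correctly flagging that the uniform-in-$n$ eigenvalue bound $\lambda$ is an implicit strengthening, a point the paper's appeal to Theorem~5.9 also quietly relies on); also note the Lipschitz bound should read $L(\bx)\|\bx\|\,\|\bbeta_1-\bbeta_2\|$ rather than $L(\bx)\|\bx\|^2\|\bbeta_1-\bbeta_2\|$, which only makes your moment control easier.
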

	\begin{proof}[Proof of Lemma \ref{lemma:condConsis}]
		
		First, let us show that
		\begin{equation} \label{eq:uniformConvPsi}
			\sup_{\bbeta\in\mathcal{B}}\left\|\boldsymbol{\psi}_m(\bbeta)- \int \left(\mu(\bx^\top\bbeta) - \mu(\bx^\top\hat{\bbeta})\right)\bx d\hat{F}_n(\bx)\right\|\xrightarrow{P|\mathcal{D}_n}0 \, .
		\end{equation}
		Adding and subtracting $\int \bx^\top\hat{\btheta}\bx d\hat{F}_n(\bx)$, and recalling the definition of $\boldsymbol{\psi}_m$ from Eq.(\ref{eq:ScoreCorrected}), and by the triangle inequality, we get
		\begin{eqnarray*}
			& & \sup_{\bbeta\in\mathcal{B}}\left\|\boldsymbol{\psi}_m(\bbeta)- \int \left(\mu(\bx^\top\bbeta) - \mu(\bx^\top\hat{\bbeta})\right)\bx d\hat{F}_n(\bx)\right\| \\
			&\le& \sup_{\bbeta\in\mathcal{B}}\left\|\int \left(\mu(\bx^\top\bbeta) - \bx^\top\hat{\btheta}\right)\bx d\left(\tilde{F}_m(\bx)-\hat{F}_n(\bx)\right)\right\| \\
			&+& \left\|\int \left(\mu(\bx^\top\hat{\bbeta}) - \bx^\top\hat{\btheta}\right)\bx d\hat{F}_n(\bx)\right\| \, .
		\end{eqnarray*}
		
		By Assumption~\ref{as:biasNorm}, the second addend vanishes as $n\to\infty$. Recall that $\boldsymbol{\phi}(\bx, \bbeta) = (\mu(\bx^\top \bbeta) - \bx^\top \hat{\btheta})\bx$ and observe that the expression inside the norm takes the form  
		$$
		\frac{1}{m} \sum_{i=1}^m \left[\boldsymbol{\phi}(\tilde{\bx}_{n,i}, \bbeta) - \E_{\hat{F}_n} \left\{ \boldsymbol{\phi}(\bx, \bbeta) \right\} \right].
		$$  
		
		Due to Assumption~\ref{as:regularityx}(a) and by Chebyshev’s inequality, the expression converges to zero in conditional probability $P | \mathcal{D}_n$, demonstrating pointwise convergence. In order to establish uniform convergence, let us decompose the expression into three terms by introducing an $\epsilon$-net for $\mathcal{B}$. Since $\mathcal{B}$ is compact by Assumption~\ref{as:reularityParam}, for any $\epsilon > 0$, there exists a finite $\epsilon$-net $\{\bbeta_1, \dots, \bbeta_K\}$ such that for every $\bbeta \in \mathcal{B}$, there is some $\bbeta_k$ satisfying $\|\bbeta - \bbeta_k\| < \epsilon$. This allows us to write
		$$
		\sup_{\bbeta \in \mathcal{B}} \left\| \frac{1}{m} \sum_{i=1}^{m} \boldsymbol{\phi}(\tilde{\bx}_{n,i}, \bbeta) - \E_{\hat{F}_n}\left\{\boldsymbol{\phi}(\bx, \bbeta)\right\} \right\|
		\leq \sup_{\bbeta_k} \left\| \frac{1}{m} \sum_{i=1}^{m} \boldsymbol{\phi}(\tilde{\bx}_{n,i}, \bbeta_k) - \E_{\hat{F}_n}\left\{\boldsymbol{\phi}(\bx, \bbeta_k)\right\} \right\|
		$$
		$$
		+ \sup_{\bbeta \in \mathcal{B}} \left\| \frac{1}{m} \sum_{i=1}^{m} \left\{ \boldsymbol{\phi}(\tilde{\bx}_{n,i}, \bbeta) - \boldsymbol{\phi}(\tilde{\bx}_{n,i}, \bbeta_k) \right\} \right\|
		+ \sup_{\bbeta \in \mathcal{B}} \left\| \E_{\hat{F}_n}\left\{\boldsymbol{\phi}(\bx, \bbeta)\right\} - \E_{\hat{F}_n}\left\{\boldsymbol{\phi}(\bx, \bbeta_k)\right\} \right\|.
		$$
		
		We now show that each of these terms vanishes as $(m,n) \to \infty$, in conditional probability $P | \mathcal{D}_n$.
		First, the term involving the supremum over $\bbeta_k$ is finite, as there are only finitely many $\bbeta_k$. For each fixed $\bbeta_k$, by the pointwise convergence, we have
		$$
		\frac{1}{m} \sum_{i=1}^{m} \boldsymbol{\phi}(\tilde{\bx}_{n,i}, \bbeta_k) - \E_{\hat{F}_n}[\boldsymbol{\phi}(\bx, \bbeta_k)] \xrightarrow{P|\mathcal{D}_n} 0 \, .
		$$
		
		Next, consider the second term, which involves $\boldsymbol{\phi}(\tilde{\bx}_{n,i}, \bbeta) - \boldsymbol{\phi}(\tilde{\bx}_{n,i}, \bbeta_k)$. 
		For two parameters $\bbeta, \bbeta' \in \mathcal{B}$, by Assumption~\ref{as:regularityx}(b) we get
		$$
		\| \boldsymbol{\phi}(\bx, \bbeta) - \boldsymbol{\phi}(\bx, \bbeta') \| = |\mu(\bx^\top \bbeta) - \mu(\bx^\top \bbeta')| \|\bx \| \leq L(\bx)\|\bx\| \|\bbeta - \bbeta'\| \, .
		$$
		Hence, the second term is bounded by:
		$$
		\sup_{\bbeta \in \mathcal{B}} \frac{1}{m} \sum_{i=1}^{m} L(\tilde{\bx}_{n,i})\|\tilde{\bx}_{n,i}\| \|\bbeta - \bbeta_k\|.
		$$
		Since $\sup_{\bbeta \in \mathcal{B}} \|\bbeta - \bbeta_k\| < \epsilon$, we obtain:
		$$
		\sup_{\bbeta \in \mathcal{B}} \frac{1}{m} \sum_{i=1}^{m} L(\tilde{\bx}_{n,i})\|\tilde{\bx}_{n,i}\| \|\bbeta - \bbeta_k\| \leq \epsilon  \left[\frac{1}{m} \sum_{i=1}^{m} L(\tilde{\bx}_{n,i})\|\tilde{\bx}_{n,i}\| - \E_{\hat{F}_n}\{L(\bx)\|\tilde{\bx}_{n,i}\|\}\right] + \epsilon\E_{\hat{F}_n}\{L(\bx)\|\tilde{\bx}_{n,i}\|\}.
		$$
		Due to Assumption~\ref{as:regularityx}(b) Chebyshev's inequality implies that the first factor in the second term vanishes. Choosing $\epsilon$ sufficiently small ensures that the second factor also tends to zero. Thus, the entire second term vanishes in conditional probability.
		
		Finally, for the third term, we use the same bound as before
		$$
		\sup_{\bbeta \in \mathcal{B}} \left\| \E_{\hat{F}_n}\left\{\boldsymbol{\phi}(\bx, \bbeta)\right\} - \E_{\hat{F}_n}\left\{\boldsymbol{\phi}(\bx, \bbeta_k)\right\} \right\|
		\leq \E_{\hat{F}_n}\left\{L(\bx)\|\bx\|\right\} \sup_{\bbeta \in \mathcal{B}} \|\bbeta - \bbeta_k\|.
		$$
		Since $\sup_{\bbeta \in \mathcal{B}} \|\bbeta - \bbeta_k\| < \epsilon$ and $\E_{\hat{F}_n}\left\{L(\bx)\|\bx\|\right\} < \infty$, we obtain
		$$
		\sup_{\bbeta \in \mathcal{B}} \left\| \E_{\hat{F}_n}\left\{\boldsymbol{\phi}(\bx_i, \bbeta)\right\} - \E_{\hat{F}_n}\left\{\boldsymbol{\phi}(\bx_i, \bbeta_k)\right\} \right\| \leq \epsilon  \E_{\hat{F}_n}\left\{L(\bx)\|\bx\|\right\}.
		$$
		As $\epsilon \to 0$, this term vanishes in conditional probability.
		
		Since all three terms vanish, we conclude that:
		$$
		\sup_{\bbeta \in \mathcal{B}} \left\| \frac{1}{m} \sum_{i=1}^{m} \boldsymbol{\phi}(\tilde{\bx}_{n,i}, \bbeta) - \E_{\hat{F}_n}\left\{\boldsymbol{\phi}(\bx, \bbeta)\right\} \right\| \xrightarrow{P|\mathcal{D}_n} 0
		$$
		as $(m,n) \to \infty$, and we have thus established (\ref{eq:uniformConvPsi}).
		
		To invoke Theorem 5.9 of \cite{van2000asymptotic}, in addition to condition \eqref{eq:uniformConvPsi} we also must establish that $\hat{\bbeta}$ is the unique root of  
		$$
		\int \left\{\mu(\bx^\top\bbeta) - \mu(\bx^\top\hat{\bbeta})\right\} \bx \, d\hat{F}_n(\bx).
		$$
		By construction, $\hat{\bbeta}$ is a root. To show uniqueness, it suffices to prove that $\E_{\hat{F}_n} \left[\mu(\bx^\top\bbeta) \bx \right]$ is monotone in $\bbeta$. Since differentiation under expectation is justified here, we have  
		$$
		\frac{\partial}{\partial \bbeta} \E_{\hat{F}_n} \left\{\mu(\bx^\top\bbeta) \bx \right\} = \E_{\hat{F}_n} \left\{\mu'(\bx^\top\bbeta) \bx \bx^\top \right\} \, ,
		$$
		which is a definite matrix by Assumption~\ref{as:regularityx}(d), establishing monotonicity.
		Thus, we proved that $\tilde{\bbeta}$ converges to $\hat{\bbeta}$ in conditional probability $P|\mathcal{D}_n$ as $(m,n)\rightarrow\infty$.  
	\end{proof}
	
	In Lemma \ref{lemma:condNorm}, we establish the asymptotic normality of our estimator relative to the original-data MLE $\hat{\bbeta}$, conditional on $\{\hat{\bbeta},\hat{\btheta}\}$.
	
	\begin{lemma} \label{lemma:condNorm}
		Under Assumptions \ref{as:reularityParam}--\ref{as:infoMat}, as $(m,n)\rightarrow\infty$, 	
		$$\sqrt{n}(\tilde{\bbeta} - \hat{\bbeta}) \big| \hat{\bbeta}, \hat{\btheta} \xrightarrow{D} \mathcal{N}\left\{\mathbf{0},\boldsymbol{\Sigma}^{-1}(\alpha^{-1}\boldsymbol{\mathcal{A}}_1 + \boldsymbol{\mathcal{A}}_2)\boldsymbol{\Sigma}^{-1}\right\} \, .$$
	\end{lemma}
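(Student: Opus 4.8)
The plan is to treat $\tilde{\bbeta}$ as the root of the estimating equation $\boldsymbol{\psi}_m(\bbeta)=\mathbf{0}$ and linearize it about the original-data MLE $\hat{\bbeta}$, as in classical $Z$-estimator asymptotics. Since the Jacobian of $\boldsymbol{\psi}_m$ is $\partial\boldsymbol{\psi}_m(\bbeta)/\partial\bbeta^\top=-m^{-1}\sum_{i=1}^m\mu'(\tilde{\bx}_{n,i}^\top\bbeta)\tilde{\bx}_{n,i}\tilde{\bx}_{n,i}^\top=-\tilde{\boldsymbol{\cI}}_n(\bbeta)$, a first-order Taylor expansion with integral remainder and $\boldsymbol{\psi}_m(\tilde{\bbeta})=\mathbf{0}$ give
$$\sqrt{n}(\tilde{\bbeta}-\hat{\bbeta})=\mathbf{H}_{m,n}^{-1}\,\sqrt{n}\,\boldsymbol{\psi}_m(\hat{\bbeta}),\qquad \mathbf{H}_{m,n}:=\int_0^1\tilde{\boldsymbol{\cI}}_n\!\big(\hat{\bbeta}+s(\tilde{\bbeta}-\hat{\bbeta})\big)\,ds,$$
where $\mathbf{H}_{m,n}$ is invertible because, for a canonical GLM, $\mu'>0$ and $\tilde{\bX}_n$ has full rank (Assumption~\ref{as:regularityx}(e)), so each $\tilde{\boldsymbol{\cI}}_n(\cdot)$ — hence their average — is positive definite. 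The proof then splits into (i) showing $\mathbf{H}_{m,n}\xrightarrow{P|\hat{\bbeta},\hat{\btheta}}\boldsymbol{\Sigma}$ and (ii) establishing the conditional CLT $\sqrt{n}\,\boldsymbol{\psi}_m(\hat{\bbeta})\,|\,\hat{\bbeta},\hat{\btheta}\xrightarrow{D}\mathcal{N}(\mathbf{0},\alpha^{-1}\boldsymbol{\mathcal{A}}_1+\boldsymbol{\mathcal{A}}_2)$; a conditional Slutsky argument then delivers $\mathcal{N}\{\mathbf{0},\boldsymbol{\Sigma}^{-1}(\alpha^{-1}\boldsymbol{\mathcal{A}}_1+\boldsymbol{\mathcal{A}}_2)\boldsymbol{\Sigma}^{-1}\}$.

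For (i), I would use Lemma~\ref{lemma:condConsis} to get $\|\tilde{\bbeta}-\hat{\bbeta}\|=o_{P|\mathcal{D}_n}(1)$, so the integration path in $\mathbf{H}_{m,n}$ stays in a shrinking neighborhood of $\hat{\bbeta}$; then bound the operator norm of the Jacobian of $\bbeta\mapsto\tilde{\boldsymbol{\cI}}_n(\bbeta)$, uniformly over $\mathcal{B}$, by $m^{-1}\sum_{i=1}^m|\mu''(\tilde{\bx}_{n,i}^\top\bbeta)|\,\|\tilde{\bx}_{n,i}\|^3=O_{P|\mathcal{D}_n}(1)$ using Assumption~\ref{as:regularityx}(c), which yields $\|\mathbf{H}_{m,n}-\tilde{\boldsymbol{\cI}}_n(\hat{\bbeta})\|=o_{P|\mathcal{D}_n}(1)$. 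Combining with Assumption~\ref{as:infoMat} gives $\mathbf{H}_{m,n}\xrightarrow{P|\mathcal{D}_n}\boldsymbol{\Sigma}$, and since $\sigma(\hat{\bbeta},\hat{\btheta})\subseteq\sigma(\mathcal{D}_n)$, taking conditional expectations (bounded convergence) upgrades this to $\mathbf{H}_{m,n}\xrightarrow{P|\hat{\bbeta},\hat{\btheta}}\boldsymbol{\Sigma}$; continuity of matrix inversion at the positive-definite $\boldsymbol{\Sigma}$ then gives $\mathbf{H}_{m,n}^{-1}\xrightarrow{P|\hat{\bbeta},\hat{\btheta}}\boldsymbol{\Sigma}^{-1}$.

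For (ii), I would decompose $\sqrt{n}\,\boldsymbol{\psi}_m(\hat{\bbeta})=\sqrt{n}\,T_{1,m,n}+\sqrt{n}\,T_{2,n}$, where $T_{1,m,n}=m^{-1}\sum_{i=1}^m\boldsymbol{\phi}(\tilde{\bx}_{n,i},\hat{\bbeta})-\E_{\hat{F}_n}\{\boldsymbol{\phi}(\bx,\hat{\bbeta})\}$ is the centered synthetic average and $T_{2,n}=\E_{\hat{F}_n}\{\boldsymbol{\phi}(\bx,\hat{\bbeta})\}$ is the transportability bias. The term $T_{2,n}$ is $\mathcal{D}_n$-measurable, and Assumption~\ref{as:biasNorm} says $\sqrt{n}\,T_{2,n}\,|\,\hat{\bbeta},\hat{\btheta}\xrightarrow{D}\mathcal{N}(\mathbf{0},\boldsymbol{\mathcal{A}}_2)$. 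Conditionally on $\mathcal{D}_n$, the synthetic rows are i.i.d.\ (Proposition~\ref{prop:choleskyindep}), the summands have conditional mean zero and conditional variance $\boldsymbol{\mathcal{A}}^{(n)}_1\to\boldsymbol{\mathcal{A}}_1$ (Assumption~\ref{as:varMat}), and Assumption~\ref{as:regularityx}(a) at $\bbeta=\hat{\bbeta}$ gives a uniform $(2+\delta)$-moment bound, so $m^{-1}\sum_i\E[\|\boldsymbol{\phi}(\tilde{\bx}_{n,i},\hat{\bbeta})\|^2\mathbf{1}\{\|\boldsymbol{\phi}(\tilde{\bx}_{n,i},\hat{\bbeta})\|>\epsilon\sqrt{m}\}\,|\,\mathcal{D}_n]\le(\epsilon\sqrt{m})^{-\delta}C_1\to0$, which is the Lindeberg condition; a triangular-array conditional CLT, phrased as convergence in probability of the conditional characteristic function, then gives $\E[\exp(it^\top\sqrt{m}\,T_{1,m,n})\,|\,\mathcal{D}_n]\xrightarrow{P}\exp(-\tfrac12 t^\top\boldsymbol{\mathcal{A}}_1 t)$ for each $t$, and rescaling by $\sqrt{n/m}\to\alpha^{-1/2}$ turns $\boldsymbol{\mathcal{A}}_1$ into $\alpha^{-1}\boldsymbol{\mathcal{A}}_1$. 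To combine the two pieces I would use the tower property over $\sigma(\hat{\bbeta},\hat{\btheta})\subseteq\sigma(\mathcal{D}_n)$ with the $\mathcal{D}_n$-measurability of $T_{2,n}$:
$$\E\big[e^{it^\top\sqrt{n}\boldsymbol{\psi}_m(\hat{\bbeta})}\,\big|\,\hat{\bbeta},\hat{\btheta}\big]=\E\Big[e^{it^\top\sqrt{n}T_{2,n}}\,\E\big[e^{it^\top\sqrt{n}T_{1,m,n}}\,\big|\,\mathcal{D}_n\big]\,\Big|\,\hat{\bbeta},\hat{\btheta}\Big].$$
The inner conditional characteristic function is bounded by $1$ and converges in probability to the constant $\exp(-\tfrac12\alpha^{-1}t^\top\boldsymbol{\mathcal{A}}_1 t)$, so bounded convergence lets me factor it out, and Assumption~\ref{as:biasNorm} handles the remaining $\E[e^{it^\top\sqrt{n}T_{2,n}}\,|\,\hat{\bbeta},\hat{\btheta}]\to\exp(-\tfrac12 t^\top\boldsymbol{\mathcal{A}}_2 t)$; the product is the characteristic function of $\mathcal{N}(\mathbf{0},\alpha^{-1}\boldsymbol{\mathcal{A}}_1+\boldsymbol{\mathcal{A}}_2)$, so Lévy's continuity theorem applied conditionally gives (ii).

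The step I expect to be the main obstacle is (ii). Because the conditioning $\sigma$-field $\mathcal{D}_n$ grows with $n$ and the conditional variance $\boldsymbol{\mathcal{A}}^{(n)}_1$ converges only in probability, the CLT for $T_{1,m,n}$ cannot be a fixed-law conditional limit and must instead be handled through convergence in probability of conditional characteristic functions; moreover $T_{1,m,n}$ and $T_{2,n}$ are not independent given $\sigma(\hat{\bbeta},\hat{\btheta})$ — they are coupled through $\mathcal{D}_n$ — so a naive Slutsky-type argument would be incorrect, and the additivity of the two variance contributions has to emerge from the tower-property/bounded-convergence computation above. Everything else (the linearization, the Hessian consistency via Assumption~\ref{as:regularityx}(c), and matrix-inversion continuity) is routine.
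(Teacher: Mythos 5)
Your proposal is correct and follows essentially the same route as the paper's proof: a Taylor linearization of $\boldsymbol{\psi}_m$ about $\hat{\bbeta}$ controlled via Assumption~\ref{as:regularityx}(c) and Lemma~\ref{lemma:condConsis}, a decomposition of $\sqrt{n}\,\boldsymbol{\psi}_m(\hat{\bbeta})$ into a fluctuation term handled by a conditional Lindeberg--Feller CLT and a bias term handled by Assumption~\ref{as:biasNorm}, and a tower-property/bounded-convergence argument to establish the asymptotic factorization of the two contributions. The only differences are cosmetic --- you use an integral-form remainder and conditional characteristic functions where the paper uses a Lagrange-form remainder and conditional distribution functions with indicators --- and your Lindeberg bound should be stated for the centered summands $\boldsymbol{\phi}(\tilde{\bx}_{n,i},\hat{\bbeta})-\E_{\hat{F}_n}\{\boldsymbol{\phi}(\bx,\hat{\bbeta})\}$, a trivial adjustment via Jensen's inequality as in the paper.
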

	
	\begin{proof}[Proof of Lemma \ref{lemma:condNorm}]
		
		From Lemma \ref{lemma:condConsis}, we know that the estimator $\tilde{\bbeta}$ is consistent for $\hat{\bbeta}$ conditionally on $\mathcal{D}_n$. We now demonstrate that this consistency also holds conditionally on $\{\hat{\bbeta}, \hat{\btheta}\}$. Specifically, we start by writing
		
		$$
		\lim_{(m,n)\rightarrow\infty}\Pr\left(\|\tilde{\bbeta} - \hat{\bbeta}\| > \epsilon \big| \hat{\bbeta}, \hat{\btheta} \right)
		=\lim_{(m,n)\rightarrow\infty}\E\Biggl\{\Pr\Bigl(\|\tilde{\bbeta} - \hat{\bbeta}\| > \epsilon \,\Big|\,\mathcal{D}_n\Bigr) \Big| \hat{\bbeta}, \hat{\btheta} \Biggr\}\, ,
		$$
		where we have used that $\sigma(\{\hat{\bbeta},\hat{\btheta}\})\subseteq\sigma(\mathcal{D}_n)$, where $\sigma(\cdot)$ denotes the generated sigma-algebra.
		Since the conditional probability is bounded (indeed, it is a random variable taking values in $[0,1]$), the dominated convergence theorem allows us to exchange the limit and the expectation. Thus, we have
		\begin{equation} \label{eq:ConsisUncond}
			\lim_{(m,n)\rightarrow\infty}\Pr\bigl(\|\tilde{\bbeta} - \hat{\bbeta}\| > \epsilon \big| \hat{\bbeta}, \hat{\btheta}\bigr)
			=\E\Biggl\{\lim_{(m,n)\rightarrow\infty}\Pr\Bigl(\|\tilde{\bbeta} - \hat{\bbeta}\| > \epsilon \,\Big|\,\mathcal{D}_n\Bigr)\Big| \hat{\bbeta}, \hat{\btheta}\Biggr\}=0\,,
		\end{equation}
		which establishes the consistency of $\tilde{\bbeta}$ for $\hat{\bbeta}$, conditionally on $\{\hat{\bbeta}, \hat{\btheta}\}$.
		
		Now, for each $\mu(\tilde{\bx}_{n,i}^\top\bbeta)$ we perform a Taylor expansion about $\hat{\bbeta}$:
		\begin{equation} \label{eq:PsiTaylor}
			0 = \boldsymbol{\psi}_m(\tilde{\bbeta}) = \boldsymbol{\psi}_m(\hat{\bbeta}) + \frac{1}{m}\sum_{i=1}^{m} \mu'(\tilde{\bx}_{n,i}^\top \hat{\bbeta})\, \tilde{\bx}_{n,i}\tilde{\bx}_{n,i}^\top (\tilde{\bbeta} - \hat{\bbeta}) + R_{n,m}\, ,
		\end{equation}
		
		where the remainder term is given by
		
		$$
		R_{n,m} = \frac{1}{m}\sum_{i=1}^{m} \frac{1}{2}\mu''(\tilde{\bx}_{n,i}^\top\boldsymbol{\xi}_{i})\left\{\tilde{\bx}_{n,i}^\top (\tilde{\bbeta} - \hat{\bbeta})\right\}^2\tilde{\bx}_{n,i}\, ,
		$$
		
		with each $\boldsymbol{\xi}_{i}$ lying between $\tilde{\bbeta}$ and $\hat{\bbeta}$.
		To bound the residual, note that
		$$\|R_{n,m}\| \le \frac{1}{2m}\sum_{i=1}^m \left|\mu''(\tilde{\bx}_{n,i}^\top\boldsymbol{\xi}_{i})\right| \left\{\tilde{\bx}_{n,i}^\top (\tilde{\bbeta} - \hat{\bbeta})\right\}^2\|\tilde{\bx}_{n,i}\|\, .$$
		
		Applying the Cauchy–Schwarz inequality, we obtain
		$$\left\{\tilde{\bx}_{n,i}^\top (\tilde{\bbeta} - \hat{\bbeta})\right\}^2\|\tilde{\bx}_{n,i}\| \le \|\tilde{\bx}_{n,i}\|^2\|\tilde{\bbeta} - \hat{\bbeta}\|^2\|\tilde{\bx}_{n,i}\| = \|\tilde{\bx}_{n,i}\|^3\|\tilde{\bbeta} - \hat{\bbeta}\|^2\, ,$$
		and therefore
		$$
		\|R_{n,m}\| \le \frac{\|\tilde{\bbeta} - \hat{\bbeta}\|^2}{2m}\sum_{i=1}^m \left|\mu''(\tilde{\bx}_{n,i}^\top\boldsymbol{\xi}_{i})\right||\tilde{\bx}_{n,i}\|^3 \, .
		$$ 
		Invoking Assumption~\ref{as:regularityx}(c), 
		and by Eq.~(\ref{eq:ConsisUncond}) we have   
		\begin{equation} \label{eq:RemainderBounded}
			\|R_{n,m}\| = O_{P|\hat{\bbeta}, \hat{\btheta}}\left(\|\tilde{\bbeta} - \hat{\bbeta}\|^2\right) = o_{P|\hat{\bbeta}, \hat{\btheta}}\left(\|\tilde{\bbeta} - \hat{\bbeta}\|\right) \, .
		\end{equation}
		
		Next, we wish to show that
		\begin{equation} \label{eq:PsiNorm}
			\sqrt{n}\,\boldsymbol{\psi}_m(\hat{\bbeta}) \big| \hat{\bbeta}, \hat{\btheta} \xrightarrow{D} \mathcal{N}\Big(\mathbf{0},\,\alpha^{-1}\boldsymbol{\mathcal{A}}_1 + \boldsymbol{\mathcal{A}}_2\Big) \, .
		\end{equation}
		To begin, we decompose $\boldsymbol{\psi}_m(\hat{\bbeta})$ by adding and subtracting its conditional expectation,
		$$
		\boldsymbol{\psi}_m(\hat{\bbeta}) = \underbrace{\Big\{\boldsymbol{\psi}_m(\hat{\bbeta}) - \E_{\hat{F}_n}\big(\boldsymbol{\psi}_m(\hat{\bbeta})\big)\Big\}}_{\text{Fluctuation term}} + \underbrace{\E_{\hat{F}_n}\big(\boldsymbol{\psi}_m(\hat{\bbeta})\big)}_{\text{Bias term}} \, .
		$$
		Here, the fluctuation term has mean zero by construction, while the bias term represents the systematic deviation away from zero due to the synthetic data distribution.
		
		Since the conditional distribution $\hat{F}_n$ depends on $n$, we invoke the Lindeberg–Feller central limit theorem for triangular arrays to study the fluctuation term. Define
		$$
		\mathbf{z}_{n,i} = \boldsymbol{\phi}(\tilde{\bx}_{n,i}, \hat{\bbeta}) - \E_{\hat{F}_n}\Big\{\boldsymbol{\phi}(\bx, \hat{\bbeta})\Big\}, \quad i=1,\ldots,m \, ,
		$$
		which, conditionally on $\mathcal{D}_n$, are i.i.d. with variance $\text{Var}(\mathbf{z}_{n,i} \mid \mathcal{D}_n) = \boldsymbol{\mathcal{A}}_1^{(n)}$, where, by Assumption~\ref{as:varMat}, $\boldsymbol{\mathcal{A}}_1^{(n)}$ converges to a positive definite matrix $\boldsymbol{\mathcal{A}}_1$.
		
		To apply the Lindeberg–Feller theorem, we must verify Lindeberg’s condition. Specifically, for any $\epsilon > 0$ we need to show that
		$$
		\E_{\hat{F}_n}\Big\{\|\mathbf{z}_{n,i}\|^2\, \mathbbm{1}\big(\|\mathbf{z}_{n,i}\| > \epsilon\sqrt{m}\big)\Big\} \le \frac{\E_{\hat{F}_n}\Big\{\|\mathbf{z}_{n,i}\|^{2+\delta}\Big\}}{(\epsilon\sqrt{m})^\delta} \rightarrow 0 \, .
		$$
		This inequality follows from a standard application of the Markov inequality (see, e.g., \citet[pg. 21]{van2000asymptotic}). Consequently, it is sufficient to show that the $(2+\delta)$'th moment,
		$$
		\E_{\hat{F}_n}\Big\{\|\mathbf{z}_{n,i}\|^{2+\delta}\Big\} \, ,
		$$
		is almost surely uniformly bounded. We first apply the inequality
		$$
		\|\mathbf{a} - \mathbf{b}\|^{2+\delta} \le 2^{1+\delta}\Big(\|\mathbf{a}\|^{2+\delta} + \|\mathbf{b}\|^{2+\delta}\Big),
		$$
		with $\mathbf{a} = \boldsymbol{\phi}(\tilde{\bx}_{n,i}, \hat{\bbeta})$ and $\mathbf{b} = \E_{\hat{F}_n}\{\boldsymbol{\phi}(\bx, \hat{\bbeta})\}$. This yields
		$$
		\|\mathbf{z}_{n,i}\|^{2+\delta} \le 2^{1+\delta}\Big(\|\boldsymbol{\phi}(\tilde{\bx}_{n,i}, \hat{\bbeta})\|^{2+\delta} + \Big\|\E_{\hat{F}_n}\{\boldsymbol{\phi}(\bx, \hat{\bbeta})\}\Big\|^{2+\delta}\Big).
		$$
		Jensen’s inequality implies that
		$$
		\Big\|\E_{\hat{F}_n}\{\boldsymbol{\phi}(\bx, \hat{\bbeta})\}\Big\|^{2+\delta} \le \E_{\hat{F}_n}\Big\{\|\boldsymbol{\phi}(\bx, \hat{\bbeta})\|^{2+\delta}\Big\} \, ,
		$$
		so we obtain
		$$
		\E_{\hat{F}_n}\Big\{\|\mathbf{z}_{n,i}\|^{2+\delta}\Big\} \le 2^{1+\delta}\Big( \E_{\hat{F}_n}\Big\{\|\boldsymbol{\phi}(\bx, \hat{\bbeta})\|^{2+\delta}\Big\} + \E_{\hat{F}_n}\Big\{\|\boldsymbol{\phi}(\bx, \hat{\bbeta})\|^{2+\delta}\Big\} \Big) = 2^{2+\delta}\, \E_{\hat{F}_n}\Big\{\|\boldsymbol{\phi}(\bx, \hat{\bbeta})\|^{2+\delta}\Big\}\,.
		$$
		which is uniformly bounded due to Assumption~\ref{as:regularityx}(a), so the Lindeberg condition is verified. For convenience, define
		$$
		\pi_{m,n}(\mathbf{t}) = \Pr\left( n^{1/2}\alpha^{1/2}\boldsymbol{\mathcal{A}}_1^{-1/2}\Big\{\boldsymbol{\psi}_m(\hat{\bbeta}) - \E_{\hat{F}_n}\big(\boldsymbol{\psi}_m(\hat{\bbeta})\big)\Big\} \le \mathbf{t} \,\Big|\, \mathcal{D}_n \right).
		$$
		Then, for any $\mathbf{t} \in \mathbb{R}^p$, we have
		$$
		\lim_{(m,n)\to\infty} \pi_{m,n}(\mathbf{t}) = \Phi(\mathbf{t}),
		$$
		where $\Phi(\cdot)$ denotes the multivariate normal distribution function. By the dominated convergence theorem, the conditional probability on $\{\hat{\bbeta}, \hat{\btheta}\}$ converges to the same limit:
		$$
		\E\big\{\pi_{m,n}(\mathbf{t})\big| \hat{\bbeta}, \hat{\btheta}\big\} \to \Phi(\mathbf{t}).
		$$
		
		Assumption~\ref{as:biasNorm} already guarantees the normality of the bias term conditionally on $\{\hat{\bbeta}, \hat{\btheta}\}$. It remains to show that this bias term is asymptotically independent of the fluctuation term, ensuring that their sum is normally distributed. Notably, upon conditioning on $\mathcal{D}_n$, the bias term becomes deterministic. Define
		\begin{align*}
			P^* \coloneq\, &\Pr\Bigg( n^{1/2}\alpha^{1/2}\boldsymbol{\mathcal{A}}_1^{-1/2}\Big\{\boldsymbol{\psi}_m(\hat{\bbeta}) - \E_{\hat{F}_n}\big(\boldsymbol{\psi}_m(\hat{\bbeta})\big)\Big\} \le \mathbf{t},\\
			&\quad\quad\quad n^{1/2}\boldsymbol{\mathcal{A}}_2^{-1/2}\E_{\hat{F}_n}\big(\boldsymbol{\psi}_m(\hat{\bbeta})\big) \le \mathbf{s}\Big| \hat{\bbeta}, \hat{\btheta} \Bigg).
		\end{align*}
		Conditioning on $\mathcal{D}_n$, we can write
		$$
		P^* = \E\Bigg[\pi_{m,n}(\mathbf{t})\,\mathbbm{1}\left\{ n^{1/2}\boldsymbol{\mathcal{A}}_2^{-1/2}\E_{\hat{F}_n}\big(\boldsymbol{\psi}_m(\hat{\bbeta})\big) \le \mathbf{s} \right\}\Big| \hat{\bbeta}, \hat{\btheta}\Bigg].
		$$
		Since $\E\big\{\pi_{m,n}(\mathbf{t})\mid \hat{\bbeta}, \hat{\btheta}\big\}$ converges almost surely to $\Phi(\mathbf{t})$, we decompose the expectation as
		
		$$
		\begin{aligned}
			P^* = \; & \E\Biggl[\Phi(\mathbf{t})\,\mathbbm{1}\Bigl\{ n^{1/2}\,\boldsymbol{\mathcal{A}}_2^{-1/2}\,\E_{\hat{F}_n}\bigl(\boldsymbol{\psi}_m(\hat{\bbeta})\bigr) \le \mathbf{s} \Bigr\}\,\Big|\, \hat{\bbeta}, \hat{\btheta}\Biggr] \\
			& {}+ \E\Biggl[\Bigl(\pi_{m,n}(\mathbf{t}) - \Phi(\mathbf{t})\Bigr)\,\mathbbm{1}\Bigl\{ n^{1/2}\,\boldsymbol{\mathcal{A}}_2^{-1/2}\,\E_{\hat{F}_n}\bigl(\boldsymbol{\psi}_m(\hat{\bbeta})\bigr) \le \mathbf{s} \Bigr\}\,\Big|\, \hat{\bbeta}, \hat{\btheta}\Biggr].
		\end{aligned}
		$$
		Thus,
		$$
		\Big| P^* - \Phi(\mathbf{t})\,\Pr\Big(n^{1/2}\boldsymbol{\mathcal{A}}_2^{-1/2}\E_{\hat{F}_n}\big(\boldsymbol{\psi}_m(\hat{\bbeta})\big) \le \mathbf{s}\big| \hat{\bbeta}, \hat{\btheta}\Big) \Big| \le \Big|\E\Big\{ \pi_{m,n}(\mathbf{t}) - \Phi(\mathbf{t})\big| \hat{\bbeta}, \hat{\btheta} \Big\}\Big|.
		$$
		Since the right-hand side vanishes as $(m,n)\to\infty$, it follows that
		$$
		P^* = \Phi(\mathbf{t})\,\Phi(\mathbf{s}) + o(1).
		$$
		This establishes the asymptotic independence between the fluctuation term and the bias term, implying Eq.(\ref{eq:PsiNorm}).
		
		Finally, combining Eq.'s (\ref{eq:PsiTaylor})–(\ref{eq:RemainderBounded}), we obtain
		$$
		\sqrt{n}\big(\tilde{\bbeta} - \hat{\bbeta}\big) = -\sqrt{n}\,\tilde{\boldsymbol{\cI}}^{-1}(\hat{\bbeta})\,\boldsymbol{\psi}_m(\hat{\bbeta}) + o_P(1) \, ,
		$$
		so Lemma \ref{lemma:condNorm} follows by invoking Assumption~\ref{as:infoMat}, Eq.(\ref{eq:PsiNorm}) and Slutsky’s theorem.
		
	\end{proof}
	
	Finally, we can prove Theorem \ref{thm:normality}.
	\begin{proof}[Proof of Theorem \ref{thm:normality}]
		We have
		$$
		\sqrt{n}\bigl(\tilde{\bbeta}-\bbeta^*\bigr)
		=\sqrt{n}\bigl(\tilde{\bbeta}-\hat{\bbeta}\bigr)
		+\sqrt{n}\bigl(\hat{\bbeta}-\bbeta^*\bigr).
		$$
		By Lemma~\ref{lemma:condNorm}, the first term is asymptotically normal conditionally on 
		$\{\hat{\bbeta},\hat{\btheta}\}$, while standard GLM theory implies that the second term is asymptotically normal with zero mean and asymptotic variance $\boldsymbol{\Sigma}^{-1}$. It remains to show that these two terms are asymptotically independent, noting that the second term is deterministic given $\hat{\bbeta}$.
		
		Denote
		$$
		\mathbf{A}=\boldsymbol{\Sigma}^{-1}\Bigl(\alpha^{-1}\boldsymbol{\mathcal{A}}_1
		+\boldsymbol{\mathcal{A}}_2\Bigr)\boldsymbol{\Sigma}^{-1}.
		$$
		Then,
		\begin{align*}
			\Pr\Bigl(\sqrt{n}&\bigl(\tilde{\bbeta}-\hat{\bbeta}\bigr)\le \mathbf{t},\,
			\sqrt{n}\bigl(\hat{\bbeta}-\bbeta^*\bigr)\le \mathbf{s}\Bigr)\\[1mm]
			&=\E\!\Biggl\{\Pr\Bigl(\sqrt{n}\bigl(\tilde{\bbeta}-\hat{\bbeta}\bigr)\le \mathbf{t},\,
			\sqrt{n}\bigl(\hat{\bbeta}-\bbeta^*\bigr)\le \mathbf{s}\,\Bigm|
			\hat{\bbeta},\hat{\btheta}\Bigr)\Biggr\}\\[1mm]
			&=\E\!\Biggl[\Pr\Bigl(\sqrt{n}\bigl(\tilde{\bbeta}-\hat{\bbeta}\bigr)\le \mathbf{t}\,\Bigm|
			\hat{\bbeta},\hat{\btheta}\Bigr)
			\mathbbm{1}\!\Bigl\{\sqrt{n}\bigl(\hat{\bbeta}-\bbeta^*\bigr)\le \mathbf{s}\Bigr\}\Biggr]\\[1mm]
			&=\E\!\Biggl[\Phi\Bigl(\mathbf{A}^{-1/2}\mathbf{t}\Bigr)
			\mathbbm{1}\!\Bigl\{\sqrt{n}\bigl(\hat{\bbeta}-\bbeta^*\bigr)\le \mathbf{s}\Bigr\}\Biggr]\\[1mm]
			&\quad+\E\!\Biggl[\Bigl\{\Pr\Bigl(\sqrt{n}\bigl(\tilde{\bbeta}-\hat{\bbeta}\bigr)
			\le \mathbf{t}\,\Bigm|\,\hat{\bbeta},\hat{\btheta}\Bigr)
			-\Phi\Bigl(\mathbf{A}^{-1/2}\mathbf{t}\Bigr)\Bigr\}\mathbbm{1}\!\Bigl\{\sqrt{n}\bigl(\hat{\bbeta}-\bbeta^*\bigr)
			\le \mathbf{s}\Bigr\}\Biggr].
		\end{align*}
		Thus,
		\begin{align*}
			\Bigl|\Pr\Bigl(&\sqrt{n}\bigl(\tilde{\bbeta}-\hat{\bbeta}\bigr)\le \mathbf{t},\,
			\sqrt{n}\bigl(\hat{\bbeta}-\bbeta^*\bigr)\le \mathbf{s}\Bigr)
			-\E\!\Bigl[\Phi\Bigl(\mathbf{A}^{-1/2}\mathbf{t}\Bigr)
			\mathbbm{1}\!\Bigl\{\sqrt{n}\bigl(\hat{\bbeta}-\bbeta^*\bigr)
			\le \mathbf{s}\Bigr\}\Bigr]\Bigr|\\[1mm]
			&\le \Biggl|\E\!\Bigl[\Pr\Bigl(\sqrt{n}\bigl(\tilde{\bbeta}-\hat{\bbeta}\bigr)
			\le \mathbf{t}\,\Bigm|\,\hat{\bbeta},\hat{\btheta}\Bigr)
			-\Phi\Bigl(\mathbf{A}^{-1/2}\mathbf{t}\Bigr)\Bigr]\Biggr|.
		\end{align*}
		
		Since the right-hand side vanishes as $(m,n)\to\infty$, we obtain
		$$
		\Pr\Bigl(\sqrt{n}\bigl(\tilde{\bbeta}-\hat{\bbeta}\bigr)\le \mathbf{t},\,
		\sqrt{n}\bigl(\hat{\bbeta}-\bbeta^*\bigr)\le \mathbf{s}\Bigr)
		\longrightarrow \Phi\Bigl(\mathbf{A}^{-1/2}\mathbf{t}\Bigr)
		\Phi\Bigl(\boldsymbol{\Sigma}^{1/2}\mathbf{s}\Bigr),
		$$
		which establishes the asymptotic independence and completes the proof.
	\end{proof}
	
	\section{Technical Conditions for Common $\mu$ Functions} \label{appen:mu}
	
	In this section we provide sufficient conditions for Assumption~\ref{as:regularityx} to hold for some commonly used $\mu$ functions. 
	
	First, we note that in conjunction with Assumption~\ref{as:regularityx}(e), a sufficient condition for Assumption~\ref{as:regularityx}(d) to hold is that $\mu$ is strictly monotone. Let $\mathbf{v}$ be a non-zero vector, then  
	$$
	\mathbf{v}^\top \E_{\hat{F}_n} \left\{\mu'(\bx^\top\bbeta) \bx \bx^\top \right\} \mathbf{v} = \E_{\hat{F}_n} \left\{\mathbf{v}^\top \mu'(\bx^\top\bbeta) \bx \bx^\top \mathbf{v} \right\} = \E_{\hat{F}_n} \left\{\mu'(\bx^\top\bbeta) (v^\top\bx)^2 \right\}.
	$$
	Assume without loss of generality that $\mu$ is strictly increasing, then $\mu'(\bx^\top\bbeta) > 0$, and the expression is strictly positive unless $\mathbf{v}^\top \bx = 0$ almost surely. However, that would imply $\mathbf{v}^\top \E_{\hat{F}_n}(\bx \bx^\top) \mathbf{v} = 0$ almost surely, contradicting Assumption~\ref{as:regularityx}(e). All $\mu$ functions discussed below are strictly monotone, so we need not further discuss Assumption~\ref{as:regularityx}(d).
	
	\subsection{$\mu(x) = \exp(x)$}
	
	This corresponds to the log-link function, which is the canonical link for Poisson regression. It is also frequently used as a non-canonical link in Gamma regression, including exponential regression as a special case. Since the function $\exp(x)$ is unbounded, and grows exponentially, we need a strong tail control for the distribution of $\tilde{\bx}_n$ as given in the following lemma, where we assume a sub-exponential distribution.
	
	\begin{lemma} \label{lemma:sufficient_cond_exp(x)}
		Assumption~\ref{as:regularityx}(a)--(c) holds if there exist constants $\rho_1 > 0$ and $\rho_2 > (2 + \delta)M$ (where $M$ is defined in Assumption~\ref{as:reularityParam}), and an index $n^*$ such that for all $n > n^*$, the following holds almost surely:
		$$
		\Pr(\|\tilde{\bx}_{n}\| \ge r | \mathcal{D}_n) \le \rho_1 e^{-\rho_2 r}.
		$$
	\end{lemma}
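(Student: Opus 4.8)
The plan is to reduce all three parts of Assumption~\ref{as:regularityx} to a single exponential-moment estimate and then verify each in turn. The estimate I will use is: for every $t<\rho_2$ and every integer $k\ge 0$ there is a finite constant $C(t,k)$, \emph{not depending on $n$}, such that $\E_{\hat{F}_n}\{e^{t\|\bx\|}\|\bx\|^{k}\}\le C(t,k)$ for all $n>n^{*}$, almost surely. This follows by absorbing the polynomial factor, $\|\bx\|^{k}\le c(k,\epsilon)\,e^{\epsilon\|\bx\|}$ for any $\epsilon>0$, and then integrating the assumed sub-exponential tail,
$$
\E_{\hat{F}_n}\!\left\{e^{(t+\epsilon)\|\bx\|}\right\}=1+(t+\epsilon)\!\int_0^{\infty}\! e^{(t+\epsilon)r}\,\Pr\!\left(\|\bx\|>r\mid\mathcal{D}_n\right)dr\le 1+\frac{(t+\epsilon)\rho_1}{\rho_2-t-\epsilon},
$$
which is finite once $\epsilon$ is small enough that $t+\epsilon<\rho_2$, and is independent of $n$ because $\rho_1,\rho_2$ are fixed. (Taking $t=0$ also covers pure polynomial moments of $\|\bx\|$.) The other recurring device is Cauchy--Schwarz: for $\bbeta\in\mathcal{B}$ one has $|\bx^{\top}\bbeta|\le M\|\bx\|$, so $e^{\bx^{\top}\bbeta}\le e^{M\|\bx\|}$ uniformly over $\mathcal{B}$; likewise $\|\hat{\btheta}\|\le M$ (valid for $n>n^{*}$, since $\hat{\btheta}\xrightarrow{a.s.}\btheta^{*}\in\mathrm{int}(\Theta)$).

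For part (a), write $\boldsymbol{\phi}(\bx,\bbeta)=\{\bx^{\top}\hat{\btheta}-e^{\bx^{\top}\bbeta}\}\bx$ and bound $|\bx^{\top}\hat{\btheta}-e^{\bx^{\top}\bbeta}|\le M\|\bx\|+e^{M\|\bx\|}$; the inequality $(u+v)^{2+\delta}\le 2^{1+\delta}(u^{2+\delta}+v^{2+\delta})$ then gives $\|\boldsymbol{\phi}(\bx,\bbeta)\|^{2+\delta}\le 2^{1+\delta}\{M^{2+\delta}\|\bx\|^{2(2+\delta)}+e^{(2+\delta)M\|\bx\|}\|\bx\|^{2+\delta}\}$, uniformly in $\bbeta\in\mathcal{B}$. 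Taking $\E_{\hat{F}_n}$ and applying the moment estimate with $t=(2+\delta)M$ (and $t=0$ for the first summand) finishes the step; this is the only place the hypothesis $\rho_2>(2+\delta)M$ is actually needed, and it is sharp, because raising the Poisson mean to the $(2+\delta)$-th power produces exactly the exponent $(2+\delta)\bx^{\top}\bbeta$.

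For part (b), $\mu(x)=e^{x}$ is locally Lipschitz: by the fundamental theorem of calculus $|e^{\tilde{\bx}_n^{\top}\bbeta_1}-e^{\tilde{\bx}_n^{\top}\bbeta_2}|\le e^{\max(\tilde{\bx}_n^{\top}\bbeta_1,\tilde{\bx}_n^{\top}\bbeta_2)}\,|\tilde{\bx}_n^{\top}(\bbeta_1-\bbeta_2)|\le e^{M\|\tilde{\bx}_n\|}\|\tilde{\bx}_n\|\,\|\bbeta_1-\bbeta_2\|$, so one may take $L(\tilde{\bx}_n)=e^{M\|\tilde{\bx}_n\|}\|\tilde{\bx}_n\|$ (no convexity of $\mathcal{B}$ is needed, since the bound on the exponent holds directly). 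Then $\E_{\hat{F}_n}\{L^{2}(\bx)\|\bx\|^{2}\}=\E_{\hat{F}_n}\{e^{2M\|\bx\|}\|\bx\|^{4}\}$ is uniformly bounded by the moment estimate with $t=2M\le(2+\delta)M<\rho_2$. For part (c), $\mu''=e^{\cdot}$, so $m^{-1}\sum_{i}|\mu''(\tilde{\bx}_{n,i}^{\top}\bbeta_i)|\,\|\tilde{\bx}_{n,i}\|^{3}\le m^{-1}\sum_{i}e^{M\|\tilde{\bx}_{n,i}\|}\|\tilde{\bx}_{n,i}\|^{3}$ regardless of the choice of $\bbeta_1,\dots,\bbeta_m\in\mathcal{B}$; conditionally on $\mathcal{D}_n$ the right-hand side is an average of i.i.d.\ nonnegative terms with common mean $\E_{\hat{F}_n}\{e^{M\|\bx\|}\|\bx\|^{3}\}\le C(M,3)$, and Markov's inequality then gives the required $O_{P\mid\mathcal{D}_n}(1)$ bound.

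The work here is mostly bookkeeping rather than a genuine obstacle. The two points that need care are (i) making the exponential-moment estimate uniform in $n$, which goes through precisely because $\rho_1,\rho_2$ are $n$-free constants; and (ii) tracking the largest exponent ever encountered---namely $(2+\delta)M$, from part (a)---so that once it sits strictly below $\rho_2$, the arbitrarily small slack $\epsilon$ needed to absorb polynomial factors into the exponential remains available. A minor additional point is the randomness of $\hat{\btheta}$, handled by using $\|\hat{\btheta}\|\le M$ on the probability-one event $\{\hat{\btheta}\in\Theta\}$, valid for $n>n^{*}$ after possibly enlarging $n^{*}$.
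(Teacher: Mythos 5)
Your proposal is correct and follows essentially the same route as the paper's proof: the same decomposition of $\|\boldsymbol{\phi}\|^{2+\delta}$ via the $c_r$-inequality, the same Lipschitz constant $L(\bx)=e^{M\|\bx\|}\|\bx\|$ from the mean value theorem, and the same Markov-inequality argument for part (c). The only (cosmetic) difference is that you absorb polynomial factors into the exponential and bound a single moment $\E_{\hat{F}_n}\{e^{t\|\bx\|}\|\bx\|^{k}\}$ with an $\epsilon$-slack argument, whereas the paper integrates each $f'(r)e^{-\rho_2 r}$ explicitly via Gamma-function identities; both are valid and rely on exactly the same strict inequality $\rho_2>(2+\delta)M$.
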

	\begin{proof}[Proof of Lemma~\ref{lemma:sufficient_cond_exp(x)}]
		We begin by verifying part (a) of Assumption~\ref{as:regularityx}. Since
		$\boldsymbol{\phi}(\bx, \bbeta) = \left\{ \bx^\top \hat{\btheta} - \mu(\bx^\top \bbeta) \right\} \bx$,
		we apply the inequality $(a + b)^r \le 2^{r - 1}(a^r + b^r)$ for $r = 2 + \delta$ to get:
		\begin{align*}
			\|\boldsymbol{\phi}(\bx, \bbeta)\|^{2+\delta}
			&= \left| \bx^\top \hat{\btheta} - \mu(\bx^\top \bbeta) \right|^{2+\delta}  \|\bx\|^{2+\delta} \\
			&\le 2^{1+\delta} \left( |\bx^\top \hat{\btheta}|^{2+\delta} + |\mu(\bx^\top \bbeta)|^{2+\delta} \right)  \|\bx\|^{2+\delta}.
		\end{align*}
		
		Assumption~\ref{as:reularityParam} with Cauchy-Schwarz inequality imply that $|\mu(\bx^\top\bbeta)| \le \exp(M\|\bx\|)$, so
		$$
		|\mu(\bx^\top\bbeta)|^{2+\delta}\|\bx\|^{2+\delta} \le e^{(2+\delta)M\|\bx\|} \|\bx\|^{2+\delta}.
		$$
		We now upper bound
		$$
		\E_{\hat{F}_n}\left\{e^{(2+\delta)M\|\bx\|} \|\bx\|^{2+\delta}\right\}.
		$$
		Define the function $f(r) := e^{(2+\delta)M r} r^{2+\delta}$. Since $f$ is differentiable with $f(0) = 0$ and nonnegative on $[0,\infty)$, we may use integration-by-parts to produce the identity
		$$
		\E\left\{f(X)\right\} = \int_0^\infty f'(r) \, \Pr(X > r) \, dr
		$$
		to obtain
		$$
		\E_{\hat{F}_n}\left\{f(\|\bx\|)\right\}
		= \int_0^\infty f'(r) \, \Pr\left({\|\tilde{\bx}_n\| > r} \mid \mathcal{D}_n\right) \, dr.
		$$
		By assumption, for all $n > n^*$,
		$$
		\Pr\left({\|\tilde{\bx}_n\| > r} \mid \mathcal{D}_n\right) \le \rho_1 e^{-\rho_2 r},
		$$
		so
		$$
		\E_{\hat{F}_n}\left(e^{(2+\delta)M\|\bx\|} \|\bx\|^{2+\delta}\right)
		\le \rho_1 \int_0^\infty f'(r) e^{-\rho_2 r} \, dr.
		$$
		
		We now compute $f'(r)$:
		$$
		f'(r) = \frac{d}{dr} \left\{e^{(2+\delta)M r} r^{2+\delta}\right\}
		= e^{(2+\delta)M r} \left\{(2+\delta)M r^{2+\delta} + (2+\delta) r^{1+\delta}\right\}.
		$$
		Therefore,
		$$
		f'(r) e^{-\rho_2 r}
		= e^{-\epsilon r} \left\{(2+\delta)M r^{2+\delta} + (2+\delta) r^{1+\delta}\right\},
		\quad \text{where } \epsilon := \rho_2 - (2+\delta)M > 0.
		$$
		
		This gives the bound
		$$
		\E_{\hat{F}_n}\left(e^{(2+\delta)M\|\bx\|} \|\bx\|^{2+\delta}\right)
		\le \rho_1 \left\{(2+\delta)M \int_0^\infty r^{2+\delta} e^{-\epsilon r} dr
		+ (2+\delta) \int_0^\infty r^{1+\delta} e^{-\epsilon r} dr\right\}.
		$$
		Using the identity $\int_0^\infty r^k e^{-\epsilon r} dr = \Gamma(k+1)/\epsilon^{k+1}$, this becomes
		$$
		\E_{\hat{F}_n}\left(e^{(2+\delta)M\|\bx\|} \|\bx\|^{2+\delta}\right)
		\le \rho_1 \left\{(2+\delta)M  \frac{\Gamma(3+\delta)}{\epsilon^{3+\delta}} + (2+\delta)  \frac{\Gamma(2+\delta)}{\epsilon^{2+\delta}}\right\} := C_1'.
		$$
		
		Next, we bound the second term $\|\bx^\top \hat{\btheta} \bx\|^{2+\delta}$. Since $\|\hat{\btheta}\| \le M$ by Assumption~\ref{as:reularityParam}, we have
		$$
		\|\bx^\top \hat{\btheta} \bx\|^{2+\delta} \le \|\hat{\btheta}\|^{2+\delta} \|\bx\|^{2(2+\delta)} \le M^{2+\delta} \|\bx\|^{2(2+\delta)}.
		$$
		Hence,
		$$
		\E_{\hat{F}_n}\left(\|\bx^\top \hat{\btheta} \bx\|^{2+\delta}\right)
		\le M^{2+\delta}  \E_{\hat{F}_n}\left\{\|\bx\|^{2(2+\delta)}\right\}.
		$$
		
		To bound the remaining expectation, define $g(r) := r^{2(2+\delta)}$. Then,
		$$
		\E_{\hat{F}_n}\left\{g(\|\bx\|)\right\}
		= \int_0^\infty g'(r) \, \Pr\left({\|\tilde{\bx}_n\| > r} \mid \mathcal{D}_n\right) \, dr
		\le \rho_1 \int_0^\infty g'(r) e^{-\rho_2 r} dr.
		$$
		We compute $g'(r) = 2(2+\delta) r^{2(2+\delta)-1}$, so
		$$
		\E_{\hat{F}_n}\left\{\|\bx\|^{2(2+\delta)}\right\}
		\le \rho_1  2(2+\delta) \int_0^\infty r^{2(2+\delta)-1} e^{-\rho_2 r} dr
		= \frac{2(2+\delta)\rho_1 \Gamma\left({2(2+\delta)}\right)}{\rho_2^{2(2+\delta)}} := C_1''.
		$$
		
		Putting both bounds together, we obtain
		$$
		\E_{\hat{F}_n} \left\{\|\phi(\bx, \bbeta)\|^{2+\delta}\right\}
		\le C_1' + M^{2+\delta} C_1'' := C_1,
		$$
		which holds almost surely for all $n > n^*$, and thus verifies Assumption~\ref{as:regularityx}(a).
		We aim now to verify Assumption~\ref{as:regularityx}(b), and to that end
		we apply the mean value theorem to the function $\mu(t)$ with $t_1 := \bx^\top \bbeta_1$ and $t_2 := \bx^\top \bbeta_2$. Then there exists $\xi$ between $t_1$ and $t_2$ such that
		$$
		|\mu(t_1) - \mu(t_2)| = |\mu'(\xi)| \cdot |t_1 - t_2| = e^{\xi}  |\bx^\top(\bbeta_1 - \bbeta_2)|.
		$$
		By the Cauchy--Schwarz inequality,
		$$
		|\bx^\top(\bbeta_1 - \bbeta_2)| \le \|\bx\| \cdot \|\bbeta_1 - \bbeta_2\|.
		$$
		Moreover, since $\bbeta_1, \bbeta_2 \in \mathcal{B}$ and $\|\bbeta\| \le M$ for all $\bbeta \in \mathcal{B}$ by Assumption~\ref{as:reularityParam}, we have
		$$
		|\xi| \le \max\left({|\bx^\top \bbeta_1|, |\bx^\top \bbeta_2|}\right) \le M \|\bx\|.
		$$
		Hence,
		$$
		|\mu(\bx^\top \bbeta_1) - \mu(\bx^\top \bbeta_2)| \le e^{M\|\bx\|}  \|\bx\| \cdot \|\bbeta_1 - \bbeta_2\|.
		$$
		We identify the Lipschitz constant as
		$$
		L(\bx) := e^{M\|\bx\|}  \|\bx\|.
		$$
		To complete the proof, we verify that
		$$
		\E_{\hat{F}_n}\left\{L(\bx)^2  \|\bx\|^2\right\} = \E_{\hat{F}_n}\left({e^{2M\|\bx\|}  \|\bx\|^4}\right) < C_2,
		$$
		almost surely for $n > n^*$. As in part (a), we use the tail bound assumption
		$$
		\Pr\left({\|\tilde{\bx}_n\| > r} \mid \mathcal{D}_n \right) \le \rho_1 e^{-\rho_2 r},
		$$
		with $\rho_2 > (2+\delta)M > 2M$. Then applying the tail integral identity:
		$$
		\E_{\hat{F}_n}\left({e^{2M\|\bx\|}  \|\bx\|^4}\right)
		= \int_0^\infty f'(r) \Pr\left({\|\tilde{\bx}_n\| > r} \mid \mathcal{D}_n\right) dr,
		$$
		where $f(r) := e^{2M r} r^4$. Then $f'(r) = e^{2M r} \left({4r^3 + 2M r^4}\right)$, so
		$$
		\E_{\hat{F}_n}\left({e^{2M\|\bx\|}  \|\bx\|^4}\right)
		\le \rho_1 \int_0^\infty \left({4r^3 + 2M r^4}\right) e^{-(\rho_2 - 2M)r} dr.
		$$
		We compute
		$$
		\int_0^\infty r^3 e^{-\epsilon r} dr = \frac{6}{\epsilon^4}, \quad
		\int_0^\infty r^4 e^{-\epsilon r} dr = \frac{24}{\epsilon^5}, \quad \text{with } \epsilon := \rho_2 - 2M > 0,
		$$
		so
		$$
		\E_{\hat{F}_n}\left({e^{2M\|\bx\|}  \|\bx\|^4}\right)
		\le \rho_1 \left({\frac{24}{\epsilon^4} + \frac{48M}{\epsilon^5}}\right) := C_2.
		$$
		
		Thus, the moment condition is satisfied with explicit constant
		$$
		C_2 = \rho_1 \left\{{\frac{24}{\left({\rho_2 - 2M}\right)^4} + \frac{48M}{\left({\rho_2 - 2M}\right)^5}}\right\}.
		$$
		This completes the verification of Assumption~\ref{as:regularityx}(b).
		
		We now address part (c). Let $\bbeta_1, \dots, \bbeta_m \in \mathcal{B}$. Since $\mu''(t) = e^t$ and $\|\bbeta_i\| \le M$, define
		$$
		\left| \mu''\left( \tilde{\bx}_{n,i}^\top \bbeta_i \right) \right| \cdot \|\tilde{\bx}_{n,i}\|^3
		\le e^{M \|\tilde{\bx}_{n,i}\|}  \|\tilde{\bx}_{n,i}\|^3 := Z_{n,i}.
		$$
		
		To show
		$$
		\frac{1}{m} \sum_{i=1}^m Z_{n,i} = O_{P \mid \mathcal{D}_n}(1),
		$$
		it suffices to prove that $\E_{\hat{F}_n}\left( Z_{n,i} \right) \le C_3 < \infty$ almost surely, and apply Markov’s inequality.
		
		Let $f(r) = e^{M r} r^3$, so $Z_{n,i} = f\left( \|\tilde{\bx}_{n,i}\| \right)$. Then
		$$
		f'(r) = \frac{d}{dr} \left( e^{M r} r^3 \right) = e^{M r} \left( 3 r^2 + M r^3 \right),
		$$
		and by the tail bound assumption,
		$$
		\E_{\hat{F}_n}\left( Z_{n,i} \right)
		\le \rho_1 \int_0^\infty f'(r) e^{-\rho_2 r} \, dr
		= \rho_1 \int_0^\infty \left( 3 r^2 + M r^3 \right) e^{-(\rho_2 - M) r} \, dr.
		$$
		
		Let $\epsilon := \rho_2 - M > 0$. Then
		$$
		\int_0^\infty r^2 e^{-\epsilon r} \, dr = \frac{2}{\epsilon^3}, \qquad
		\int_0^\infty r^3 e^{-\epsilon r} \, dr = \frac{6}{\epsilon^4},
		$$
		so
		$$
		\E_{\hat{F}_n}\left( Z_{n,i} \right)
		\le \rho_1 \left( 3  \frac{2}{\epsilon^3} + M  \frac{6}{\epsilon^4} \right)
		= \rho_1 \left( \frac{6}{(\rho_2 - M)^3} + \frac{6M}{(\rho_2 - M)^4} \right) := C_3.
		$$
		
		Now let $\varepsilon > 0$ be arbitrary. By Markov’s inequality, for any $\gamma > 0$,
		$$
		\Pr \left( \frac{1}{m} \sum_{i=1}^m Z_{n,i} > \gamma  \middle| \, \mathcal{D}_n  \right) 
		\le \frac{1}{\gamma}  \E_{\hat{F}_n} \left( \frac{1}{m} \sum_{i=1}^m Z_{n,i} \right)
		\le \frac{C_3}{\gamma}.
		$$
		
		Choosing $\gamma := C_3 / \varepsilon$ gives
		$$
		\Pr \left( \frac{1}{m} \sum_{i=1}^m Z_{n,i} > \gamma  \, \middle| \, \mathcal{D}_n \right) \le \varepsilon,
		$$
		which shows
		$$
		\frac{1}{m} \sum_{i=1}^m \left| \mu''\left( \tilde{\bx}_{n,i}^\top \bbeta_i \right) \right| \cdot \|\tilde{\bx}_{n,i}\|^3 = O_{P \mid \mathcal{D}_n}(1).
		$$

	\end{proof}

	\subsection{$\mu$ is an absolutely continuous and bounded function with bounded first and second derivatives}
	
	This category contains link functions commonly used in binary regression, such as the canonical logistic quantile function, the normal quantile function (Probit regression), and the Cauchy quantile function. Since $\mu$ is bounded, the required conditions are relatively mild -- a uniformly bounded fourth moment.

	\begin{lemma} \label{lemma:sufficient_cond_cdf}
		Assumption~\ref{as:regularityx}(a)--(c) holds if there exists $\delta > 0$, a constant $\rho < \infty$, and an index $n^*$ such that for all $n > n^*$,
		$$
		\E_{\hat{F}_n} \left( \|\tilde{\bx}_n\|^{4+\delta} \right) \le \rho
		\quad \text{almost surely}.
		$$
	\end{lemma}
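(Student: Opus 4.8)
The plan is to use the boundedness of $\mu$ and of its first two derivatives — write $|\mu|\le B_0$, $|\mu'|\le B_1$, $|\mu''|\le B_2$ — to reduce each of the conditions in Assumption~\ref{as:regularityx}(a)--(c) to a moment bound on $\|\tilde{\bx}_n\|$, and then to invoke the assumed almost-sure bound $\E_{\hat{F}_n}\{\|\tilde{\bx}_n\|^{4+\delta}\}\le\rho$ together with Jensen's (Lyapunov's) inequality to absorb all lower-order moments that appear. The one point needing bookkeeping is the exponent: because $\boldsymbol{\phi}(\bx,\bbeta)=\{\bx^\top\hat{\btheta}-\mu(\bx^\top\bbeta)\}\bx$ contains the term $(\bx^\top\hat{\btheta})\bx$, which grows like $\|\bx\|^2$, its $(2+\delta')$-th moment involves $\|\bx\|^{2(2+\delta')}$, and a fourth moment plus slack is exactly what the hypothesis provides once one takes $\delta':=\delta/2$.

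For part (a), I would bound, using Assumption~\ref{as:reularityParam} and Cauchy--Schwarz, $\|\boldsymbol{\phi}(\bx,\bbeta)\|\le(|\bx^\top\hat{\btheta}|+|\mu(\bx^\top\bbeta)|)\|\bx\|\le(M\|\bx\|+B_0)\|\bx\|$, then raise to the power $2+\delta'$ with $\delta'=\delta/2$ and apply $(a+b)^r\le 2^{r-1}(a^r+b^r)$ to obtain
$$
\|\boldsymbol{\phi}(\bx,\bbeta)\|^{2+\delta'}\le 2^{1+\delta'}\left(M^{2+\delta'}\|\bx\|^{4+\delta}+B_0^{2+\delta'}\|\bx\|^{2+\delta/2}\right).
$$
The first term integrates to a finite constant by hypothesis, and for the second, Jensen's inequality gives $\E_{\hat{F}_n}\{\|\bx\|^{2+\delta/2}\}\le(\E_{\hat{F}_n}\{\|\bx\|^{4+\delta}\})^{1/2}\le\rho^{1/2}$ almost surely for $n>n^*$; collecting terms yields an explicit $C_1$ depending only on $M,B_0,\rho,\delta$, which verifies (a) with exponent $2+\delta/2$.

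For part (b), the mean value theorem applied to $t\mapsto\mu(t)$ gives $|\mu(\bx^\top\bbeta_1)-\mu(\bx^\top\bbeta_2)|\le B_1\|\bx\|\,\|\bbeta_1-\bbeta_2\|$, so one may take $L(\bx)=B_1\|\bx\|$; then $\E_{\hat{F}_n}\{L^2(\bx)\|\bx\|^2\}=B_1^2\E_{\hat{F}_n}\{\|\bx\|^4\}\le B_1^2\rho^{4/(4+\delta)}$ almost surely, again by Jensen, which serves as $C_2$. For part (c), $|\mu''|\le B_2$ gives the domination $|\mu''(\tilde{\bx}_{n,i}^\top\bbeta_i)|\,\|\tilde{\bx}_{n,i}\|^3\le B_2\|\tilde{\bx}_{n,i}\|^3$ with $\E_{\hat{F}_n}\{B_2\|\bx\|^3\}\le B_2\rho^{3/(4+\delta)}<\infty$ almost surely, and a conditional Markov-inequality argument identical to the one used in the proof of Lemma~\ref{lemma:sufficient_cond_exp(x)}(c) shows $\frac{1}{m}\sum_{i=1}^m B_2\|\tilde{\bx}_{n,i}\|^3=O_{P|\mathcal{D}_n}(1)$, hence so is the smaller quantity appearing in (c).

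I expect no substantive obstacle here: relative to Lemma~\ref{lemma:sufficient_cond_exp(x)}, boundedness of $\mu,\mu',\mu''$ removes both the need for sub-exponential tail control and the integration-by-parts tail identities, leaving only elementary moment manipulations. The only thing to get right is the exponent bookkeeping — choosing $\delta'=\delta/2$ so that the quadratic growth of $\boldsymbol{\phi}$ in $\|\bx\|$ stays within the assumed $(4+\delta)$-th moment — after which every displayed expectation is finite almost surely by Jensen's inequality.
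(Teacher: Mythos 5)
Your proof is correct and follows essentially the same route as the paper's: bound $\boldsymbol{\phi}$, the Lipschitz function $L$, and the second-derivative term using the boundedness of $\mu$, $\mu'$, $\mu''$, reduce everything to moments of $\|\tilde{\bx}_n\|$, and finish with the moment hypothesis plus a conditional Markov argument for part (c). Your exponent bookkeeping --- verifying Assumption~\ref{as:regularityx}(a) with $\delta'=\delta/2$ so that the quadratic term $(\bx^\top\hat{\btheta})\bx$ contributes exactly $\|\bx\|^{4+\delta}$, and using Jensen for the lower-order moments --- is in fact slightly more careful than the paper's displayed bound, which writes $\|\bx\|^{4+\delta}$ where the algebra at exponent $2+\delta$ actually yields $\|\bx\|^{4+2\delta}$.
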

	
	\begin{proof}
		We begin by verifying part (a) of Assumption~\ref{as:regularityx}.  
		Since $\mu$ is bounded, we have $|\mu(t)| \le C_\mu$ for all $t \in \mathbb{R}$. Moreover, Assumption~\ref{as:reularityParam} implies that $\|\hat{\btheta}\| \le M$, yielding
		$$
		|\bx^\top \hat{\btheta}| \le M \|\bx\|, \qquad |\mu(\bx^\top \bbeta)| \le C_\mu.
		$$
		
		Applying the inequality $(a + b)^r \le 2^{r-1}(a^r + b^r)$ with $r = 2+\delta$, we obtain
		$$
		\left| \bx^\top \hat{\btheta} - \mu(\bx^\top \bbeta) \right|^{2+\delta}
		\le 2^{1+\delta} \left( M^{2+\delta} \|\bx\|^{2+\delta} + C^{2+\delta}_\mu \right).
		$$
		Therefore,
		$$
		\|\boldsymbol{\phi}(\bx, \bbeta)\|^{2+\delta}
		= \left| \bx^\top \hat{\btheta} - \mu(\bx^\top \bbeta) \right|^{2+\delta}  \|\bx\|^{2+\delta}
		\le 2^{1+\delta} \left( M^{2+\delta} \|\bx\|^{4+\delta} + C^{2+\delta}_\mu\|\bx\|^{2+\delta} \right).
		$$
		Since $\|\bx\|^{2+\delta} \le 1 + \|\bx\|^{4+\delta}$ for all $\bx$, we have
		$$
		\|\boldsymbol{\phi}(\bx, \bbeta)\|^{2+\delta}
		\le 2^{1+\delta} \left\{ \left(M^{2+\delta} + C^{2+\delta}_\mu\right)\|\bx\|^{4+\delta} + C^{2+\delta}_\mu \right\}.
		$$
		Taking expectation under $\hat{F}_n$ and using the assumption,
		$$
		\E_{\hat{F}_n} \left\{ \|\boldsymbol{\phi}(\bx, \bbeta)\|^{2+\delta} \right\}
		\le 2^{1+\delta} \left(M^{2+\delta} + C^{2+\delta}_\mu\right) \E_{\hat{F}_n} \|\bx\|^{4+\delta} + 2^{1+\delta}C^{2+\delta}_\mu
		\le C_1,
		$$
		where $C_1 := 2^{1+\delta} \left\{(M^{2+\delta} + C^{2+\delta}_\mu)\rho + C^{2+\delta}_\mu\right\} < \infty$. This verifies Assumption~\ref{as:regularityx}(a).
		
		We now verify part (b) of Assumption~\ref{as:regularityx}. Fix $\bbeta_1, \bbeta_2 \in \mathcal{B}$ and define $t_1 := \bx^\top \bbeta_1$, $t_2 := \bx^\top \bbeta_2$. By the mean value theorem, there exists $\xi$ between $t_1$ and $t_2$ such that
		$$
		\mu(t_1) - \mu(t_2) = \mu'(\xi)(t_1 - t_2) = \mu'(\xi)  \bx^\top(\bbeta_1 - \bbeta_2).
		$$
		Taking absolute values and applying the Cauchy--Schwarz inequality,
		$$
		|\mu(\bx^\top \bbeta_1) - \mu(\bx^\top \bbeta_2)|
		\le |\mu'(\xi)| \cdot \|\bx\| \cdot \|\bbeta_1 - \bbeta_2\|.
		$$
		Since $\mu'$ is bounded, say $|\mu'(t)| \le C_{\mu'}$ for all $t \in \mathbb{R}$, we obtain
		$$
		|\mu(\bx^\top \bbeta_1) - \mu(\bx^\top \bbeta_2)|
		\le C_{\mu'}  \|\bx\| \cdot \|\bbeta_1 - \bbeta_2\|.
		$$
		Thus, the Lipschitz constant can be taken as
		$$
		L(\bx) := C_{\mu'}  \|\bx\| \, ,
		$$
		and we obtain that
		$$
		\E_{\hat{F}_n} \left\{{L(\bx)^2  \|\bx\|^2}\right\}
		= C_{\mu'}^2  \E_{\hat{F}_n} \left( \|\bx\|^4 \right) < C_2,
		$$
		where $C_2 = C_{\mu'}^2  \rho$, almost surely for all $n > n^*$, verifying part (b).
		
		We now verify part (c) of Assumption~\ref{as:regularityx}. Let $\bbeta_1, \dots, \bbeta_m \in \mathcal{B}$, and define the quantity
		$$
		Z_{n,i} := \left| \mu''(\tilde{\bx}_{n,i}^\top \bbeta_i) \right| \cdot \|\tilde{\bx}_{n,i}\|^3.
		$$
		Since $\mu''$ is bounded by assumption, say $|\mu''(t)| \le C_{\mu''}$ for all $t \in \mathbb{R}$, we have
		$$
		Z_{n,i} \le C_{\mu''} \|\tilde{\bx}_{n,i}\|^3.
		$$
		To establish the condition
		$$
		\frac{1}{m} \sum_{i=1}^m Z_{n,i} = O_{P \mid \mathcal{D}_n}(1),
		$$
		it suffices to show that $\E_{\hat{F}_n}(Z_{n,i}) \le C_3 < \infty$ for all $n > n^*$, since then we may apply Markov's inequality.
		We bound:
		$$
		\E_{\hat{F}_n} (Z_{n,i})
		\le C_{\mu''} \E_{\hat{F}_n} \left( \|\bx\|^3 \right)
		\le C_{\mu''} \E_{\hat{F}_n} \left( \|\bx	\|^4 \right)
		\le C_{\mu''} \rho := C_3.
		$$

		Applying Markov's inequality to the average,
		$$
		\Pr\left({\frac{1}{m} \sum_{i=1}^m Z_{n,i} > \gamma \,\Big|\, \mathcal{D}_n}\right)
		\le \frac{1}{\gamma} \E_{\hat{F}_n} \left( \frac{1}{m} \sum_{i=1}^m Z_{n,i} \right)
		\le \frac{C_{\mu''} \rho}{\gamma}.
		$$
		
		Now set $\gamma := C_{\mu''} \rho / \varepsilon$ to get:
		$$
		\Pr\left({\frac{1}{m} \sum_{i=1}^m Z_{n,i} > \gamma \,\Big|\, \mathcal{D}_n}\right) \le \varepsilon,
		$$
		which shows
		$$
		\frac{1}{m} \sum_{i=1}^m \left| \mu''(\tilde{\bx}_{n,i}^\top \bbeta_i) \right| \cdot \|\tilde{\bx}_{n,i}\|^3 = O_{P \mid \mathcal{D}_n}(1).
		$$	
	\end{proof}
	
	\subsection{$\mu(x) = 1/x$}
	
	This corresponds to the canonical link function for gamma (and exponential) regression. Despite being canonical, it is not commonly used in practice due to the constraint $\bx^\top\bbeta > 0$, which ensures the distribution’s parameter remains within its domain (i.e., strictly positive). For the proof below, we will assume a slightly stronger condition: there exists a constant $b > 0$ such that $\bx^\top\bbeta > b$.

	\begin{lemma}\label{lemma:sufficient_cond_inv}
		Assumption~\ref{as:regularityx}(a)--(c) holds if there exist constants $b > 0$, $\delta > 0$, $\rho<\infty$ and an index $n^*$ such that for all $n > n^*$, almost surely,
		$$
		\inf_{\bbeta\in\mathcal{B}} \left| \tilde{\bx}_n^{\!\top} \bbeta \right| \ge b \quad \text{and} \quad \E_{\hat{F}_n}\|\bx\|^{4+2\delta} < \rho.
		$$
	\end{lemma}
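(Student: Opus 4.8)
The plan is to follow the template established in the proofs of Lemmas~\ref{lemma:sufficient_cond_exp(x)} and~\ref{lemma:sufficient_cond_cdf}, taking advantage of the fact that for $\mu(t)=1/t$ one has $\mu'(t)=-1/t^2$ and $\mu''(t)=2/t^3$, all of which are uniformly bounded on the region $\{|t|\ge b\}$ by $1/b$, $1/b^2$ and $2/b^3$ respectively. The assumed bound $\inf_{\bbeta\in\mathcal{B}}|\tilde{\bx}_n^\top\bbeta|\ge b$ means that, conditionally on $\mathcal{D}_n$, the support of $\hat{F}_n$ is contained in $\{\bx:\inf_{\bbeta\in\mathcal{B}}|\bx^\top\bbeta|\ge b\}$, so $\mu(\bx^\top\bbeta)$, $\mu'(\bx^\top\bbeta)$ and $\mu''(\bx^\top\bbeta)$ are bounded uniformly over $\bbeta\in\mathcal{B}$ for $\hat{F}_n$-almost every $\bx$. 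The remaining factors are polynomial in $\|\bx\|$, and these are controlled by the moment bound $\E_{\hat{F}_n}\|\bx\|^{4+2\delta}<\rho$, using the Lyapunov inequality (or the elementary bound $t^{a}\le 1+t^{b}$ for $0\le a\le b$) to dominate the lower-order moments $\E_{\hat{F}_n}\|\bx\|^{2+\delta}$ and $\E_{\hat{F}_n}\|\bx\|^{3}$.

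For part~(a), I would expand $\|\boldsymbol{\phi}(\bx,\bbeta)\|^{2+\delta}=|\bx^\top\hat{\btheta}-\mu(\bx^\top\bbeta)|^{2+\delta}\|\bx\|^{2+\delta}$ and apply $(s+t)^{2+\delta}\le 2^{1+\delta}(s^{2+\delta}+t^{2+\delta})$ together with $|\bx^\top\hat{\btheta}|\le M\|\bx\|$ (since $\|\hat{\btheta}\|\le M$ by Assumption~\ref{as:reularityParam}) and $|\mu(\bx^\top\bbeta)|\le 1/b$. This gives a bound of the form $2^{1+\delta}M^{2+\delta}\|\bx\|^{4+2\delta}+2^{1+\delta}b^{-(2+\delta)}\|\bx\|^{2+\delta}$, whose expectation under $\hat{F}_n$ is finite and uniformly bounded for $n>n^*$ by the moment assumption, producing the constant $C_1$ and verifying~(a).

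For part~(b), I would use the mean value theorem to write $\mu(\bx^\top\bbeta_1)-\mu(\bx^\top\bbeta_2)=\mu'(\xi)\,\bx^\top(\bbeta_1-\bbeta_2)$ for some $\xi$ between $\bx^\top\bbeta_1$ and $\bx^\top\bbeta_2$, so that $|\mu(\bx^\top\bbeta_1)-\mu(\bx^\top\bbeta_2)|\le|\mu'(\xi)|\,\|\bx\|\,\|\bbeta_1-\bbeta_2\|$ by Cauchy--Schwarz. Since $|\bx^\top\bbeta_1|\ge b$ and $|\bx^\top\bbeta_2|\ge b$ on the support of $\hat{F}_n$ and $\mathcal{B}$ is connected (it may be taken convex), the continuous map $\bbeta\mapsto\bx^\top\bbeta$ never vanishes on $\mathcal{B}$ and hence has constant sign there; consequently $\xi$, which lies between the two endpoints, also satisfies $|\xi|\ge b$, so $|\mu'(\xi)|\le 1/b^2$ and we may take $L(\bx)=\|\bx\|/b^2$. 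Then $\E_{\hat{F}_n}\{L(\bx)^2\|\bx\|^2\}=b^{-4}\E_{\hat{F}_n}\|\bx\|^{4}$, which is finite and bounded uniformly for $n>n^*$, yielding $C_2$. For part~(c), since $|\mu''(\tilde{\bx}_{n,i}^\top\bbeta_i)|\le 2/b^3$ on the support of $\hat{F}_n$, we have $\frac1m\sum_{i=1}^m|\mu''(\tilde{\bx}_{n,i}^\top\bbeta_i)|\,\|\tilde{\bx}_{n,i}\|^3\le\frac{2}{b^3}\cdot\frac1m\sum_{i=1}^m\|\tilde{\bx}_{n,i}\|^3$; bounding $\E_{\hat{F}_n}\|\bx\|^{3}\le 1+\E_{\hat{F}_n}\|\bx\|^{4}<\infty$ and applying Markov's inequality conditionally on $\mathcal{D}_n$ exactly as in the proofs of the previous two lemmas gives $\frac1m\sum_{i=1}^m|\mu''(\tilde{\bx}_{n,i}^\top\bbeta_i)|\,\|\tilde{\bx}_{n,i}\|^3=O_{P|\mathcal{D}_n}(1)$.

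The only feature that is genuinely new relative to Lemmas~\ref{lemma:sufficient_cond_exp(x)} and~\ref{lemma:sufficient_cond_cdf} is the need to keep the argument of $\mu$, $\mu'$ and $\mu''$ away from the singularity at the origin, and the one place where this requires a small argument rather than a direct substitution is the mean value theorem step in~(b), where one must verify that the intermediate point $\xi$ still lies in $\{|t|\ge b\}$. I expect this constant-sign observation to be the main (and only mild) obstacle; once it is in place, the rest is routine bookkeeping with the moment bound and Lyapunov's inequality, parallel to the earlier lemmas.
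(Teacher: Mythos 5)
Your proposal is correct in parts (a) and (c) and matches the paper's argument essentially line for line: the same decomposition $\|\boldsymbol{\phi}(\bx,\bbeta)\|\le M\|\bx\|^2+\|\bx\|/b$ raised to the power $2+\delta$, the same use of the moment bound (the paper uses Jensen's inequality to control $\E_{\hat{F}_n}\|\bx\|^{2+\delta}$ where you invoke Lyapunov or $t^a\le 1+t^b$, which is equivalent bookkeeping), and the same $2/b^3$ bound plus Markov for part (c). The one genuine divergence is part (b). The paper does not use the mean value theorem at all: it writes
$$
\left|\frac{1}{\tilde{\bx}_n^{\!\top}\bbeta_1}-\frac{1}{\tilde{\bx}_n^{\!\top}\bbeta_2}\right|
=\frac{|\tilde{\bx}_n^{\!\top}(\bbeta_1-\bbeta_2)|}{|\tilde{\bx}_n^{\!\top}\bbeta_1|\,|\tilde{\bx}_n^{\!\top}\bbeta_2|}
\le\frac{\|\tilde{\bx}_n\|}{b^2}\,\|\bbeta_1-\bbeta_2\|,
$$
which uses only the hypothesis $|\tilde{\bx}_n^{\!\top}\bbeta|\ge b$ at the two endpoints and never needs to control an intermediate point. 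Your MVT route instead requires $|\xi|\ge b$ for the intermediate $\xi$, and your fix --- that $\bbeta\mapsto\bx^{\!\top}\bbeta$ has constant sign on $\mathcal{B}$ --- relies on $\mathcal{B}$ being connected (or convex), which Assumption~1 does not grant: it only assumes compactness. You correctly flag this as the one nontrivial step, but it is an extra hypothesis your argument genuinely needs and the paper's does not; if $\bbeta_1$ and $\bbeta_2$ lie in components where $\bx^{\!\top}\bbeta$ has opposite signs, $\xi$ can sit near the singularity and the MVT bound fails even though the Lipschitz bound itself remains true. The takeaway is that the elementary algebraic identity for differences of reciprocals is strictly stronger here than the mean value theorem; if you keep the MVT version, you should state connectedness of $\mathcal{B}$ as an added (mild) assumption, or simply replace that step with the direct computation.
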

	
	\begin{proof}
		
		Note that
		$$
		\boldsymbol{\phi}(\bx,\bbeta)=\left( \bx^{\!\top}\hat{\btheta}- \frac{1}{\bx^{\!\top}\bbeta} \right) \bx.
		$$
		Since $|\tilde{\bx}_n^{\top}\hat{\btheta}|\le M\|\tilde{\bx}_n\|$ and $|\tilde{\bx}_n^{\top}\bbeta|\ge b$, we have
		$$
		\|\boldsymbol{\phi}(\tilde{\bx}_n,\bbeta)\|\le|\tilde{\bx}_n^{\top}\hat{\btheta}|\cdot\|\tilde{\bx}_n\|+\frac{\|\tilde{\bx}_n\|}{b}\le M\|\tilde{\bx}_n\|^2+\frac{\|\tilde{\bx}_n\|}{b}.
		$$
		Hence
		$$
		\|\boldsymbol{\phi}(\tilde{\bx}_n,\bbeta)\|^{2+\delta}\le2^{1+\delta}\!\left(M^{2+\delta}\|\tilde{\bx}_n\|^{4+2\delta}+b^{-(2+\delta)}\|\tilde{\bx}_n\|^{2+\delta}\right),
		$$
		Taking expectations and applying Jensen's inequality to the second term,
		$$
		\E_{\hat{F}_n}\|\bx\|^{2+\delta} \le \left(\E_{\hat{F}_n}\|\bx\|^{4+2\delta}\right)^{(2+\delta)/(4+2\delta)} \le \rho^{(2+\delta)/(4+2\delta)} \, ,
		$$
		giving the bound
		$$
		C_1 = 2^{1+\delta} \left( M^{2+\delta} \rho + b^{-(2+\delta)} \rho^{(2+\delta)/(4+2\delta)} \right)
		.
		$$
		
		Now, for part (b), note that
		$$
		|\mu(\tilde{\bx}_n^\top \bbeta_1) - \mu(\tilde{\bx}_n^\top \bbeta_2)| = \left|\frac{1}{\tilde{\bx}_n^{\!\top}\bbeta_1}-\frac{1}{\tilde{\bx}_n^{\!\top}\bbeta_2}\right|
		\le\frac{|\tilde{\bx}_n^{\!\top}(\bbeta_1-\bbeta_2)|}{b^2}
		\le\frac{\|\tilde{\bx}_n\|}{b^2}\,\|\bbeta_1-\bbeta_2\|,
		$$
		thus,
		$$
		L(\bx)=\frac{\|\bx\|}{b^2},
		$$
		and $\E_{\hat{F}_n}\left\{L(\bx)^2\|\bx\|^2\right\}< \rho/b^4$ by the assumption.
		
		Finally, for part (c), 
		Since $\mu''(x)=2/x^3$, for any $\bbeta_i\in\mathcal{B},$
		
		$$
		\left|\mu''(\tilde{\bx}_{n,i}^{\!\top}\bbeta_i)\right|\cdot|\tilde{\bx}_{n,i}\|^3\le\frac{2\|\tilde{\bx}_{n,i}\|^3}{b^3}
		$$
		so
		$$
		\frac{1}{m}\sum_{i=1}^m\left|\mu''(\tilde{\bx}_{n,i}^{\!\top}\bbeta_i)\right|\cdot\|\tilde{\bx}_{n,i}\|^3
		\le\frac{2}{b^3}\;\frac{1}{m}\sum_{i=1}^m\|\tilde{\bx}_{n,i}\|^3
		=O_{P|\mathcal D_n}(1),
		$$
		where the last equality follows by the assumption and Markov's inequality.

	\end{proof}
	
	
	\section{Beyond Canonical GLMs} \label{appen:extensions}
	
	\subsection{Inference without Canonical GLMs} \label{appen:sec:inference}
	The main paper assumes a correctly-specified GLM with a canonical link. In this case, the solution to Eq.~(\ref{eq:ScoreMLEOrig}) in the main paper,
	$$
	\int \left\{y - \mu(\bx^\top\bbeta)\right\} \bx \, dF_n(\bx, y) \equiv \mathbf{0},
	$$  
	is the MLE, whose asymptotic variance is $\boldsymbol{\Sigma}^{-1}$, where $\boldsymbol{\Sigma} = \E\left\{\mu'(\bx^\top\bbeta^*) \bx\bx^\top\right\}$. However, if the link function is not canonical, the estimating equation above no longer corresponds to the MLE, generally leading to some efficiency loss, and the asymptotic variance is no longer simply $\boldsymbol{\Sigma}^{-1}$.
	Under model misspecification, or when only the conditional mean $\E(Y|\bx) = \mu(\bx^\top\bbeta)$ is posited without specifying the variance, a robust (sandwich) variance estimator is appropriate. Denoting  
	$$
	\mathbf{K}_{\beta\beta} = \text{Var}\left[\left\{Y - \mu\left(\bx^\top\bbeta^*\right)\right\} \bx\right],
	$$  
	and using standard M-estimation results, the asymptotic variance of $\hat{\bbeta}$ satisfies  
	$$
	\text{Var}
	\left(\sqrt{n}\hat{\bbeta}\right) = \boldsymbol{\Sigma}^{-1} \mathbf{K}_{\beta\beta} \boldsymbol{\Sigma}^{-1}.
	$$
	Therefore, the asymptotic variance of our estimator $\tilde{\bbeta}$ becomes  
	\begin{equation} \label{eq:varNonCanonical}
		\text{Var}\left(\sqrt{n}\tilde{\bbeta}\right) = \boldsymbol{\Sigma}^{-1} \left( \alpha^{-1} \boldsymbol{\mathcal{A}}_1 + \boldsymbol{\mathcal{A}}_2 \right) \boldsymbol{\Sigma}^{-1} + \boldsymbol{\Sigma}^{-1} \mathbf{K}_{\beta\beta} \boldsymbol{\Sigma}^{-1}.
	\end{equation}
	
	In a GLM with a non-canonical link, the conditional variance $\text{Var}(Y|\bx)$ is known through the model’s variance function. This allows a model-based estimator for $\mathbf{K}_{\beta\beta}$, which is valid under correct model specification. For robustness under possible misspecification, we consider instead the original-data-based empirical estimator
	$$
	\hat{\mathbf{K}}_{\beta\beta} = \frac{1}{n} \sum_{i=1}^n \left\{ Y_i - \mu(\bx_i^\top \hat{\bbeta}) \right\}^2 \bx_i \bx_i^\top \, ,
	$$
	and the corresponding synthetic-data-based estimator 
	\begin{equation*}
		\tilde{\mathbf{K}}_{\beta\beta} = \frac{1}{m} \sum_{i=1}^m \left\{ \tilde{Y}_{n,i} - \mu\left(\tilde{\bx}_{n,i}^\top \tilde{\bbeta}\right) \right\}^2 \tilde{\bx}_{n,i} \tilde{\bx}_{n,i}^\top.
	\end{equation*}
	Unlike the estimation of $\tilde{\bbeta}$, which does not use $\tilde{\by}_n$, robust estimation of $\mathbf{K}_{\beta\beta}$ does. To improve the quality of this variance estimation, we propose including $\tilde{\by}_n$ in the whitening–recoloring step, such that the Cholesky decomposition is applied to the full Gram matrix $\mathcal{X}^\top \mathcal{X}$, rather than just $\bX^\top\bX$. 
	Interestingly, if we include $\tilde{\by}_n$ in whitening–recoloring and then solve  
	$$
	\int \left\{ y - \mu\left(\bx^\top\bbeta\right) \right\} \bx \, d\tilde{F}_m(\bx, y) \equiv \mathbf{0},
	$$
	using the transformed synthetic data, we obtain the same $\tilde{\bbeta}$ as the solution to Eq.~(\ref{eq:ScoreCorrected}),
	\begin{equation*}
	\int \left\{ \bx^\top \hat{\btheta} - \mu\big(\bx^\top \bbeta\big) \right\} \bx \, d\tilde{F}_m(\bx) \equiv \mathbf{0} \, ,
\end{equation*} 	
 even though the latter does not actually depend on $\tilde{\by}_n$. This is because whitening–recoloring aligns the Gram matrices of the original and synthetic data, so the resulting linear-regression coefficients are the same.
	Still, we believe that presenting the estimator without using $\tilde{\by}_n$, as we did in Section~\ref{sec:method} of the main paper, offers clearer intuition for why this method works. Moreover, the whitening–recoloring step is not essential for good performance, so we do not wish to define our estimator in terms of the transformed $\tilde{\by}_n$.
	
To facilitate inference beyond correctly-specified canonical GLMs, we modify the sampling of $\hat{\btheta}^{(b)}$ and $\hat{\bbeta}^{(b)}$ in step (i) of the bootstrap procedure in Algorithm~\ref{alg:boot} of the main paper. This bootstrap procedure is necessary for estimating the quantity $\left( \alpha^{-1} \boldsymbol{\mathcal{A}}_1 + \boldsymbol{\mathcal{A}}_2 \right)$ in Eq.~(\ref{eq:varNonCanonical}). 
Under correct specification of a canonical GLM, the asymptotic joint distribution of $\hat{\btheta}$ and $\hat{\bbeta}$ was given in Eq.~(\ref{CovBetaTheta}) in the main paper. To adapt this step to the more general setting considered in this section, we instead use the asymptotic joint distribution
\begin{equation*}  
	\sqrt{n}  
	\begin{pmatrix}  
		\hat{\bbeta} - \bbeta^* \\  
		\hat{\btheta} - \btheta^*  
	\end{pmatrix}  
	\sim \mathcal{N} \left( 
	\mathbf{0},  
	\begin{pmatrix}  
		\mathbf{V}_{\beta\beta} & \mathbf{V}_{\beta\theta}\\  
		\mathbf{V}_{\beta\theta}  & \mathbf{V}_{\theta\theta}  
	\end{pmatrix}  
	\right) ,
\end{equation*}
where the variance blocks are given by the sandwich forms:
	\begin{align*}
		\mathbf{V}_{\beta\beta} &= \boldsymbol{\Sigma}^{-1} \mathbf{K}_{\beta\beta} \boldsymbol{\Sigma}^{-1}, \\
		\mathbf{V}_{\theta\theta} &= \E\left(\bx\bx^\top\right)^{-1}  \mathbf{K}_{\theta\theta} \E\left(\bx\bx^\top\right)^{-1}, \\
		\mathbf{V}_{\beta\theta} &= \boldsymbol{\Sigma}^{-1} \mathbf{K}_{\beta\theta} \E\left(\bx\bx^\top\right)^{-1},
	\end{align*}
	and,
	\begin{align*}
		\mathbf{K}_{\theta\theta} &= \text{Var}\left\{ \left(Y - \bx^\top \btheta^*\right) \bx \right\}, \\
		\mathbf{K}_{\beta\theta} &= \text{Cov}\left[ \left\{Y - \mu\left(\bx^\top \bbeta^*\right)\right\} \bx,\, \left(Y - \bx^\top \btheta^*\right) \bx \right].
	\end{align*}
	
	In practice, we sample realizations of $\hat{\btheta}^{(b)}$ and $\hat{\bbeta}^{(b)}$ from
	
		\begin{equation*} 
		\mathcal{N} \left\{  
		\begin{pmatrix}  
			\tilde{\bbeta} \\  
			\hat{\btheta}  
		\end{pmatrix},  
		\begin{pmatrix}  
			\tilde{\mathbf{V}}_{\beta\beta} & \tilde{\mathbf{V}}_{\beta\theta} \\  
		\tilde{\mathbf{V}}_{\beta\theta} & 	\tilde{\mathbf{V}}_{\theta\theta}
		\end{pmatrix}  
		\right\} \, ,
	\end{equation*}
where	
	\begin{align*}
		\tilde{\mathbf{V}}_{\beta\beta} &= \tilde{\boldsymbol{\cI}}_n(\tilde{\bbeta})^{-1} \, \tilde{\mathbf{K}}_{\beta\beta} \, \tilde{\boldsymbol{\cI}}_n(\tilde{\bbeta})^{-1}, \\
		\tilde{\mathbf{V}}_{\theta\theta} &= \left( \frac{1}{m} \sum_{i=1}^m \tilde{\bx}_{n,i} \tilde{\bx}_{n,i}^\top \right)^{-1}
		\tilde{\mathbf{K}}_{\theta\theta}
		\left( \frac{1}{m} \sum_{i=1}^m \tilde{\bx}_{n,i} \tilde{\bx}_{n,i}^\top \right)^{-1}, \\
		\tilde{\mathbf{V}}_{\beta\theta} &= \tilde{\boldsymbol{\cI}}_n(\tilde{\bbeta})^{-1} \, \tilde{\mathbf{K}}_{\beta\theta} \left( \frac{1}{m} \sum_{i=1}^m \tilde{\bx}_{n,i} \tilde{\bx}_{n,i}^\top \right)^{-1} \, ,
	\end{align*}
and,	
	\begin{align*}
		\tilde{\mathbf{K}}_{\theta\theta} &= \frac{1}{m} \sum_{i=1}^m \left( \tilde{Y}_{n,i} - \tilde{\bx}_{n,i}^\top \hat{\btheta} \right)^2 \tilde{\bx}_{n,i} \tilde{\bx}_{n,i}^\top, \\
		\tilde{\mathbf{K}}_{\beta\theta} &= \frac{1}{m} \sum_{i=1}^m \left\{ \tilde{Y}_{n,i} - \mu\left(\tilde{\bx}_{n,i}^\top \tilde{\bbeta}\right) \right\} \left( \tilde{Y}_{n,i} - \tilde{\bx}_{n,i}^\top \hat{\btheta} \right) \tilde{\bx}_{n,i} \tilde{\bx}_{n,i}^\top \, .
	\end{align*}
The other bootstrap steps in Algorithm~\ref{alg:boot} remain the same. 
	
	\subsection{Simulation Study}
We conducted a simulation study to evaluate the estimator defined by Eq.~(\ref{eq:ScoreCorrected}) under deviations from the canonical GLM link assumed in the main paper, and to assess the modified variance estimation procedure described in Section~\ref{appen:sec:inference}. 
Two simulation settings were considered, both involving synthetic data generated using ADS-GAN with $n = m = 500$.

	The first setting involves exponential regression with a log link, where $Y|\bx \sim \exp(\bx^\top\bbeta)$; the covariates were generated according to Setting~B in Section~\ref{sec:sims} of the main paper. Since the log link is not the canonical link function for exponential regression, the resulting estimator is generally inefficient relative to the MLE.
	
	Tables~\ref{tab:ExpCity1:10}--\ref{tab:ExpCity11:20} summarize the results: we report the MLE, the suboptimal estimator (`SubOpt') from Eq.~(\ref{eq:ScoreMLEOrig}) applied to the original data, and both the naive synthetic `MLE' (`Syn-naive') and our method applied to synthetic data (`Syn-novel'). We also report empirical SEs, mean estimated SEs from our variance procedures (based on 500 replications with 1,000 bootstraps each), and coverage rates for the original-data MLE and for Syn-novel.
	
	The second setting sampled $Y = \arctan(\bx^\top\bbeta) + \varepsilon$, with $\varepsilon \sim \mathcal{N}(0, 2)$, and covariates generated according to Setting~A from Section~\ref{sec:sims} of the main paper. Under the assumption of homoskedasticity, the estimator defined by Eq.~(\ref{eq:ScoreMLEOrig}) is efficient. If Gaussianity is additionally assumed, further gains in efficiency could be achieved by deriving the MLE, although we did not pursue this direction. Tables~\ref{tab:atanCity1:10}--\ref{tab:atanCity11:20} present the simulation results, where `Original' refers to the estimator obtained by applying Eq.~(\ref{eq:ScoreMLEOrig}) to the original data. The Syn-naive estimator was numerically unstable and occasionally produced extreme values, so the median is reported instead of the mean. Its standard error is estimated using $1.4826 \times \operatorname{MAD}$ (median absolute deviation), which is a robust approximation to the standard deviation under normality.

	To assess the impact of the ``whitening-recoloring'' procedure, in both settings we include results with and without applying this procedure to the synthetic data before solving estimating equation~(\ref{eq:ScoreCorrected}). Note that in both settings, a whitening-recoloring step was used within the bootstrap procedure to account for variability in the original data, after perturbing the Gram matrix, as described in Algorithm~\ref{alg:boot}.
	
	In both settings, Syn-novel closely matches the corresponding estimator obtained by applying Eq.~(\ref{eq:ScoreMLEOrig}) to the original data. In the exponential regression setting, where this estimating equation does not yield the MLE, some loss in efficiency is observed, as expected. The whitening-recoloring step consistently improves performance, both by reducing the SEs and by enhancing the accuracy of variance estimation, as evidenced by the close agreement between empirical and estimated SEs, and by the resulting coverage rates. Overall, the results indicate that the proposed variance estimation procedure is effective when applied to synthetic data, yielding reliable uncertainty quantification.

	\begin{table}[ht]
		\centering
		\begin{tabular}{lcccccccccc}
			\hline
			$\bbeta^*$ & 0.150 & 0.100 & 0.150 & 0.200 & -0.100 & 0.100 & 0.200 & -0.100 & -0.100 & 0.100 \\ 
			\hline
			MLE & 0.156 & 0.103 & 0.152 & 0.201 & -0.103 & 0.100 & 0.201 & -0.104 & -0.103 & 0.101 \\ 
			SubOpt & 0.157 & 0.104 & 0.153 & 0.202 & -0.102 & 0.101 & 0.200 & -0.105 & -0.103 & 0.101 \\ 
			Syn-naive & 0.354 & 0.021 & 0.067 & 0.112 & -0.081 & 0.076 & 0.097 & -0.041 & -0.046 & 0.095 \\ 
			MLE-SE & 0.133 & 0.058 & 0.075 & 0.049 & 0.045 & 0.047 & 0.064 & 0.048 & 0.048 & 0.047 \\ 
			SubOpt-SE & 0.138 & 0.059 & 0.078 & 0.052 & 0.047 & 0.050 & 0.068 & 0.049 & 0.050 & 0.049 \\ 
			Syn-naive-SE & 0.225 & 0.072 & 0.080 & 0.074 & 0.077 & 0.079 & 0.091 & 0.073 & 0.071 & 0.064 \\ 
			Coverage-MLE & 0.942 & 0.922 & 0.920 & 0.914 & 0.950 & 0.946 & 0.942 & 0.940 & 0.930 & 0.930 \\ 
			\hline
			\multicolumn{11}{c}{With ``whitening-recoloring"} \\
			Syn-novel & 0.160 & 0.106 & 0.151 & 0.198 & -0.102 & 0.102 & 0.202 & -0.106 & -0.103 & 0.105 \\ 
			Syn-novel-emp-SE & 0.140 & 0.062 & 0.087 & 0.052 & 0.049 & 0.051 & 0.070 & 0.051 & 0.050 & 0.055 \\ 
			Syn-novel-est-SE & 0.142 & 0.061 & 0.092 & 0.065 & 0.049 & 0.049 & 0.066 & 0.049 & 0.048 & 0.051 \\ 
			Coverage-novel & 0.946 & 0.952 & 0.960 & 0.976 & 0.948 & 0.924 & 0.930 & 0.948 & 0.946 & 0.938 \\ 
			\hline
			\multicolumn{11}{c}{Without ``whitening-recoloring"} \\
			Syn-novel & 0.152 & 0.109 & 0.150 & 0.198 & -0.102 & 0.102 & 0.203 & -0.107 & -0.103 & 0.107 \\ 
			Syn-novel-emp-SE & 0.146 & 0.065 & 0.092 & 0.055 & 0.051 & 0.053 & 0.074 & 0.052 & 0.051 & 0.058 \\ 
			Syn-novel-est-SE& 0.164 & 0.067 & 0.081 & 0.074 & 0.057 & 0.056 & 0.071 & 0.052 & 0.052 & 0.050 \\ 			
			Coverage-novel & 0.966 & 0.950 & 0.906 & 0.980 & 0.966 & 0.960 & 0.938 & 0.950 & 0.940 & 0.890 \\ 
			\hline
		\end{tabular}
		\caption{Simulation results for coefficients $\bbeta^*_{1:10}$ in an exponential regression with log-link, under Setting~B in Section~\ref{sec:sims} of the main paper, based on sample size of $n = 500$, using the ADS-GAN synthetic model and $m=n$. Results are based on 500 repetitions.} \label{tab:ExpCity1:10}
	\end{table}
	
	\begin{table}[ht]
		\centering
		\begin{tabular}{lcccccccccc}
			\hline
			$\bbeta^*$ & -0.100 & 0.000 & 0.000 & 0.000 & 0.000 & 0.000 & 0.000 & 0.000 & 0.000 & 0.000 \\ 
			\hline
			MLE & -0.099 & 0.000 & -0.000 & 0.001 & -0.001 & 0.001 & 0.003 & 0.005 & 0.000 & 0.001 \\ 
			SubOpt & -0.099 & 0.001 & 0.001 & 0.001 & -0.000 & -0.000 & 0.003 & 0.005 & 0.001 & 0.001 \\
			Syn-naive & -0.045 & 0.001 & 0.002 & 0.005 & 0.001 & 0.010 & -0.006 & -0.004 & -0.002 & -0.004 \\ 
			MLE-SE & 0.047 & 0.048 & 0.047 & 0.044 & 0.045 & 0.046 & 0.047 & 0.047 & 0.045 & 0.046 \\ 
			SubOpt-SE & 0.048 & 0.050 & 0.049 & 0.046 & 0.046 & 0.047 & 0.049 & 0.048 & 0.047 & 0.048 \\ 
			Syn-naive-SE & 0.073 & 0.079 & 0.075 & 0.079 & 0.076 & 0.078 & 0.070 & 0.066 & 0.070 & 0.070 \\ 
			Coverage-MLE & 0.944 & 0.942 & 0.930 & 0.958 & 0.946 & 0.946 & 0.938 & 0.932 & 0.940 & 0.950 \\
			\hline
			\multicolumn{11}{c}{With ``whitening-recoloring"} \\
			Syn-novel & -0.099 & 0.001 & -0.000 & 0.001 & -0.000 & 0.000 & 0.003 & 0.005 & 0.000 & 0.000 \\ 
			Syn-novel-emp-SE & 0.049 & 0.051 & 0.050 & 0.046 & 0.047 & 0.048 & 0.050 & 0.050 & 0.049 & 0.048 \\  
			Syn-novel-est-SE & 0.049 & 0.049 & 0.049 & 0.049 & 0.048 & 0.048 & 0.049 & 0.050 & 0.049 & 0.049 \\ 
			Coverage-novel & 0.956 & 0.936 & 0.950 & 0.946 & 0.956 & 0.962 & 0.950 & 0.942 & 0.934 & 0.938 \\ 
			\hline
			\multicolumn{11}{c}{Without ``whitening-recoloring"} \\
			Syn-novel & -0.100 & 0.001 & -0.000 & 0.001 & -0.000 & -0.000 & 0.003 & 0.004 & 0.000 & 0.000 \\ 
			Syn-novel-emp-SE & 0.052 & 0.052 & 0.052 & 0.047 & 0.046 & 0.048 & 0.051 & 0.050 & 0.049 & 0.049 \\ 
			Syn-novel-est-SE & 0.054 & 0.056 & 0.056 & 0.056 & 0.055 & 0.056 & 0.057 & 0.057 & 0.057 & 0.056 \\ 
			Coverage-novel & 0.958 & 0.962 & 0.976 & 0.974 & 0.976 & 0.980 & 0.964 & 0.966 & 0.964 & 0.966 \\ 
			\hline
		\end{tabular}
		\caption{Simulation results for coefficients $\bbeta^*_{11:20}$ in an exponential regression with log-link, under Setting~B in Section~\ref{sec:sims} of the main paper, based on sample size of $n = 500$, using the ADS-GAN synthetic model and $m=n$. Results are based on 500 repetitions.} \label{tab:ExpCity11:20}
	\end{table}

	
	\begin{table}[ht]
		\centering
		\begin{tabular}{lcccccccccc}
			\hline
			$\bbeta^*$ & 0.500 & 0.900 & -0.700 & -0.500 & 1.000 & -0.700 & -0.200 & 0.000 & 0.000 & 0.000 \\ 
			\hline
			Original & 0.560 & 0.966 & -0.749 & -0.536 & 1.113 & -0.767 & -0.240 & 0.002 & 0.005 & 0.008 \\ 
			Syn-naive & 0.584 & 0.654 & -0.546 & -0.386 & 0.787 & -0.534 & -0.141 & -0.060 & -0.030 & 0.059 \\ 
			Original-SE & 0.229 & 0.456 & 0.405 & 0.436 & 0.439 & 0.453 & 0.427 & 0.415 & 0.424 & 0.408 \\ 
			Syn-naive-SE & 0.681 & 0.755 & 0.714 & 0.729 & 0.870 & 0.758 & 0.654 & 0.710 & 0.746 & 0.711 \\ 
			Coverage-original & 0.950 & 0.926 & 0.968 & 0.954 & 0.964 & 0.936 & 0.942 & 0.938 & 0.932 & 0.956 \\ 
			\hline
			\multicolumn{11}{c}{With ``whitening-recoloring"} \\		
			Syn-novel & 0.562 & 0.973 & -0.756 & -0.541 & 1.124 & -0.774 & -0.243 & 0.000 & 0.008 & 0.004 \\ 
			Syn-novel-emp-SE & 0.231 & 0.462 & 0.408 & 0.442 & 0.443 & 0.459 & 0.435 & 0.421 & 0.423 & 0.412 \\ 
			Syn-novel-est-SE & 0.229 & 0.435 & 0.424 & 0.420 & 0.445 & 0.425 & 0.414 & 0.411 & 0.410 & 0.410 \\ 
			Coverage-novel & 0.960 & 0.944 & 0.970 & 0.958 & 0.968 & 0.950 & 0.944 & 0.944 & 0.942 & 0.952 \\ 
			\hline
			\multicolumn{11}{c}{Without ``whitening-recoloring"} \\		
			Syn-novel & 0.568 & 1.002 & -0.782 & -0.555 & 1.161 & -0.800 & -0.249 & 0.005 & 0.014 & 0.005 \\ 
			Syn-novel-emp-SE & 0.237 & 0.500 & 0.449 & 0.460 & 0.508 & 0.507 & 0.452 & 0.440 & 0.457 & 0.429 \\
			Syn-novel-est-SE & 0.319 & 0.492 & 0.474 & 0.461 & 0.509 & 0.484 & 0.457 & 0.449 & 0.456 & 0.455 \\ 
			Coverage-novel & 0.968 & 0.944 & 0.970 & 0.952 & 0.962 & 0.932 & 0.944 & 0.944 & 0.938 & 0.952 \\ 
			\hline
		\end{tabular}
		\caption{Simulation results for coefficients $\bbeta^*_{1:10}$ with $\mu(x)=\arctan(x)$, under Setting~A in Section~\ref{sec:sims} of the main paper, based on sample size of $n = 500$, using the ADS-GAN synthetic model and $m=n$. Results are based on 500 repetitions.} \label{tab:atanCity1:10}
	\end{table}
	
	\begin{table}[ht]
		\centering
		\begin{tabular}{lcccccccccc}
			\hline
			$\bbeta^*$ & 0.000 & 0.000 & 0.000 & 0.000 & 0.000 & 0.000 & 0.000 & 0.000 & 0.000 & 0.000 \\ 
			\hline
			Original & 0.006 & -0.020 & 0.008 & -0.029 & 0.003 & 0.006 & 0.011 & -0.003 & 0.000 & -0.024 \\ 
			Syn-naive & 0.016 & 0.003 & -0.078 & 0.002 & -0.056 & 0.063 & 0.018 & -0.089 & 0.038 & 0.017 \\ 
			Original-SE & 0.419 & 0.398 & 0.404 & 0.409 & 0.403 & 0.413 & 0.404 & 0.425 & 0.422 & 0.407 \\ 
			Syn-naive-SE & 0.675 & 0.654 & 0.706 & 0.721 & 0.712 & 0.698 & 0.677 & 0.672 & 0.666 & 0.650 \\ 
			Coverage-original & 0.948 & 0.964 & 0.952 & 0.960 & 0.962 & 0.960 & 0.964 & 0.948 & 0.948 & 0.964 \\ 
			\hline
			\multicolumn{11}{c}{With ``whitening-recoloring"} \\
			Syn-novel & 0.005 & -0.019 & 0.007 & -0.029 & 0.004 & 0.008 & 0.010 & -0.002 & 0.004 & -0.024 \\ 
			Syn-novel-emp-SE & 0.418 & 0.402 & 0.406 & 0.413 & 0.408 & 0.414 & 0.406 & 0.430 & 0.424 & 0.409 \\ 
			Syn-novel-est-SE & 0.409 & 0.409 & 0.410 & 0.410 & 0.411 & 0.411 & 0.409 & 0.411 & 0.409 & 0.410 \\ 
			Coverage-novel & 0.956 & 0.962 & 0.960 & 0.974 & 0.962 & 0.970 & 0.968 & 0.948 & 0.950 & 0.962 \\ 
			\hline
			\multicolumn{11}{c}{Without ``whitening-recoloring"} \\		
			Syn-novel & 0.008 & -0.020 & 0.009 & -0.033 & 0.005 & 0.002 & 0.007 & -0.004 & 0.006 & -0.029 \\ 
			Syn-novel-emp-SE & 0.442 & 0.415 & 0.417 & 0.428 & 0.420 & 0.441 & 0.428 & 0.446 & 0.439 & 0.430 \\ 
			Syn-novel-est-SE & 0.454 & 0.450 & 0.450 & 0.451 & 0.447 & 0.452 & 0.455 & 0.451 & 0.453 & 0.457 \\ 
			Coverage-novel & 0.948 & 0.964 & 0.952 & 0.964 & 0.950 & 0.968 & 0.962 & 0.962 & 0.942 & 0.962 \\ 
			\hline
		\end{tabular}
		\caption{Simulation results for coefficients $\bbeta^*_{11:20}$ with $\mu(x)=\arctan(x)$, under Setting~A in Section~\ref{sec:sims} of the main paper, based on sample size of $n = 500$, using the ADS-GAN synthetic model and $m=n$. Results are based on 500 repetitions.} \label{tab:atanCity11:20}
	\end{table}

	\FloatBarrier
	
	\section{Additional Simulation Results}\label{appen:sim}
	
	The following tables contain additional results for the simulation study described in Section~\ref{sec:sims} in the main paper.
	
	\begin{table}[ht]
		\spacingset{1}
		\centering
		\begin{tabular}{lcccccccccc}
			\hline
			$\bbeta^*$ (Setting A)& 0.000 & 0.000 & 0.000 & 0.000 & 0.000 & 0.000 & 0.000 & 0.000 & 0.000 & 0.000 \\ 
			\hline
			$n = m = 200$ &&&&&&&&&&\\ 
			MLE & 0.025 & -0.009 & 0.006 & 0.004 & -0.012 & -0.027 & 0.004 & 0.002 & -0.001 & -0.015 \\ 
			Syn-novel & 0.028 & -0.010 & -0.000 & 0.002 & -0.012 & -0.025 & -0.002 & -0.005 & -0.005 & -0.014 \\ 
			Syn-naive & -0.009 & 0.027 & -0.039 & -0.002 & -0.027 & -0.013 & -0.016 & -0.016 & 0.001 & -0.015 \\ 
			MLE-SE & 0.285 & 0.291 & 0.281 & 0.281 & 0.281 & 0.283 & 0.276 & 0.281 & 0.291 & 0.282 \\ 
			Syn-novel-SE & 0.293 & 0.311 & 0.288 & 0.283 & 0.294 & 0.294 & 0.292 & 0.294 & 0.305 & 0.293 \\ 
			Syn-naive-SE & 0.393 & 0.410 & 0.410 & 0.404 & 0.391 & 0.382 & 0.369 & 0.364 & 0.390 & 0.386 \\ 
			Coverage-MLE & 0.942 & 0.952 & 0.948 & 0.954 & 0.950 & 0.958 & 0.954 & 0.948 & 0.962 & 0.958 \\ 
			Coverage-novel & 0.946 & 0.948 & 0.958 & 0.958 & 0.962 & 0.952 & 0.950 & 0.952 & 0.954 & 0.950 \\ 
			\hline
			$n = m = 500$ &&&&&&&&&&\\
			MLE & 0.025 & -0.009 & 0.006 & 0.004 & -0.012 & -0.027 & 0.004 & 0.002 & -0.001 & -0.015 \\ 
			Syn-novel & 0.028 & -0.010 & -0.000 & 0.002 & -0.012 & -0.025 & -0.002 & -0.005 & -0.005 & -0.014 \\ 
			Syn-naive & -0.009 & 0.027 & -0.039 & -0.002 & -0.027 & -0.013 & -0.016 & -0.016 & 0.001 & -0.015 \\ 
			MLE-SE & 0.285 & 0.291 & 0.281 & 0.281 & 0.281 & 0.283 & 0.276 & 0.281 & 0.291 & 0.282 \\ 
			Syn-novel-SE & 0.293 & 0.311 & 0.288 & 0.283 & 0.294 & 0.294 & 0.292 & 0.294 & 0.305 & 0.293 \\ 
			Syn-naive-SE & 0.393 & 0.410 & 0.410 & 0.404 & 0.391 & 0.382 & 0.369 & 0.364 & 0.390 & 0.386 \\ 
			Coverage-MLE & 0.948 & 0.944 & 0.950 & 0.938 & 0.960 & 0.958 & 0.956 & 0.950 & 0.952 & 0.962 \\ 
			Coverage-novel & 0.946 & 0.948 & 0.958 & 0.958 & 0.962 & 0.952 & 0.950 & 0.952 & 0.954 & 0.950 \\ 
			\hline
			$\bbeta^*$ (Setting B)& -0.100 & 0.000 & 0.000 & 0.000 & 0.000 & 0.000 & 0.000 & 0.000 & 0.000 & 0.000 \\ 
			\hline
			$n = m = 200$ &&&&&&&&&&\\
			MLE & -0.101 & -0.001 & -0.000 & -0.000 & 0.001 & -0.001 & -0.004 & -0.000 & -0.001 & -0.003 \\ 
			Syn-novel & -0.102 & -0.001 & -0.000 & -0.000 & 0.000 & -0.001 & -0.004 & 0.001 & -0.001 & -0.004 \\ 
			Syn-naive & -0.020 & 0.005 & -0.005 & -0.002 & -0.002 & -0.008 & -0.002 & 0.005 & 0.002 & -0.008 \\ 
			MLE-SE & 0.054 & 0.056 & 0.057 & 0.055 & 0.056 & 0.055 & 0.061 & 0.058 & 0.059 & 0.056 \\ 
			Syn-novel-SE & 0.057 & 0.060 & 0.060 & 0.058 & 0.059 & 0.058 & 0.063 & 0.061 & 0.061 & 0.059 \\ 
			Syn-naive-SE & 0.095 & 0.112 & 0.116 & 0.111 & 0.109 & 0.114 & 0.096 & 0.097 & 0.100 & 0.095 \\ 
			Coverage-MLE & 0.962 & 0.938 & 0.948 & 0.956 & 0.956 & 0.960 & 0.938 & 0.948 & 0.948 & 0.958 \\ 
			Coverage-novel & 0.962 & 0.948 & 0.944 & 0.946 & 0.962 & 0.962 & 0.950 & 0.946 & 0.958 & 0.956 \\ 
			\hline
			$n = m = 500$ &&&&&&&&&&\\
			MLE & -0.101 & -0.001 & 0.001 & -0.002 & 0.001 & 0.001 & 0.000 & -0.002 & -0.002 & -0.001 \\ 
			Syn-novel & -0.102 & -0.001 & 0.003 & -0.002 & 0.001 & 0.000 & 0.000 & -0.001 & -0.002 & -0.001 \\ 
			Syn-naive & -0.024 & -0.002 & -0.005 & 0.001 & -0.001 & 0.001 & 0.006 & 0.001 & 0.001 & 0.005 \\ 
			MLE-SE & 0.034 & 0.032 & 0.035 & 0.033 & 0.034 & 0.033 & 0.032 & 0.034 & 0.034 & 0.035 \\ 
			Syn-novel-SE & 0.036 & 0.034 & 0.036 & 0.035 & 0.035 & 0.034 & 0.035 & 0.036 & 0.035 & 0.036 \\ 
			Syn-naive-SE & 0.063 & 0.064 & 0.066 & 0.068 & 0.069 & 0.063 & 0.059 & 0.063 & 0.062 & 0.063 \\ 
			Coverage-MLE & 0.952 & 0.964 & 0.938 & 0.958 & 0.964 & 0.954 & 0.960 & 0.942 & 0.942 & 0.934 \\ 
			Coverage-novel & 0.948 & 0.954 & 0.932 & 0.956 & 0.958 & 0.962 & 0.956 & 0.932 & 0.952 & 0.946 \\ 
			\hline
		\end{tabular}
		\caption{Simulation results for coefficients $\bbeta^*_{11:20}$ in a Poisson regression under Settings A and B  in Section~\ref{sec:sims} of the main paper,   based on sample sizes of $n = 200$ and $n = 500$, using the ADS-GAN synthetic model and $m=n$. Results are based on 500 repetitions.} \label{tab:PoisCity11:20}
	\end{table}

	\begin{table}[ht]
		\spacingset{1}
		\centering
		\begin{tabular}{lcccccccccc}
			\hline
			$\bbeta^*$ & 0.500 & 0.900 & -0.700 & -0.500 & 1.000 & -0.700 & -0.200 & 0.000 & 0.000 & 0.000 \\  
			\hline
			$n = m = 200$ &&&&&&&&&&\\
			MLE & 0.469 & 0.881 & -0.693 & -0.511 & 1.012 & -0.687 & -0.181 & 0.017 & -0.001 & -0.001 \\ 
			Syn-novel & 0.467 & 0.899 & -0.703 & -0.520 & 1.024 & -0.698 & -0.188 & 0.013 & -0.002 & 0.007 \\ 
			Syn-naive & 0.466 & 0.508 & -0.447 & -0.286 & 0.641 & -0.435 & -0.085 & 0.021 & -0.000 & -0.024 \\ 
			MLE-SE & 0.148 & 0.279 & 0.275 & 0.282 & 0.286 & 0.274 & 0.262 & 0.280 & 0.280 & 0.283 \\ 
			Syn-novel-SE & 0.148 & 0.290 & 0.281 & 0.294 & 0.300 & 0.284 & 0.279 & 0.290 & 0.290 & 0.293 \\ 
			Syn-naive-SE & 0.242 & 0.474 & 0.465 & 0.434 & 0.470 & 0.438 & 0.362 & 0.365 & 0.373 & 0.374 \\ 
			Coverage-MLE & 0.968 & 0.964 & 0.954 & 0.952 & 0.948 & 0.956 & 0.972 & 0.952 & 0.948 & 0.948 \\ 
			Coverage-novel & 0.972 & 0.968 & 0.962 & 0.956 & 0.960 & 0.962 & 0.960 & 0.958 & 0.950 & 0.944 \\ 
			\hline
			$n = m = 500$ &&&&&&&&&&\\
			MLE & 0.489 & 0.902 & -0.701 & -0.496 & 1.010 & -0.711 & -0.198 & -0.009 & -0.003 & 0.004 \\ 
			Syn-novel & 0.489 & 0.908 & -0.705 & -0.497 & 1.019 & -0.717 & -0.200 & -0.010 & -0.004 & 0.004 \\ 
			Syn-naive & 0.501 & 0.629 & -0.517 & -0.308 & 0.775 & -0.527 & -0.097 & -0.011 & -0.008 & -0.000 \\ 
			MLE-SE & 0.096 & 0.174 & 0.166 & 0.164 & 0.166 & 0.177 & 0.169 & 0.162 & 0.167 & 0.173 \\ 
			Syn-novel-SE & 0.098 & 0.179 & 0.172 & 0.166 & 0.170 & 0.179 & 0.175 & 0.167 & 0.170 & 0.175 \\ 
			Syn-naive-SE & 0.144 & 0.300 & 0.305 & 0.250 & 0.307 & 0.310 & 0.229 & 0.220 & 0.218 & 0.224 \\ 
			Coverage-MLE & 0.950 & 0.954 & 0.944 & 0.956 & 0.964 & 0.936 & 0.960 & 0.966 & 0.948 & 0.948 \\ 
			Coverage-novel & 0.952 & 0.954 & 0.946 & 0.958 & 0.954 & 0.936 & 0.956 & 0.958 & 0.956 & 0.950 \\ 
			\hline
			$\bbeta^*$ & 0.000 & 0.000 & 0.000 & 0.000 & 0.000 & 0.000 & 0.000 & 0.000 & 0.000 & 0.000 \\ 
			\hline
			$n = m = 200$ &&&&&&&&&&\\
			MLE & 0.025 & -0.009 & 0.006 & 0.004 & -0.012 & -0.027 & 0.004 & 0.002 & -0.001 & -0.015 \\ 
			Syn-novel & 0.031 & -0.003 & 0.006 & 0.004 & -0.013 & -0.030 & 0.005 & -0.002 & -0.005 & -0.015 \\ 
			Syn-naive & -0.019 & -0.011 & 0.021 & 0.007 & -0.008 & -0.028 & -0.021 & -0.008 & -0.004 & 0.006 \\ 
			MLE-SE & 0.285 & 0.291 & 0.281 & 0.281 & 0.281 & 0.283 & 0.276 & 0.281 & 0.291 & 0.282 \\ 
			Syn-novel-SE & 0.291 & 0.303 & 0.287 & 0.285 & 0.292 & 0.294 & 0.287 & 0.287 & 0.300 & 0.292 \\ 
			Syn-naive-SE & 0.358 & 0.364 & 0.352 & 0.336 & 0.360 & 0.363 & 0.350 & 0.360 & 0.347 & 0.343 \\ 
			Coverage-MLE & 0.942 & 0.952 & 0.948 & 0.954 & 0.950 & 0.958 & 0.954 & 0.948 & 0.962 & 0.958 \\ 
			Coverage-novel & 0.948 & 0.954 & 0.966 & 0.966 & 0.956 & 0.956 & 0.958 & 0.948 & 0.958 & 0.958 \\ 
			\hline
			$n = m = 500$ &&&&&&&&&&\\
			MLE & 0.001 & -0.006 & 0.006 & 0.000 & 0.006 & 0.007 & -0.009 & -0.004 & -0.001 & -0.005 \\ 
			Syn-novel & -0.001 & -0.007 & 0.007 & 0.000 & 0.006 & 0.010 & -0.009 & -0.005 & -0.002 & -0.004 \\ 
			Syn-naive & -0.002 & -0.012 & -0.014 & -0.007 & -0.011 & -0.019 & -0.007 & -0.015 & -0.020 & -0.010 \\ 
			MLE-SE & 0.175 & 0.173 & 0.177 & 0.178 & 0.166 & 0.167 & 0.165 & 0.168 & 0.170 & 0.169 \\ 
			Syn-novel-SE & 0.179 & 0.176 & 0.183 & 0.182 & 0.170 & 0.173 & 0.173 & 0.171 & 0.173 & 0.175 \\ 
			Syn-naive-SE & 0.230 & 0.222 & 0.222 & 0.219 & 0.214 & 0.223 & 0.211 & 0.211 & 0.212 & 0.213 \\ 
			Coverage-MLE & 0.948 & 0.944 & 0.950 & 0.938 & 0.960 & 0.958 & 0.956 & 0.950 & 0.952 & 0.962 \\ 
			Coverage-novel & 0.952 & 0.948 & 0.952 & 0.942 & 0.944 & 0.954 & 0.956 & 0.956 & 0.956 & 0.942 \\ 
			\hline
		\end{tabular}
		\caption{Simulation results for Poisson regression under Setting A  in Section~\ref{sec:sims} of the main paper,   based on sample sizes of $n = 200$ and $n = 500$, using the \texttt{synthpop} synthetic model and $m=n$. Results are based on 500 repetitions.} \label{tab:PoisAPop}
	\end{table}

	\begin{table}[ht]
		\spacingset{1}
		\centering
		\begin{tabular}{lcccccccccc}
			\hline
			$\bbeta^*$ & 0.150 & 0.100 & 0.150 & 0.200 & -0.100 & 0.100 & 0.200 & -0.100 & -0.100 & 0.100 \\ 
			\hline
			$n = m = 200$ &&&&&&&&&&\\
			MLE & 0.137 & 0.095 & 0.144 & 0.204 & -0.102 & 0.101 & 0.199 & -0.098 & -0.099 & 0.101 \\ 
			Syn-novel & 0.131 & 0.094 & 0.147 & 0.209 & -0.104 & 0.104 & 0.202 & -0.096 & -0.101 & 0.102 \\ 
			Syn-naive & 0.239 & 0.017 & 0.095 & 0.161 & -0.051 & 0.048 & 0.052 & -0.027 & -0.028 & 0.070 \\ 
			MLE-SE & 0.168 & 0.065 & 0.087 & 0.051 & 0.055 & 0.057 & 0.070 & 0.057 & 0.056 & 0.056 \\ 
			Syn-novel-SE & 0.173 & 0.068 & 0.089 & 0.057 & 0.058 & 0.060 & 0.073 & 0.060 & 0.058 & 0.057 \\ 
			Syn-naive-SE & 0.229 & 0.086 & 0.122 & 0.095 & 0.083 & 0.084 & 0.098 & 0.076 & 0.073 & 0.084 \\ 
			Coverage-MLE & 0.948 & 0.948 & 0.954 & 0.932 & 0.950 & 0.942 & 0.962 & 0.944 & 0.950 & 0.956 \\ 
			Coverage-novel & 0.944 & 0.952 & 0.964 & 0.954 & 0.956 & 0.952 & 0.962 & 0.950 & 0.950 & 0.968 \\ 
			\hline
			$n = m = 500$ &&&&&&&&&&\\
			MLE & 0.144 & 0.102 & 0.147 & 0.199 & -0.100 & 0.100 & 0.201 & -0.101 & -0.102 & 0.102 \\ 
			Syn-novel & 0.147 & 0.100 & 0.148 & 0.202 & -0.101 & 0.100 & 0.202 & -0.100 & -0.102 & 0.103 \\ 
			Syn-naive & 0.219 & 0.017 & 0.100 & 0.172 & -0.061 & 0.047 & 0.055 & -0.033 & -0.027 & 0.081 \\ 
			MLE-SE & 0.094 & 0.043 & 0.050 & 0.029 & 0.035 & 0.035 & 0.044 & 0.033 & 0.034 & 0.034 \\ 
			Syn-novel-SE & 0.096 & 0.044 & 0.053 & 0.032 & 0.037 & 0.037 & 0.046 & 0.034 & 0.035 & 0.035 \\ 
			Syn-naive-SE & 0.138 & 0.051 & 0.078 & 0.053 & 0.056 & 0.054 & 0.064 & 0.046 & 0.047 & 0.056 \\ 
			Coverage-MLE & 0.958 & 0.930 & 0.954 & 0.942 & 0.940 & 0.952 & 0.962 & 0.954 & 0.952 & 0.944 \\ 
			Coverage-novel & 0.960 & 0.920 & 0.942 & 0.946 & 0.942 & 0.950 & 0.956 & 0.944 & 0.950 & 0.942 \\ 
			\hline
			$\bbeta^*$ & -0.100 & 0.000 & 0.000 & 0.000 & 0.000 & 0.000 & 0.000 & 0.000 & 0.000 & 0.000 \\ 
			\hline
			$n = m = 200$ &&&&&&&&&&\\
			MLE & -0.101 & -0.001 & -0.000 & -0.000 & 0.001 & -0.001 & -0.004 & -0.000 & -0.001 & -0.003 \\ 
			Syn-novel & -0.102 & -0.000 & 0.001 & 0.000 & 0.002 & 0.000 & -0.007 & -0.002 & -0.002 & -0.006 \\ 
			Syn-naive & -0.016 & -0.003 & 0.004 & 0.001 & 0.001 & -0.003 & 0.003 & 0.003 & -0.005 & -0.004 \\ 
			MLE-SE & 0.054 & 0.056 & 0.057 & 0.055 & 0.056 & 0.055 & 0.061 & 0.058 & 0.059 & 0.056 \\ 
			Syn-novel-SE & 0.055 & 0.060 & 0.059 & 0.058 & 0.059 & 0.057 & 0.060 & 0.059 & 0.059 & 0.060 \\ 
			Syn-naive-SE & 0.069 & 0.076 & 0.076 & 0.071 & 0.072 & 0.078 & 0.076 & 0.077 & 0.077 & 0.080 \\ 
			Coverage-MLE & 0.962 & 0.938 & 0.948 & 0.956 & 0.956 & 0.960 & 0.938 & 0.948 & 0.948 & 0.958 \\ 
			Coverage-novel & 0.962 & 0.950 & 0.952 & 0.952 & 0.950 & 0.962 & 0.952 & 0.952 & 0.960 & 0.960 \\ 
			\hline
			$n = m = 500$ &&&&&&&&&&\\
			MLE & -0.101 & -0.001 & 0.001 & -0.002 & 0.001 & 0.001 & 0.000 & -0.002 & -0.002 & -0.001 \\ 
			Syn-novel & -0.102 & -0.001 & 0.003 & -0.003 & 0.000 & 0.000 & -0.003 & -0.003 & -0.004 & -0.003 \\ 
			Syn-naive & -0.022 & 0.000 & -0.000 & -0.001 & 0.001 & 0.002 & 0.002 & 0.001 & 0.004 & 0.005 \\ 
			MLE-SE & 0.034 & 0.032 & 0.035 & 0.033 & 0.034 & 0.033 & 0.032 & 0.034 & 0.034 & 0.035 \\ 
			Syn-novel-SE & 0.035 & 0.034 & 0.036 & 0.034 & 0.035 & 0.034 & 0.034 & 0.035 & 0.036 & 0.036 \\ 
			Syn-naive-SE & 0.043 & 0.048 & 0.047 & 0.042 & 0.047 & 0.047 & 0.045 & 0.046 & 0.047 & 0.046 \\ 
			Coverage-MLE & 0.952 & 0.964 & 0.938 & 0.958 & 0.964 & 0.954 & 0.960 & 0.942 & 0.942 & 0.934 \\ 
			Coverage-novel & 0.948 & 0.962 & 0.942 & 0.950 & 0.954 & 0.960 & 0.962 & 0.952 & 0.946 & 0.940 \\ 
			\hline
		\end{tabular}
		\caption{Simulation results for Poisson regression under Setting B  in Section~\ref{sec:sims} of the main paper,   based on sample sizes of $n = 200$ and $n = 500$, using the \texttt{synthpop} synthetic model and $m=n$. Results are based on 500 repetitions.} \label{tab:PoisPop}
	\end{table}

	\begin{table}[ht]
		\spacingset{1}
		\centering
		\begin{tabular}{lcccccccccc}
			\hline
			$\bbeta^*$ (Setting A) & 0.000 & 0.000 & 0.000 & 0.000 & 0.000 & 0.000 & 0.000 & 0.000 & 0.000 & 0.000 \\ 
			\hline
			$n = m = 200$ &&&&&&&&&&\\
			MLE & -0.049 & 0.035 & 0.038 & -0.003 & 0.049 & 0.042 & -0.019 & 0.000 & 0.018 & 0.034 \\ 
			Syn-novel & -0.050 & 0.035 & 0.042 & 0.000 & 0.047 & 0.041 & -0.017 & 0.001 & 0.017 & 0.039 \\ 
			Syn-naive & -0.133 & 0.031 & 0.016 & -0.046 & -0.005 & 0.047 & 0.085 & -0.053 & 0.015 & 0.050 \\ 
			MLE-SE & 0.866 & 0.852 & 0.809 & 0.792 & 0.873 & 0.840 & 0.819 & 0.866 & 0.842 & 0.804 \\ 
			Syn-novel-SE & 0.872 & 0.864 & 0.822 & 0.789 & 0.882 & 0.848 & 0.830 & 0.876 & 0.855 & 0.815 \\ 
			Syn-naive-SE & 1.412 & 1.417 & 1.434 & 1.447 & 1.356 & 1.414 & 1.394 & 1.308 & 1.357 & 1.417 \\ 
			Coverage-MLE & 0.924 & 0.930 & 0.948 & 0.950 & 0.914 & 0.954 & 0.942 & 0.924 & 0.930 & 0.938 \\ 
			Coverage-novel & 0.926 & 0.930 & 0.952 & 0.958 & 0.930 & 0.960 & 0.946 & 0.930 & 0.934 & 0.948 \\ 
			\hline
			$n = m = 500$ &&&&&&&&&&\\
			MLE & -0.038 & 0.008 & -0.022 & 0.023 & 0.007 & -0.012 & -0.024 & 0.024 & -0.004 & 0.023 \\ 
			Syn-novel & -0.037 & 0.009 & -0.022 & 0.023 & 0.007 & -0.011 & -0.023 & 0.022 & -0.004 & 0.022 \\ 
			Syn-naive & -0.064 & 0.016 & 0.036 & -0.004 & -0.022 & -0.024 & -0.070 & 0.021 & 0.023 & 0.040 \\ 
			MLE-SE & 0.480 & 0.465 & 0.494 & 0.469 & 0.444 & 0.449 & 0.496 & 0.471 & 0.467 & 0.455 \\ 
			Syn-novel-SE & 0.484 & 0.468 & 0.496 & 0.471 & 0.445 & 0.450 & 0.498 & 0.471 & 0.469 & 0.458 \\ 
			Syn-naive-SE & 0.657 & 0.661 & 0.687 & 0.639 & 0.599 & 0.657 & 0.670 & 0.635 & 0.654 & 0.618 \\ 
			Coverage-MLE & 0.932 & 0.954 & 0.926 & 0.962 & 0.958 & 0.950 & 0.922 & 0.942 & 0.942 & 0.952 \\ 
			Coverage-novel & 0.932 & 0.954 & 0.930 & 0.962 & 0.960 & 0.950 & 0.924 & 0.942 & 0.942 & 0.952 \\ 
			\hline
			$\bbeta^*$ (Setting B) & -0.100 & 0.000 & 0.000 & 0.000 & 0.000 & 0.000 & 0.000 & 0.000 & 0.000 & 0.000 \\ 
			\hline
			$n = m = 200$ &&&&&&&&&&\\
			MLE & -0.122 & 0.006 & 0.007 & 0.007 & 0.004 & 0.002 & 0.015 & -0.011 & -0.002 & 0.010 \\ 
			Syn-novel & -0.122 & 0.005 & 0.008 & 0.008 & 0.004 & 0.001 & 0.013 & -0.013 & -0.002 & 0.009 \\ 
			Syn-naive & -0.080 & -0.025 & 0.013 & 0.017 & 0.003 & 0.029 & 0.032 & 0.001 & 0.023 & 0.010 \\ 
			MLE-SE & 0.171 & 0.168 & 0.182 & 0.176 & 0.181 & 0.172 & 0.187 & 0.190 & 0.192 & 0.173 \\ 
			Syn-novel-SE & 0.172 & 0.168 & 0.180 & 0.179 & 0.182 & 0.172 & 0.187 & 0.188 & 0.190 & 0.171 \\ 
			Syn-naive-SE & 0.448 & 0.534 & 0.505 & 0.511 & 0.529 & 0.528 & 0.427 & 0.458 & 0.429 & 0.467 \\ 
			Coverage-MLE & 0.946 & 0.968 & 0.936 & 0.954 & 0.940 & 0.954 & 0.944 & 0.926 & 0.954 & 0.942 \\ 
			Coverage-novel & 0.952 & 0.970 & 0.946 & 0.954 & 0.948 & 0.958 & 0.946 & 0.940 & 0.952 & 0.948 \\ 
			\hline
			$n = m = 500$ &&&&&&&&&&\\
			MLE & -0.102 & 0.003 & 0.001 & 0.009 & 0.002 & -0.002 & -0.000 & 0.001 & 0.008 & 0.002 \\ 
			Syn-novel & -0.102 & 0.003 & 0.001 & 0.009 & 0.002 & -0.002 & -0.001 & 0.001 & 0.007 & 0.001 \\ 
			Syn-naive & -0.055 & 0.012 & 0.009 & -0.003 & 0.007 & 0.004 & 0.017 & -0.001 & 0.009 & 0.002 \\ 
			MLE-SE & 0.099 & 0.098 & 0.099 & 0.100 & 0.104 & 0.100 & 0.098 & 0.107 & 0.105 & 0.102 \\ 
			Syn-novel-SE & 0.099 & 0.098 & 0.099 & 0.100 & 0.104 & 0.100 & 0.098 & 0.106 & 0.105 & 0.101 \\ 
			Syn-naive-SE & 0.174 & 0.201 & 0.198 & 0.205 & 0.212 & 0.211 & 0.185 & 0.178 & 0.184 & 0.177 \\ 
			Coverage-MLE & 0.944 & 0.960 & 0.954 & 0.942 & 0.950 & 0.948 & 0.964 & 0.938 & 0.946 & 0.958 \\ 
			Coverage-novel & 0.950 & 0.962 & 0.950 & 0.942 & 0.952 & 0.946 & 0.964 & 0.936 & 0.946 & 0.962 \\ 
			\hline
		\end{tabular}
		\caption{Simulation results for coefficients $\bbeta^*_{11:20}$ in a Logistic regression under Settings A and B  in Section~\ref{sec:sims} of the main paper,   based on sample sizes of $n = 200$ and $n = 500$, using the ADS-GAN synthetic model and $m=n$. Results are based on 500 repetitions.} \label{tab:LogisCity11:20}
	\end{table}

	\begin{table}[ht]
		\spacingset{1}
		\centering
		\begin{tabular}{lcccccccccc}
			\hline
			$\bbeta^*$ & 0.500 & 0.900 & -0.700 & -0.500 & 1.000 & -0.700 & -0.200 & 0.000 & 0.000 & 0.000 \\ 
			\hline
			$n = m = 200$ &&&&&&&&&&\\
			MLE & 0.563 & 1.018 & -0.777 & -0.598 & 1.093 & -0.804 & -0.262 & -0.012 & -0.014 & -0.022 \\ 
			Syn-novel & 0.565 & 1.020 & -0.782 & -0.600 & 1.092 & -0.808 & -0.261 & -0.010 & -0.017 & -0.026 \\ 
			Syn-naive & 0.571 & 0.610 & -0.538 & -0.292 & 0.651 & -0.523 & -0.160 & -0.065 & -0.091 & 0.002 \\ 
			MLE-SE & 0.426 & 0.783 & 0.784 & 0.836 & 0.889 & 0.793 & 0.787 & 0.819 & 0.837 & 0.805 \\ 
			Syn-novel-SE & 0.427 & 0.789 & 0.795 & 0.845 & 0.897 & 0.797 & 0.796 & 0.824 & 0.844 & 0.811 \\ 
			Syn-naive-SE & 0.745 & 1.183 & 1.283 & 1.227 & 1.201 & 1.236 & 1.170 & 1.227 & 1.111 & 1.198 \\ 
			Coverage-MLE & 0.956 & 0.956 & 0.948 & 0.944 & 0.916 & 0.946 & 0.960 & 0.930 & 0.944 & 0.948 \\ 
			Coverage-novel & 0.960 & 0.960 & 0.952 & 0.948 & 0.930 & 0.950 & 0.962 & 0.940 & 0.940 & 0.952 \\ 
			\hline
			$n = m = 500$ &&&&&&&&&&\\
			MLE & 0.519 & 0.939 & -0.753 & -0.513 & 1.058 & -0.719 & -0.233 & 0.021 & 0.010 & 0.005 \\ 
			Syn-novel & 0.519 & 0.942 & -0.754 & -0.515 & 1.059 & -0.721 & -0.234 & 0.021 & 0.011 & 0.003 \\ 
			Syn-naive & 0.487 & 0.671 & -0.561 & -0.371 & 0.741 & -0.497 & -0.095 & 0.047 & -0.017 & -0.001 \\ 
			MLE-SE & 0.263 & 0.485 & 0.462 & 0.487 & 0.439 & 0.493 & 0.461 & 0.455 & 0.459 & 0.464 \\ 
			Syn-novel-SE & 0.263 & 0.486 & 0.463 & 0.487 & 0.442 & 0.494 & 0.463 & 0.455 & 0.462 & 0.465 \\ 
			Syn-naive-SE & 0.421 & 0.720 & 0.672 & 0.691 & 0.744 & 0.690 & 0.641 & 0.639 & 0.655 & 0.652 \\ 
			Coverage-MLE & 0.940 & 0.934 & 0.930 & 0.936 & 0.952 & 0.926 & 0.952 & 0.948 & 0.946 & 0.934 \\ 
			Coverage-novel & 0.940 & 0.936 & 0.934 & 0.944 & 0.952 & 0.930 & 0.950 & 0.952 & 0.946 & 0.938 \\ 
			\hline
			$\bbeta^*$ & 0.000 & 0.000 & 0.000 & 0.000 & 0.000 & 0.000 & 0.000 & 0.000 & 0.000 & 0.000 \\ 
			\hline
			$n = m = 200$ &&&&&&&&&&\\
			MLE & -0.049 & 0.035 & 0.038 & -0.003 & 0.049 & 0.042 & -0.019 & 0.000 & 0.018 & 0.034 \\ 
			Syn-novel & -0.046 & 0.042 & 0.040 & -0.007 & 0.046 & 0.042 & -0.017 & 0.004 & 0.014 & 0.038 \\ 
			Syn-naive & 0.049 & -0.040 & 0.028 & -0.072 & 0.020 & 0.053 & 0.053 & 0.069 & 0.016 & 0.015 \\ 
			MLE-SE & 0.866 & 0.852 & 0.809 & 0.792 & 0.873 & 0.840 & 0.819 & 0.866 & 0.842 & 0.804 \\ 
			Syn-novel-SE & 0.865 & 0.858 & 0.818 & 0.793 & 0.880 & 0.840 & 0.819 & 0.865 & 0.849 & 0.807 \\ 
			Syn-naive-SE & 1.162 & 1.151 & 1.123 & 1.129 & 1.079 & 1.103 & 1.111 & 1.167 & 1.084 & 1.134 \\ 
			Coverage-MLE & 0.924 & 0.930 & 0.948 & 0.950 & 0.914 & 0.954 & 0.942 & 0.924 & 0.930 & 0.938 \\ 
			Coverage-novel & 0.930 & 0.930 & 0.952 & 0.954 & 0.930 & 0.956 & 0.952 & 0.930 & 0.932 & 0.948 \\ 
			\hline
			$n = m = 500$ &&&&&&&&&&\\
			MLE & -0.038 & 0.008 & -0.022 & 0.023 & 0.007 & -0.012 & -0.024 & 0.024 & -0.004 & 0.023 \\ 
			Syn-novel & -0.038 & 0.008 & -0.021 & 0.024 & 0.008 & -0.013 & -0.024 & 0.023 & -0.003 & 0.023 \\ 
			Syn-naive & -0.000 & 0.034 & -0.020 & -0.002 & -0.003 & -0.007 & -0.049 & 0.013 & -0.019 & -0.017 \\ 
			MLE-SE & 0.480 & 0.465 & 0.494 & 0.469 & 0.444 & 0.449 & 0.496 & 0.471 & 0.467 & 0.455 \\ 
			Syn-novel-SE & 0.481 & 0.466 & 0.496 & 0.470 & 0.445 & 0.450 & 0.498 & 0.470 & 0.468 & 0.456 \\ 
			Syn-naive-SE & 0.656 & 0.621 & 0.653 & 0.642 & 0.637 & 0.656 & 0.617 & 0.613 & 0.615 & 0.603 \\ 
			Coverage-MLE & 0.932 & 0.954 & 0.926 & 0.962 & 0.958 & 0.950 & 0.922 & 0.942 & 0.942 & 0.952 \\ 
			Coverage-novel & 0.934 & 0.954 & 0.930 & 0.966 & 0.960 & 0.948 & 0.924 & 0.946 & 0.942 & 0.952 \\ 
			\hline
		\end{tabular}
		\caption{Simulation results for Logistic regression under Setting A  in Section~\ref{sec:sims} of the main paper,   based on sample sizes of $n = 200$ and $n = 500$, using the \texttt{synthpop} synthetic model and $m=n$. Results are based on 500 repetitions.} \label{tab:LogisAPop}
	\end{table}

	\begin{table}[ht]
		\spacingset{1}
		\centering
		\begin{tabular}{lcccccccccc}
			\hline
			$\bbeta^*$ & 0.150 & 0.100 & 0.150 & 0.200 & -0.100 & 0.100 & 0.200 & -0.100 & -0.100 & 0.100 \\  
			\hline
			$n = m = 200$ &&&&&&&&&&\\
			MLE & 0.127 & 0.122 & 0.192 & 0.230 & -0.114 & 0.112 & 0.235 & -0.126 & -0.113 & 0.111 \\ 
			Syn-novel & 0.115 & 0.124 & 0.196 & 0.235 & -0.114 & 0.114 & 0.238 & -0.128 & -0.115 & 0.111 \\ 
			Syn-naive & 0.418 & 0.025 & 0.075 & 0.139 & -0.076 & 0.059 & 0.090 & -0.031 & -0.046 & 0.082 \\ 
			MLE-SE & 0.495 & 0.218 & 0.295 & 0.202 & 0.175 & 0.185 & 0.218 & 0.186 & 0.175 & 0.176 \\ 
			Syn-novel-SE & 0.508 & 0.225 & 0.305 & 0.209 & 0.175 & 0.188 & 0.223 & 0.189 & 0.179 & 0.180 \\ 
			Syn-naive-SE & 0.733 & 0.279 & 0.362 & 0.303 & 0.277 & 0.271 & 0.301 & 0.232 & 0.240 & 0.239 \\ 
			Coverage-MLE & 0.940 & 0.934 & 0.926 & 0.932 & 0.940 & 0.934 & 0.956 & 0.936 & 0.934 & 0.940 \\ 
			Coverage-novel & 0.940 & 0.938 & 0.938 & 0.956 & 0.952 & 0.944 & 0.960 & 0.940 & 0.946 & 0.950 \\ 
			\hline
			$n = m = 500$ &&&&&&&&&&\\
			MLE & 0.158 & 0.101 & 0.141 & 0.209 & -0.103 & 0.098 & 0.207 & -0.096 & -0.103 & 0.110 \\ 
			Syn-novel & 0.153 & 0.103 & 0.143 & 0.209 & -0.103 & 0.098 & 0.208 & -0.097 & -0.104 & 0.110 \\ 
			Syn-naive & 0.375 & 0.014 & 0.076 & 0.151 & -0.073 & 0.054 & 0.086 & -0.040 & -0.036 & 0.087 \\ 
			MLE-SE & 0.277 & 0.123 & 0.167 & 0.110 & 0.105 & 0.101 & 0.138 & 0.097 & 0.099 & 0.105 \\ 
			Syn-novel-SE & 0.279 & 0.125 & 0.169 & 0.110 & 0.106 & 0.101 & 0.139 & 0.098 & 0.099 & 0.106 \\ 
			Syn-naive-SE & 0.406 & 0.158 & 0.230 & 0.177 & 0.158 & 0.151 & 0.179 & 0.138 & 0.130 & 0.142 \\
			Coverage-MLE & 0.966 & 0.942 & 0.928 & 0.956 & 0.932 & 0.956 & 0.952 & 0.960 & 0.954 & 0.932 \\  
			Coverage-novel & 0.968 & 0.944 & 0.930 & 0.962 & 0.930 & 0.958 & 0.950 & 0.958 & 0.956 & 0.934 \\ 
			\hline
			$\bbeta^*$ & -0.100 & 0.000 & 0.000 & 0.000 & 0.000 & 0.000 & 0.000 & 0.000 & 0.000 & 0.000 \\ 
			\hline
			$n = m = 200$ &&&&&&&&&&\\
			MLE & -0.122 & 0.006 & 0.007 & 0.007 & 0.004 & 0.002 & 0.015 & -0.011 & -0.002 & 0.010 \\ 
			Syn-novel & -0.122 & 0.006 & 0.009 & 0.008 & 0.003 & 0.002 & 0.016 & -0.011 & 0.001 & 0.011 \\ 
			Syn-naive & -0.041 & 0.007 & 0.007 & 0.012 & 0.005 & -0.002 & -0.009 & -0.015 & -0.018 & 0.003 \\ 
			MLE-SE & 0.171 & 0.168 & 0.182 & 0.176 & 0.181 & 0.172 & 0.187 & 0.190 & 0.192 & 0.173 \\ 
			Syn-novel-SE & 0.173 & 0.168 & 0.183 & 0.179 & 0.184 & 0.176 & 0.191 & 0.193 & 0.194 & 0.172 \\ 
			Syn-naive-SE & 0.212 & 0.227 & 0.244 & 0.243 & 0.248 & 0.248 & 0.253 & 0.259 & 0.241 & 0.252 \\ 
			Coverage-MLE & 0.946 & 0.968 & 0.936 & 0.954 & 0.940 & 0.954 & 0.944 & 0.926 & 0.954 & 0.942 \\ 
			Coverage-novel & 0.952 & 0.970 & 0.942 & 0.952 & 0.946 & 0.956 & 0.952 & 0.938 & 0.958 & 0.948 \\ 
			\hline
			$n = m = 500$ &&&&&&&&&&\\
			MLE & -0.102 & 0.003 & 0.001 & 0.009 & 0.002 & -0.002 & -0.000 & 0.001 & 0.008 & 0.002 \\ 
			Syn-novel & -0.103 & 0.003 & 0.001 & 0.009 & 0.002 & -0.002 & 0.001 & 0.002 & 0.008 & 0.002 \\ 
			Syn-naive & -0.034 & -0.001 & 0.001 & 0.002 & 0.005 & 0.006 & -0.019 & -0.015 & -0.007 & -0.017 \\ 
			MLE-SE & 0.099 & 0.098 & 0.099 & 0.100 & 0.104 & 0.100 & 0.098 & 0.107 & 0.105 & 0.102 \\ 
			Syn-novel-SE & 0.099 & 0.098 & 0.099 & 0.100 & 0.104 & 0.100 & 0.098 & 0.108 & 0.105 & 0.103 \\ 
			Syn-naive-SE & 0.124 & 0.140 & 0.135 & 0.148 & 0.149 & 0.144 & 0.139 & 0.134 & 0.158 & 0.141 \\ 
			Coverage-MLE & 0.944 & 0.960 & 0.954 & 0.942 & 0.950 & 0.948 & 0.964 & 0.938 & 0.946 & 0.958 \\ 
			Coverage-novel & 0.950 & 0.964 & 0.954 & 0.942 & 0.952 & 0.954 & 0.966 & 0.938 & 0.940 & 0.962 \\ 
			\hline
		\end{tabular}
		\caption{Simulation results for Logistic regression under Setting B  in Section~\ref{sec:sims} of the main paper, based on sample sizes of $n = 200$ and $n = 500$, using the \texttt{synthpop} synthetic model and $m=n$. Results are based on 500 repetitions.} \label{tab:LogisBPop}
	\end{table}

	\begin{table}[ht]
		\spacingset{1}
		\centering
		\resizebox{\textwidth}{!}{%
			\begin{tabular}{l|ccccc|ccccc}
				\hline
				$n = 20{,}000$& \multicolumn{5}{c|}{$\beta^*_5=1.0$} & \multicolumn{5}{c}{$\beta^*_6=-0.7$} \\
				m & 200 & 500 & 1000 & 5000 & 20000 & 200 & 500 & 1000 & 5000 & 20000 \\
				\hline
				MLE 
				&  &  & 1.000 &  &  
				&  &  & -0.699 &  &  \\
				Syn-novel 
				& 1.016 & 1.008 & 1.004 & 1.001 & 1.000 
				& -0.710 & -0.703 & -0.701 & -0.699 & -0.699 \\
				Syn-naive 
				& 0.920 & 0.919 & 0.915 & 0.926 & 0.931 
				& -0.631 & -0.639 & -0.636 & -0.645 & -0.649 \\
				MLE-SE 
				&  &  & 0.026 &  &  
				&  &  & 0.026 &  &  \\
				Syn-novel-emp-SE 
				& 0.047 & 0.037 & 0.033 & 0.028 & 0.027 
				& 0.047 & 0.036 & 0.032 & 0.028 & 0.027 \\
				Syn-novel-est-SE 
				& 0.047 & 0.036 & 0.032 & 0.028 & 0.027 
				& 0.044 & 0.034 & 0.031 & 0.027 & 0.027 \\
				Syn-naive-SE 
				& 0.287 & 0.175 & 0.123 & 0.061 & 0.040 
				& 0.291 & 0.177 & 0.125 & 0.064 & 0.043 \\
				Coverage-MLE & & &0.952  & & & & &0.962  & & \\
				Coverage-novel 
				& 0.924 & 0.944 & 0.938 & 0.956 & 0.946 
				& 0.928 & 0.944 & 0.936 & 0.936 & 0.950 \\
				\hline
				\multicolumn{11}{c}{} \\ 
				\hline
				$n = 20{,}000$& \multicolumn{5}{c|}{$\beta^*_7=-0.2$} & \multicolumn{5}{c}{$\beta^*_8=0.0$} \\
				m & 200 & 500 & 1000 & 5000 & 20000 & 200 & 500 & 1000 & 5000 & 20000 \\
				\hline
				MLE 
				&  &  & -0.197 &  &  
				&  &  & 0.001 &  &  \\
				Syn-novel 
				& -0.204 & -0.201 & -0.199 & -0.198 & -0.197 
				& 0.002 & 0.001 & 0.001 & 0.001 & 0.001 \\
				Syn-naive 
				& -0.127 & -0.133 & -0.128 & -0.123 & -0.125 
				& -0.005 & -0.002 & 0.002 & -0.006 & -0.004 \\
				MLE-SE 
				&  &  & 0.025 &  &  
				&  &  & 0.026 &  &  \\
				Syn-novel-emp-SE 
				& 0.044 & 0.032 & 0.029 & 0.026 & 0.025 
				& 0.044 & 0.034 & 0.030 & 0.027 & 0.026 \\
				Syn-novel-est-SE 
				& 0.042 & 0.034 & 0.030 & 0.027 & 0.026 
				& 0.042 & 0.034 & 0.030 & 0.027 & 0.026 \\
				Syn-naive-SE 
				& 0.285 & 0.176 & 0.127 & 0.062 & 0.040 
				& 0.307 & 0.171 & 0.125 & 0.056 & 0.034 \\
				Coverage-MLE & & &0.968  & & & & &0.952  & & \\
				Coverage-novel 
				& 0.944 & 0.966 & 0.958 & 0.962 & 0.958 
				& 0.918 & 0.944 & 0.958 & 0.946 & 0.944 \\
				\hline
			\end{tabular}%
		}
		\caption{Simulation results for the coefficients $\bbeta^*_{5:8}$ in a Poisson regression under Setting A  in Section~\ref{sec:sims} of the main paper, with a fixed sample size of $n = 20{,}000$ and varying $m$ values, using the \texttt{synthpop} synthetic model. Results are based on 500 repetitions. Since the MLE is invariant to $m$, only a single value is shown.}
		\label{tab:res20k5:8}
	\end{table}

	\begin{table}[ht]
		\spacingset{1}
		\centering
		\resizebox{\textwidth}{!}{%
			\begin{tabular}{l|ccccc|ccccc}
				\hline
				$n = 20{,}000$& \multicolumn{5}{c|}{$\beta^*_{9}=0$} & \multicolumn{5}{c}{$\beta^*_{10}=0$} \\
				$m$ & 200 & 500 & 1000 & 5000 & 20000 & 200 & 500 & 1000 & 5000 & 20000 \\
				\hline
				MLE &  &  & -0.003 &  &  &  &  & 0.000 &  &  \\
				Syn-novel & -0.001 & -0.001 & -0.002 & -0.002 & -0.003 & -0.000 & -0.001 & -0.000 & -0.001 & 0.000 \\
				Syn-naive & -0.004 & -0.011 & 0.000 & -0.006 & -0.008 & 0.013 & 0.013 & 0.003 & -0.008 & -0.006 \\
				MLE-SE &  &  & 0.025 &  &  &  &  & 0.028 &  &  \\
				Syn-novel-emp-SE & 0.043 & 0.033 & 0.029 & 0.027 & 0.026 & 0.043 & 0.035 & 0.032 & 0.028 & 0.028 \\
				Syn-novel-est-SE & 0.042 & 0.034 & 0.030 & 0.027 & 0.026 & 0.042 & 0.034 & 0.030 & 0.027 & 0.026 \\
				Syn-naive-SE & 0.258 & 0.165 & 0.117 & 0.063 & 0.036 & 0.288 & 0.176 & 0.124 & 0.058 & 0.036 \\
				Coverage-MLE & & &0.958   & & & & &0.928  & & \\
				Coverage-novel  & 0.940 & 0.952 & 0.954 & 0.952 & 0.952 & 0.948 & 0.934 & 0.936 & 0.940 & 0.930 \\
				\hline
				\multicolumn{11}{c}{} \\
				\hline
				$n = 20{,}000$& \multicolumn{5}{c|}{$\beta^*_{11}=0$} & \multicolumn{5}{c}{$\beta^*_{12}=0$} \\
				$m$ & 200 & 500 & 1000 & 5000 & 20000 & 200 & 500 & 1000 & 5000 & 20000 \\
				\hline
				MLE &  &  & 0.000 &  &  &  &  & 0.002 &  &  \\
				Syn-novel & -0.002 & 0.000 & -0.000 & -0.000 & -0.000 & 0.006 & 0.005 & 0.003 & 0.003 & 0.003 \\
				Syn-naive & 0.002 & -0.015 & -0.007 & -0.011 & -0.008 & -0.009 & -0.004 & -0.009 & -0.008 & -0.007 \\
				MLE-SE &  &  & 0.027 &  &  &  &  & 0.025 &  &  \\
				Syn-novel-emp-SE & 0.042 & 0.034 & 0.031 & 0.028 & 0.027 & 0.042 & 0.034 & 0.030 & 0.026 & 0.025 \\
				Syn-novel-est-SE & 0.042 & 0.034 & 0.030 & 0.027 & 0.026 & 0.042 & 0.034 & 0.030 & 0.027 & 0.026 \\
				Syn-naive-SE & 0.287 & 0.170 & 0.125 & 0.057 & 0.37 & 0.295 & 0.172 & 0.119 & 0.056 & 0.034 \\
				Coverage-MLE & & &0.956   & & & & &0.960  & & \\
				Coverage-novel & 0.938 & 0.948 & 0.940 & 0.942 & 0.942 & 0.960 & 0.946 & 0.948 & 0.956 & 0.962 \\
				\hline
			\end{tabular}
		}
		\caption{Simulation results for the coefficients $\bbeta^*_{9:12}$ in a Poisson regression under Setting A  in Section~\ref{sec:sims} of the main paper, with a fixed sample size of $n = 20{,}000$ and varying $m$ values, using the \texttt{synthpop} synthetic model. Results are based on 500 repetitions. Since the MLE is invariant to $m$, only a single value is shown.}
		\label{tab:res20k9:12}
	\end{table}

	\begin{table}[ht]
		\spacingset{1}
		\centering
		\resizebox{\textwidth}{!}{%
			\begin{tabular}{l|ccccc|ccccc}
				\hline
				$n = 20{,}000$& \multicolumn{5}{c|}{$\beta^*_{13}=0$} & \multicolumn{5}{c}{$\beta^*_{14}=0$} \\
				$m$ & 200 & 500 & 1000 & 5000 & 20000 & 200 & 500 & 1000 & 5000 & 20000 \\
				\hline
				MLE              &  &   & 0.002  &   &   &  &  & -0.002 &  &  \\
				Syn-novel        & 0.002  & 0.001  & 0.003  & 0.002  & 0.002  & -0.002 & -0.002 & -0.002 & -0.002 & -0.002 \\
				Syn-naive        & -0.008 & -0.002 & -0.007 & -0.008 & -0.006 & -0.008 & -0.002 & -0.001 & -0.004 & -0.007 \\
				MLE-SE           &   &   & 0.025  &   &   &   &   & 0.027  &   &   \\
				Syn-novel-emp-SE & 0.043  & 0.033  & 0.030  & 0.026  & 0.025  & 0.044  & 0.036  & 0.031  & 0.028  & 0.028  \\
				Syn-novel-est-SE  & 0.042  & 0.034  & 0.030  & 0.027  & 0.026  & 0.042  & 0.034  & 0.030  & 0.027  & 0.026  \\
				Syn-naive-SE     & 0.289  & 0.173  & 0.118  & 0.057  & 0.034  & 0.282  & 0.171  & 0.119  & 0.056  & 0.035  \\
				Coverage-MLE & & &0.958   & & & & &0.936  & & \\
				Coverage-novel   & 0.938  & 0.948  & 0.952  & 0.954  & 0.964  & 0.932  & 0.930  & 0.936  & 0.930  & 0.930  \\
				\hline
				\multicolumn{11}{c}{} \\
				\hline
				$n = 20{,}000$& \multicolumn{5}{c|}{$\beta^*_{15}=0$} & \multicolumn{5}{c}{$\beta^*_{16}=0$} \\
				$m$ & 200 & 500 & 1000 & 5000 & 20000 & 200 & 500 & 1000 & 5000 & 20000 \\
				\hline
				MLE              &  & & -0.001 &  &  &  &  & -0.000 &  &  \\
				Syn-novel        & -0.003 & -0.002 & -0.002 & -0.001 & -0.002 & -0.000 &  0.000 & -0.000 & -0.001 & -0.000 \\
				Syn-naive        & -0.009 & -0.007 & -0.009 & -0.009 & -0.008 & -0.019 & -0.004 & -0.006 & -0.002 & -0.007 \\
				MLE-SE           &  &   & 0.026  &   &   &   &   & 0.028  &   &   \\
				Syn-novel-emp-SE  & 0.046  & 0.034  & 0.030  & 0.027  & 0.026  & 0.043  & 0.035  & 0.032  & 0.029  & 0.028  \\
				Syn-novel-est-SE  & 0.042  & 0.034  & 0.030  & 0.027  & 0.026  & 0.042  & 0.034  & 0.030  & 0.027  & 0.026  \\
				Syn-naive-SE     & 0.299  & 0.174  & 0.123  & 0.056  & 0.035  & 0.283  & 0.167  & 0.116  & 0.059  & 0.035  \\
				Coverage-MLE & & &0.948   & & & & &0.932  & & \\
				Coverage-novel  & 0.920  & 0.936  & 0.948  & 0.940  & 0.952  & 0.942  & 0.936  & 0.940  & 0.932  & 0.936  \\
				\hline
			\end{tabular}
		}
		\caption{Simulation results for the coefficients $\bbeta^*_{13:16}$ in a Poisson regression under Setting A  in Section~\ref{sec:sims} of the main paper, with a fixed sample size of $n = 20{,}000$ and varying $m$ values, using the \texttt{synthpop} synthetic model. Results are based on 500 repetitions. Since the MLE is invariant to $m$, only a single value is shown.}
		\label{tab:res20k13:16}
	\end{table}

	\begin{table}[ht]
		\spacingset{1}
		\centering
		\resizebox{\textwidth}{!}{%
			\begin{tabular}{l|ccccc|ccccc}
				\hline
				$n = 20{,}000$& \multicolumn{5}{c|}{$\beta^*_{17}=0$} & \multicolumn{5}{c}{$\beta^*_{18}=0$} \\
				$m$ & 200 & 500 & 1000 & 5000 & 20000 & 200 & 500 & 1000 & 5000 & 20000 \\
				\hline
				MLE              & & & -0.001 &  & & &  & -0.001 &  &  \\
				Syn-novel        & -0.001 & -0.002 & -0.001 & -0.000 & -0.001 & -0.000 & -0.001 & -0.001 & -0.001 & -0.001 \\
				Syn-naive        & -0.006 & -0.006 & -0.006 & -0.003 & -0.007 & -0.000 &  0.001 & -0.004 & -0.007 & -0.006 \\
				MLE-SE           &   &   & 0.025  &   &   &   &   & 0.027  &   &   \\
				Syn-novel-emp-SE  & 0.045  & 0.033  & 0.029  & 0.026  & 0.026  & 0.042  & 0.034  & 0.031  & 0.029  & 0.028  \\
				Syn-novel-est-SE  & 0.042  & 0.034  & 0.030  & 0.027  & 0.026  & 0.042  & 0.034  & 0.030  & 0.027  & 0.026  \\
				Syn-naive-SE     & 0.294  & 0.183  & 0.124  & 0.060  & 0.035  & 0.285  & 0.179  & 0.124  & 0.059  & 0.035  \\
				Coverage-MLE & & &0.964   & & & & &0.934  & & \\
				Coverage-novel  & 0.932  & 0.938  & 0.960  & 0.952  & 0.958  & 0.938  & 0.946  & 0.946  & 0.926  & 0.932  \\
				\hline
				\multicolumn{11}{c}{} \\
				\hline
				$n = 20{,}000$& \multicolumn{5}{c|}{$\beta^*_{19}=0$} & \multicolumn{5}{c}{$\beta^*_{20}=0$} \\
				$m$ & 200 & 500 & 1000 & 5000 & 20000 & 200 & 500 & 1000 & 5000 & 20000 \\
				\hline
				MLE              &   &   & 0.001  & &   &   &   & 0.000  &   &   \\
				Syn-novel        & -0.000 & -0.001 & -0.000 & 0.000  & 0.001  & 0.000  & -0.000 & 0.000  & 0.000  & 0.000  \\
				Syn-naive        & -0.021 & -0.013 & -0.010 & -0.007 & -0.007 & 0.018  & -0.002 & -0.001 & -0.003 & -0.005 \\
				MLE-SE           &   &  & 0.026  &   &   &   &   & 0.026  &   &   \\
				Syn-novel-emp-SE  & 0.043  & 0.034  & 0.031  & 0.027  & 0.026  & 0.042  & 0.034  & 0.031  & 0.027  & 0.026  \\
				Syn-novel-est-SE  & 0.041  & 0.034  & 0.030  & 0.027  & 0.026  & 0.041  & 0.034  & 0.030  & 0.027  & 0.026  \\
				Syn-naive-SE     & 0.286  & 0.163  & 0.122  & 0.062  & 0.037  & 0.266  & 0.167  & 0.123  & 0.057  & 0.033  \\
				Coverage-MLE & & &0.960   & & & &  &0.962  & & \\
				Coverage-novel  & 0.926  & 0.936  & 0.944  & 0.956  & 0.956  & 0.948  & 0.942  & 0.946  & 0.952  & 0.960  \\
				\hline
			\end{tabular}
		}
		\caption{Simulation results for the coefficients $\bbeta^*_{17:20}$ in a Poisson regression under Setting A  in Section~\ref{sec:sims} of the main paper, with a fixed sample size of $n = 20{,}000$ and varying $m$ values, using the \texttt{synthpop} synthetic model. Results are based on 500 repetitions. Since the MLE is invariant to $m$, only a single value is shown.}
		\label{tab:res20k17:20}
	\end{table}

	\begin{figure}[h]
		\centering
		\resizebox{0.9\textwidth}{!}{ 
			\begin{minipage}{\textwidth}
				\begin{subfigure}[b]{0.48\textwidth}  
					\includegraphics[width=\textwidth]{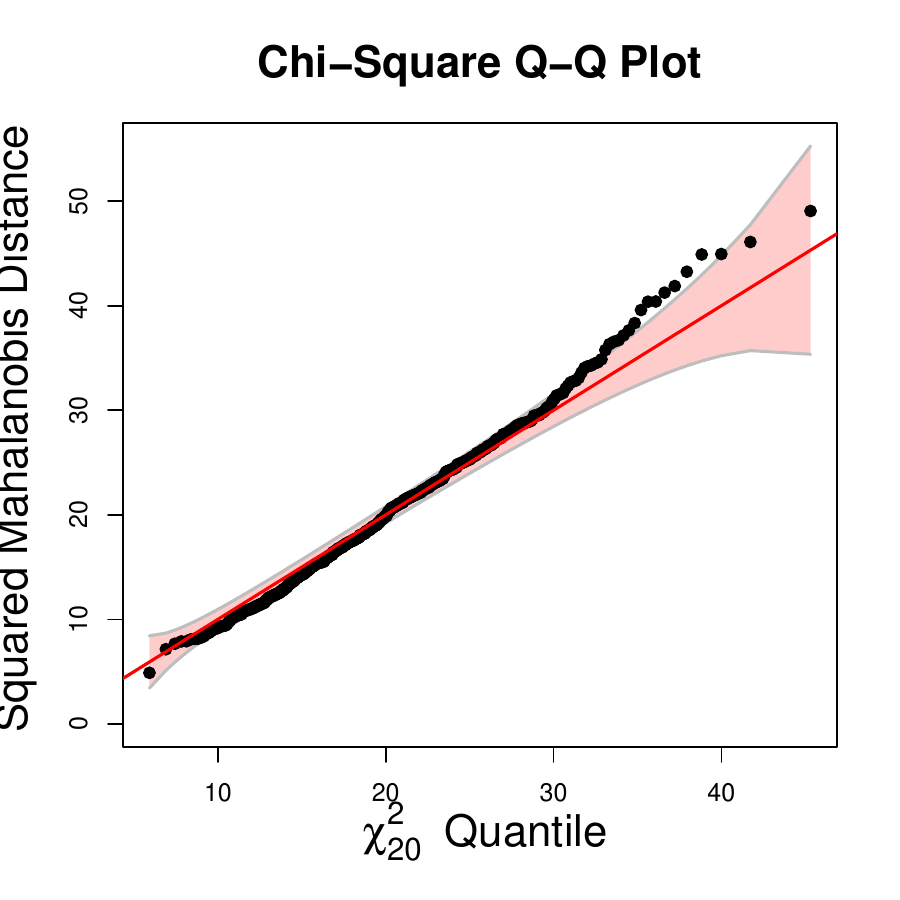}
					\caption{Poisson: Setting A, $n=5,000$}
				\end{subfigure}
				\begin{subfigure}[b]{0.48\textwidth}  
					\includegraphics[width=\textwidth]{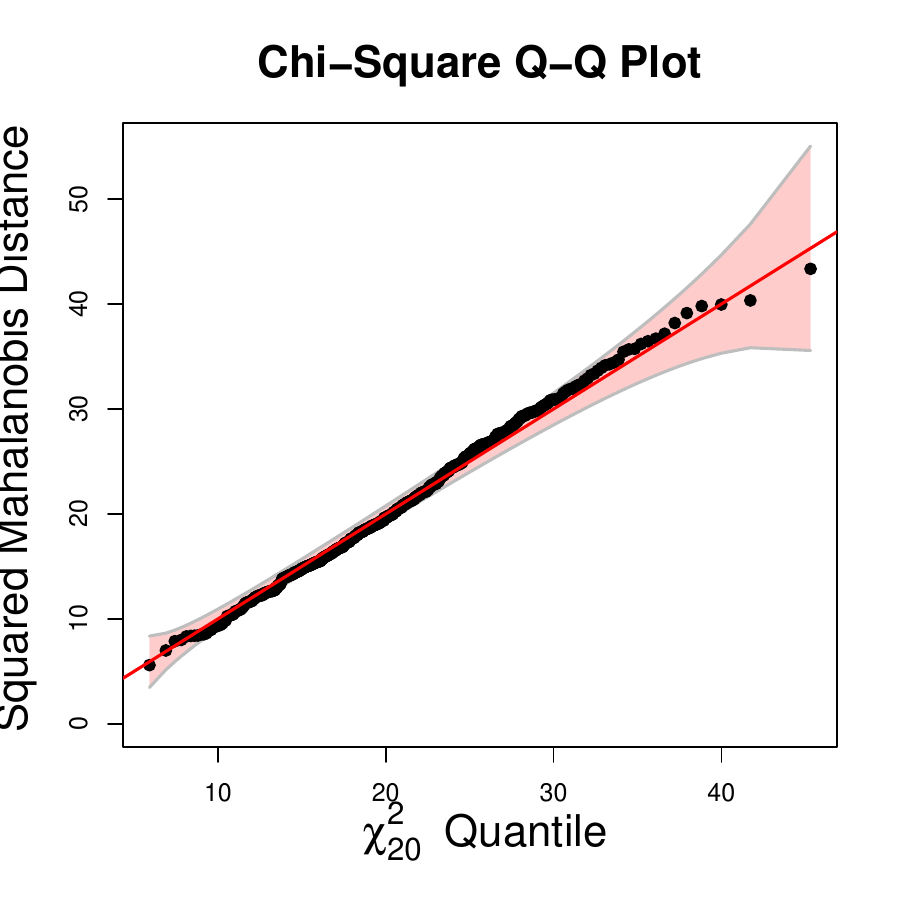}
					\caption{Poisson: Setting A, $n=20,000$}
				\end{subfigure}
				
				\begin{subfigure}[b]{0.48\textwidth}  
					\includegraphics[width=\textwidth]{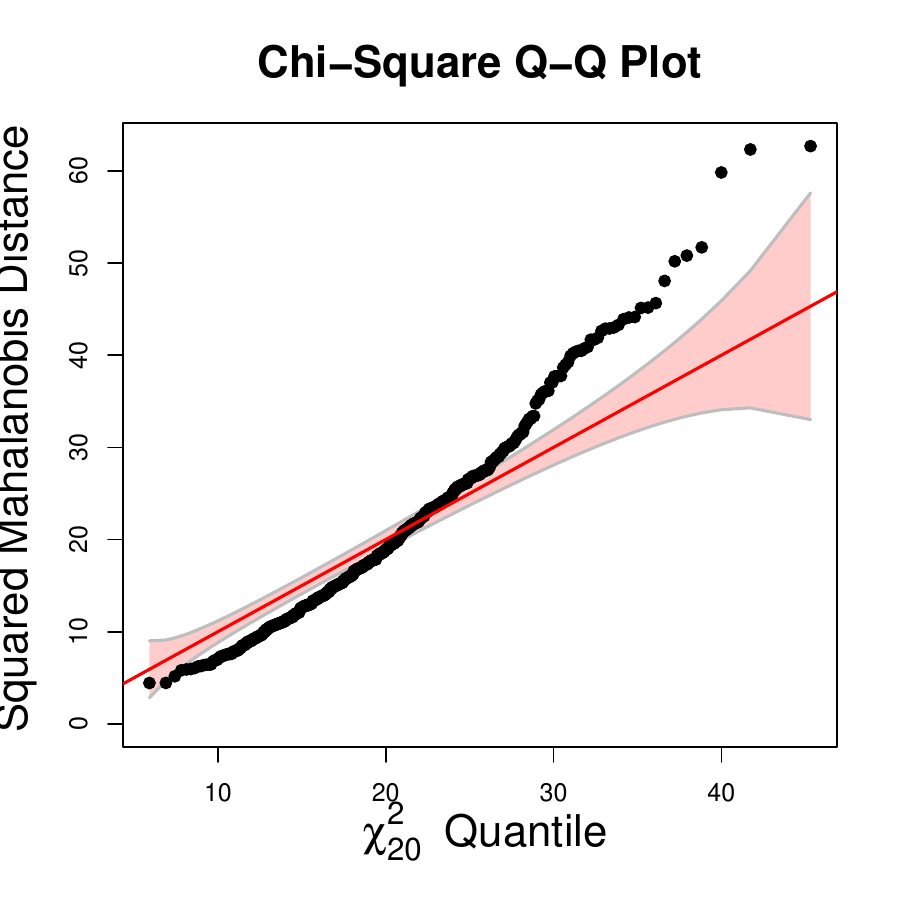}
					\caption{Logistic: Setting B, $n=5,000$}
				\end{subfigure}
				\begin{subfigure}[b]{0.48\textwidth}  
					\includegraphics[width=\textwidth]{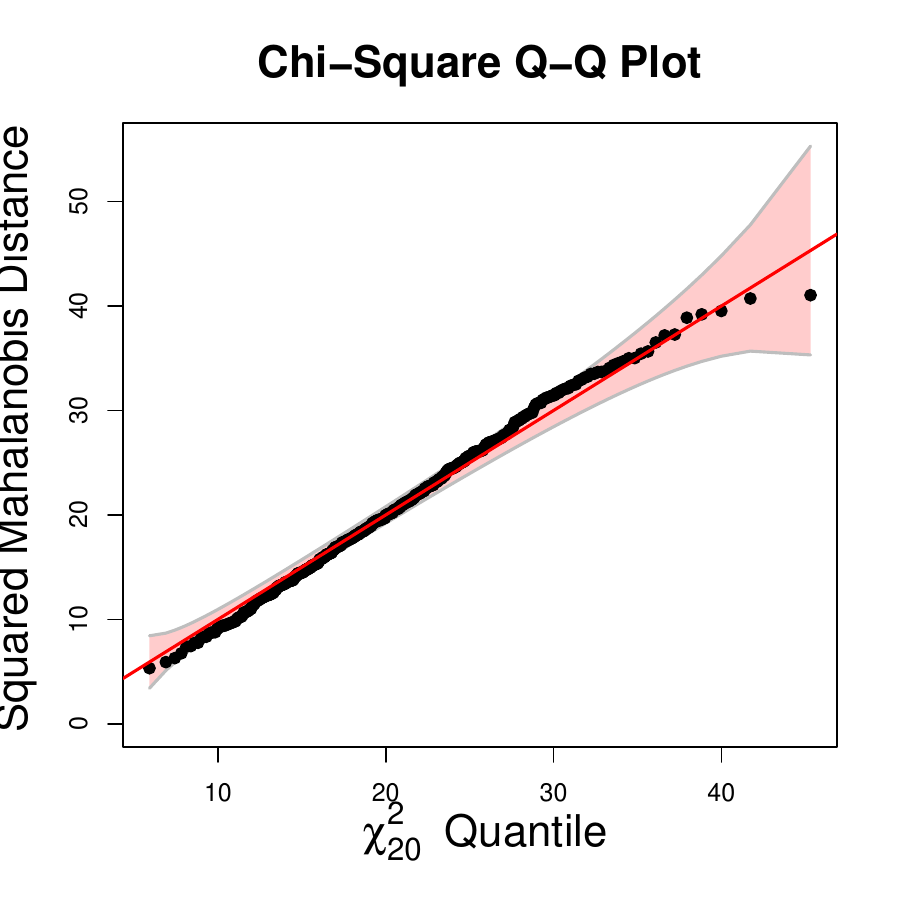}
					\caption{Logistic: Setting B, $n=20,000$}
				\end{subfigure}
				
				\begin{subfigure}[b]{0.48\textwidth}  
					\includegraphics[width=\textwidth]{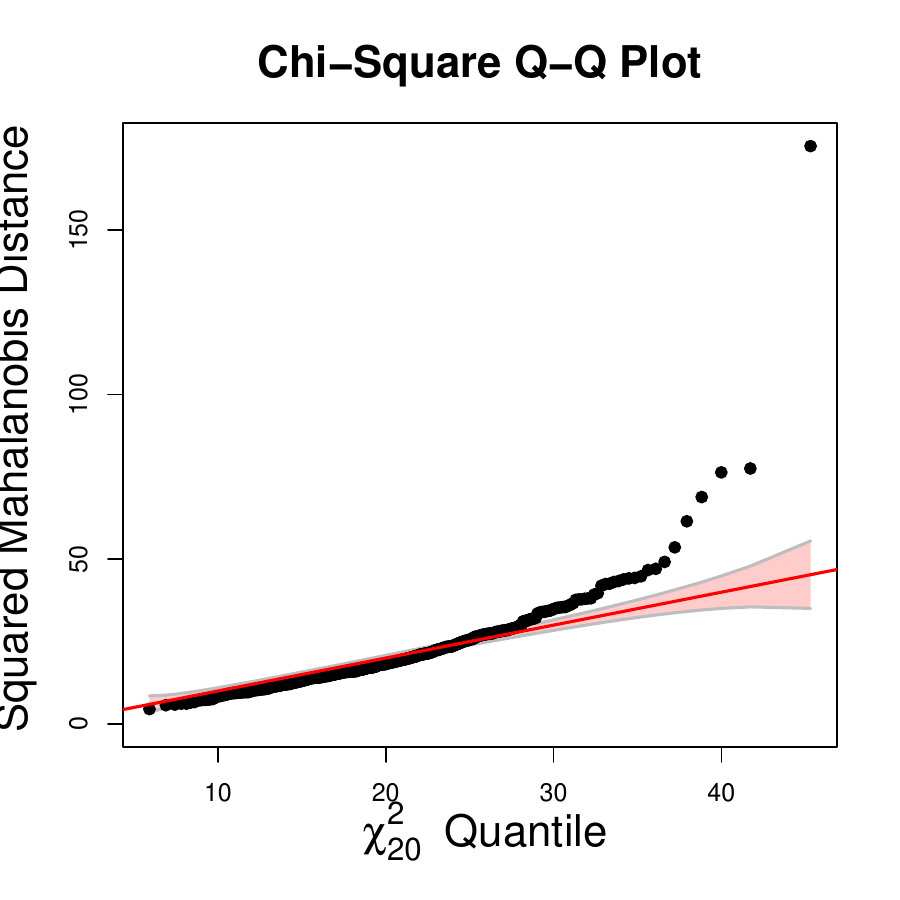}
					\caption{Poisson: Setting B, $n=5,000$}
				\end{subfigure}
				\begin{subfigure}[b]{0.48\textwidth}  
					\includegraphics[width=\textwidth]{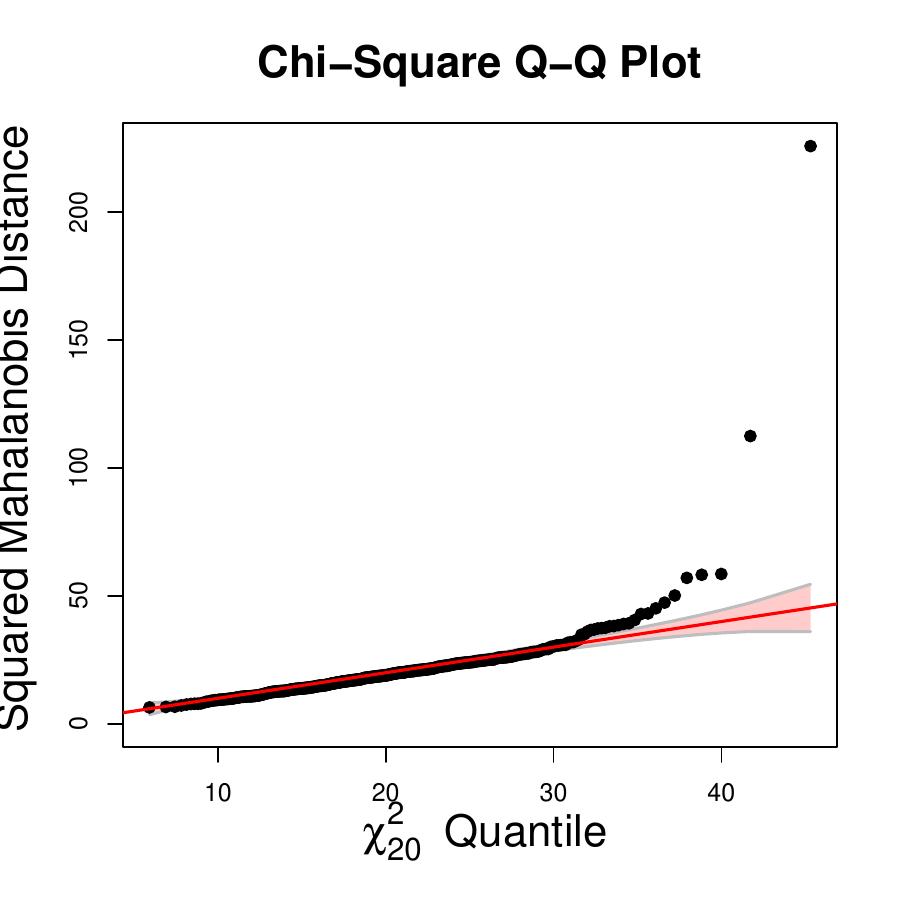}
					\caption{Poisson: Setting B, $n=20,000$}
				\end{subfigure}
			\end{minipage}
		}
		
		\caption{\footnotesize Chi-squared Q-Q plots assessing the asymptotic normality of the ``estimating equation's bias," as defined in Assumption~\ref{as:biasNorm}. 		\label{fig:CQplots}}
	\end{figure}

	\FloatBarrier
	
	\section{Additional Real Data Tables}\label{appen:real}

	\begin{table}[ht]
		\spacingset{1}
		\centering
		\begin{tabular}{lccccc}
			\hline
			$n=500,000$ & Level              & Orig: No death         & Orig: Death            & Synth: No death         & Synth: Death           \\
			\hline
			Sex           & Female & 232,872 (48.9)        & 10,399 (44.5)        & 257,016 (53.9)        & 13,186 (57.8)       \\
			& Male   & 243,771 (51.1)        & 12,958 (55.5)        & 220,162 (46.1)        & 9,636  (42.2)       \\
			Maternal age  &        & 28.53 (5.85)          & 27.85 (6.29)         & 26.54 (5.42)          & 26.29 (5.46)        \\
			BMI           &        & 26.20 (7.41)          & 27.62 (8.25)         & 26.60 (7.00)          & 27.88 (7.49)        \\
			Diabetes      & No     & 445,297 (93.4)        & 21,950 (94.0)        & 453,284 (95.0)        & 21,433 (93.9)       \\
			& Yes    & 31,346  (6.6)         & 1,407   (6.0)        & 23,894  (5.0)         & 1,389   (6.1)       \\
			Chronic HTN   & No     & 468,861 (98.4)        & 22,580 (96.7)        & 465,609 (97.6)        & 21,848 (95.7)       \\
			& Yes    & 7,782   (1.6)         & 777    (3.3)         & 11,569  (2.4)         & 974     (4.3)       \\
			Preg HTN      & No     & 450,037 (94.4)        & 21,892 (93.7)        & 458,385 (96.1)        & 21,096 (92.4)       \\
			& Yes    & 26,606   (5.6)        & 1,465   (6.3)        & 18,793   (3.9)        & 1,726   (7.6)       \\
			Eclampsia     & No     & 475,462 (99.8)        & 23,212 (99.4)        & 475,652 (99.7)        & 22,581 (98.9)       \\
			& Yes    & 1,181    (0.2)        & 145     (0.6)        & 1,526    (0.3)        & 241     (1.1)       \\
			Smoking pre   & No     & 428,683 (89.9)        & 19,577 (83.8)        & 437,311 (91.6)        & 19,639 (86.1)       \\
			& Yes    & 47,960  (10.1)        & 3,780  (16.2)        & 39,867   (8.4)        & 3,183   (13.9)      \\
			Smoking preg  & No     & 439,771 (92.3)        & 20,118 (86.1)        & 441,689 (92.6)        & 19,868 (87.1)       \\
			& Yes    & 36,872   (7.7)        & 3,239  (13.9)        & 35,489   (7.4)        & 2,954   (12.9)      \\
			Hospital birth& No     & 7,433   (1.6)         & 389    (1.7)         & 10,049   (2.1)        & 582     (2.6)       \\
			& Yes    & 469,210 (98.4)        & 22,968 (98.3)        & 467,129 (97.9)        & 22,240 (97.4)       \\
			Birth attendant& Doctor & 428,125 (89.8)        & 22,029 (94.3)        & 405,840 (85.0)        & 19,589 (85.9)       \\
			& Midwife & 44,655   (9.4)        & 855    (3.7)         & 68,540  (14.4)        & 2,750   (12.0)      \\
			& Other   & 3,863    (0.8)        & 473    (2.0)         & 2,798    (0.6)        & 483     (2.1)       \\
			Labor induction& No    & 362,992 (76.2)        & 19,975 (85.5)        & 385,486 (80.8)        & 19,675 (86.2)       \\
			& Yes    & 113,651 (23.8)        & 3,382  (14.5)        & 91,692   (19.2)       & 3,147   (13.8)      \\
			Forceps       & No     & 474,013 (99.4)        & 23,272 (99.6)        & 473,053 (99.1)        & 22,518 (98.7)       \\
			& Yes    & 2,630    (0.6)        & 85     (0.4)         & 4,125    (0.9)        & 304     (1.3)       \\
			\hline
		\end{tabular}
		\caption{Baseline characteristics by data source and one-year mortality (Part 1). Continuous variables are reported as mean (SD); categorical variables are counts (\%).}
		\label{tab:realTable1Part1}
	\end{table}
	
	\begin{table}[ht]
		\spacingset{1}
		\centering
		\begin{tabular}{lccccc}
			\hline
			$n=500,000$ & Level              & Orig: No death         & Orig: Death            & Synth: No death         & Synth: Death           \\
			\hline
			Vacuum         & No     & 464,252 (97.4)        & 23,138 (99.1)        & 462,965 (97.0)        & 22,354 (97.9)       \\
			& Yes    & 12,391  (2.6)         & 219    (0.9)         & 14,213  (3.0)         & 468     (2.1)       \\
			Breech         & No     & 451,732 (94.8)        & 17,948 (76.8)        & 455,045 (95.4)        & 18,788 (82.3)       \\
			& Yes    & 24,911  (5.2)         & 5,409  (23.2)        & 22,133  (4.6)         & 4,034  (17.7)       \\
			Birth order    &        & 2.51 (1.58)           & 2.79 (1.83)          & 2.42 (1.48)           & 2.56 (1.59)         \\
			Plurality      & Singleton & 460,311 (96.6)     & 20,003 (85.6)        & 460,924 (96.6)        & 19,429 (85.1)       \\
			& Twins     & 15,867  (3.3)       & 3,137  (13.4)        & 15,535  (3.3)         & 3,224  (14.1)       \\
			& 3+        & 465     (0.1)       & 217    (1.0)         & 719     (0.1)         & 169    (0.8)        \\
			Maternal race  & White & 356,270 (74.7)        & 14,556 (62.3)        & 362,617 (76.0)        & 16,307 (71.5)       \\
			& Black & 72,861  (15.3)        & 6,760  (28.9)        & 62,116  (13.0)        & 4,645  (20.4)       \\
			& Other & 47,512  (10.0)        & 2,041   (8.8)        & 52,445  (11.0)        & 1,870   (8.1)       \\
			Paternal race  & White & 346,451 (72.7)        & 13,913 (59.6)        & 358,066 (75.0)        & 15,713 (68.9)       \\
			& Black & 85,435  (17.9)        & 7,554  (32.3)        & 66,121  (13.9)        & 5,103  (22.4)       \\
			& Other & 44,757   (9.4)        & 1,890   (8.1)        & 52,991  (11.1)        & 2,006   (8.7)       \\
			Maternal edu   & $<$HS & 69,117  (14.5)        & 4,604  (19.7)        & 85,523  (17.9)        & 5,107  (22.4)       \\
			& HS    & 118,380 (24.8)        & 7,420  (31.8)        & 115,343 (24.2)        & 7,099  (31.1)       \\
			& Higher& 289,146 (60.7)        & 11,333 (48.5)        & 276,312 (57.9)        & 10,616 (46.6)       \\
			Paternal edu   & $<$HS & 74,665  (15.7)        & 4,704  (20.1)        & 81,462  (17.1)        & 4,726  (20.7)       \\
			& HS    & 144,247 (30.3)        & 8,744  (37.4)        & 142,994 (30.0)        & 8,939  (39.2)       \\
			& Higher& 257,731 (54.0)        & 9,909  (42.5)        & 252,722 (52.9)        & 9,157  (40.1)       \\
			Married        & No    & 191,419 (40.2)        & 12,769 (54.7)        & 145,815 (30.6)        & 9,139  (40.0)       \\
			& Yes   & 285,224 (59.8)        & 10,588 (45.3)        & 331,363 (69.4)        & 13,683 (60.0)       \\
			WIC            & No    & 279,454 (58.6)        & 13,201 (56.5)        & 256,751 (53.8)        & 13,485 (59.1)       \\
			& Yes   & 197,189 (41.4)        & 10,156 (43.5)        & 220,427 (46.2)        & 9,337  (40.9)       \\
			\hline
		\end{tabular}
		\caption{Baseline characteristics by data source and one-year mortality (Part 2). Continuous variables are reported as mean (SD); categorical variables are counts (\%).}
		\label{tab:realTable1Part2}
	\end{table}

	\begin{table}[ht]
		\spacingset{1}
		\centering
		\begin{tabular}{lcccc}
			\hline
			$n=500{,}000$ & Original & Syn-novel (500k) &  Syn-novel (50k) &  Syn-naive \\
			\hline
			Intercept & -2.641 (0.080) & -2.641 (0.077) & -2.657 (0.085) & -2.689 (0.066) \\
			Hospital birth & -0.326 (0.066) & -0.304 (0.061) & -0.279 (0.070) & -0.220 (0.045) \\
			Birth attendant: Midwife & -0.766 (0.037) & -0.666 (0.031) & -0.661 (0.034) & -0.161 (0.021) \\
			Birth attendant: Other & 0.726 (0.060) & 0.763 (0.055) & 0.807 (0.065) & 1.074 (0.053) \\
			Maternal age & -0.016 (0.001) & -0.019 (0.001) & -0.019 (0.002) & -0.005 (0.002) \\
			Married & -0.321 (0.017) & -0.292 (0.016) & -0.294 (0.018) & -0.201 (0.017) \\
			Birth order & 0.054 (0.005) & 0.062 (0.005) & 0.061 (0.005) & -0.001 (0.005) \\
			WIC & -0.402 (0.016) & -0.440 (0.017) & -0.453 (0.019) & -0.678 (0.016) \\
			BMI & 0.016 (0.001) & 0.017 (0.001) & 0.017 (0.001) & 0.015 (0.001) \\
			Sex: Male & 0.190 (0.014) & 0.203 (0.014) & 0.208 (0.015) & -0.119 (0.014) \\
			Diabetes & -0.179 (0.030) & -0.216 (0.031) & -0.230 (0.035) & 0.145 (0.030) \\
			Chronic HTN & 0.429 (0.042) & 0.478 (0.044) & 0.486 (0.048) & 0.313 (0.036) \\
			Preg HTN & 0.023 (0.029) & -0.046 (0.030) & -0.031 (0.035) & 0.306 (0.028) \\
			Eclampsia & 0.666 (0.094) & 0.650 (0.100) & 0.607 (0.112) & 0.844 (0.074) \\
			Forceps & -0.190 (0.112) & -0.321 (0.104) & -0.309 (0.116) & 0.290 (0.062) \\
			Vacuum & -0.756 (0.069) & -0.825 (0.068) & -0.808 (0.072) & -0.258 (0.049) \\
			Labor induction & -0.430 (0.020) & -0.530 (0.021) & -0.532 (0.022) & -0.406 (0.020) \\
			Breech & 1.406 (0.018) & 1.386 (0.018) & 1.408 (0.021) & 1.203 (0.020) \\
			Smoking pre & 0.032 (0.042) & 0.055 (0.039) & 0.043 (0.042) & 0.279 (0.032) \\
			Smoking preg & 0.477 (0.045) & 0.475 (0.042) & 0.486 (0.046) & 0.309 (0.033) \\
			Maternal edu: HS & -0.047 (0.022) & -0.051 (0.023) & -0.043 (0.024) & 0.006 (0.019) \\
			Maternal edu: Higher & -0.284 (0.024) & -0.286 (0.025) & -0.285 (0.027) & -0.300 (0.020) \\
			Paternal edu: HS & -0.031 (0.021) & -0.015 (0.022) & -0.017 (0.023) & 0.161 (0.019) \\
			Paternal edu: Higher & -0.220 (0.024) & -0.207 (0.025) & -0.210 (0.026) & -0.317 (0.021) \\
			Maternal race: Black & 0.429 (0.031) & 0.460 (0.030) & 0.464 (0.032) & 0.170 (0.028) \\
			Maternal race: Other & 0.113 (0.033) & 0.149 (0.033) & 0.142 (0.035) & -0.114 (0.032) \\
			Paternal race: Black & 0.275 (0.030) & 0.238 (0.029) & 0.237 (0.031) & 0.346 (0.027) \\
			Paternal race: Other & 0.075 (0.034) & 0.065 (0.034) & 0.076 (0.036) & 0.015 (0.031) \\
			Plurality: Twins & 1.043 (0.023) & 1.065 (0.024) & 1.044 (0.028) & 1.164 (0.023) \\
			Plurality: 3+ & 2.016 (0.090) & 2.160 (0.100) & 2.049 (0.136) & 1.154 (0.093) \\
			\hline
		\end{tabular}
		\caption{Logistic regression results for one-year infant mortality in the 2015 U.S. cohort. Estimates are shown for the MLE on the original data, Syn-novel applied to synthetic data with $m = n = 500{,}000$ and $m = 50{,}000$, and the Syn-naive fitted directly to synthetic data. Estimated SEs are reported in parentheses. \label{tab:realAnalysis}}
	\end{table}

\end{document}